\definecolor{myurlcolor}{rgb}{0,0,0.7}
\definecolor{myrefcolor}{rgb}{0.8,0,0}
\newtheorem{theorem}{Theorem}[section]
\newtheorem{lemma}[theorem]{Lemma}
\newtheorem{remark}[theorem]{Remark}
\newtheorem{corollary}[theorem]{Corollary}
\newtheorem{definition}[theorem]{Definition}
\newtheorem{fact}[theorem]{Fact}
\DeclareMathOperator{\E}{\mathbb{E}}
 \def\blk{\color{black}}
 \newcommand{\la}{\langle}
 \newcommand{\ra}{\rangle}
 \newcommand{\ONE}{\mathbbm{1}}
 \newcommand{\swap}{\mathbbm{S}}
 \def\<{\langle}
 \def\>{\rangle}
\newcommand{\coww}{\textnormal{\textbf{CO}}_0}
\newcommand{\mcoww}{\textnormal{\textbf{MCO}}_0}
\begin{document}

\title{
Thermal Operations in general are not memoryless
}

\author{Hanna Wojewódka-\'Sciążko \orcidlink{0000-0002-8248-5018}}
\affiliation{Institute of Mathematics, University of Silesia in Katowice, Bankowa 14, 40-007 Katowice, Poland}

\author{Maciej Stankiewicz \orcidlink{0000-0001-8288-3860}}
\affiliation{Institute of Informatics, Faculty of Mathematics, Physics and Informatics, University of Gdańsk, Wita Stwosza 57, 80-308 Gdańsk, Poland}

\author{Edgar A. Aguilar \orcidlink{0000-0002-1177-9246}}
\affiliation{Institute of Theoretical Physics and Astrophysics, National Quantum Information Centre, Faculty of Mathematics, Physics and Informatics, University of Gdańsk, Wita Stwosza 57, 80-308 Gdańsk, Poland}
\affiliation{Institute of Quantum Optics and Quantum Information, Austrian Academy of Sciences, Boltzmanngasse 3, 1090 Vienna, Austria}

\author{Christopher Perry~\orcidlink{0000-0001-5257-1209}}
\affiliation{QMATH, Department of Mathematical Sciences, University of Copenhagen, Universitetsparken 5, 2100 Copenhagen, Denmark}

\author{Piotr \'Cwikli\'nski \orcidlink{0000-0002-9315-2876}}
\affiliation{Institute of Theoretical Physics and Astrophysics, National Quantum Information Centre, Faculty of Mathematics, Physics and Informatics, University of Gdańsk, Wita Stwosza 57, 80-308 Gdańsk, Poland}

\author{Andrzej Grudka \orcidlink{0000-0002-8390-1458}}
\affiliation{Institute of Spintronics and Quantum Information, Faculty of Physics, Adam Mickiewicz University, 61-614 Poznań, Poland}

\author{Christopher Jarzynski \orcidlink{0000-0002-3464-2920}}
\affiliation{Department of Chemistry and Biochemistry, University of Maryland, College Park, MD 20742, USA}
\affiliation{Institute for Physical Science and Technology, University of Maryland, College Park, MD 20742, USA}
\affiliation{Department of Physics, University of Maryland, College Park, MD 20742, USA}

\author{Karol Horodecki \orcidlink{0000-0001-7540-4147}}
\affiliation{Institute of Informatics, National Quantum Information Centre, Faculty of	Mathematics, Physics and Informatics, University of Gdańsk, Wita Stwosza 57, 80-308 Gdańsk, Poland}

\author{Michał Horodecki \orcidlink{0000-0002-0446-3059}}
\affiliation{Institute of Theoretical Physics and Astrophysics, National Quantum Information Centre, Faculty of Mathematics, Physics and Informatics, University of Gdańsk, Wita Stwosza 57, 80-308 Gdańsk, Poland}
\affiliation{International Centre for Theory of Quantum Technologies, University of Gdańsk, prof.\ Marii Janion 4, 80-309 Gdańsk, Poland}

\date{\today}

\begin{abstract}
	So-called Thermal Operations seem to describe the most fundamental and reasonable set of operations allowable for state transformations at an ambient inverse temperature $\beta$. 
	However, a priori, they require experimentalists to manipulate very complex environments and have control over their internal degrees of freedom.
	For this reason, the community has been working on creating more experimentally-friendly operations. 
	In [Perry, et al., Phys.\ Rev.\ X 8, 041049], it was shown that for states diagonal in the energy basis, Thermal Operations can be performed by so-called Coarse Operations, which need just one auxiliary qubit but are otherwise Markovian and classical in spirit. 
	In this work, by providing an explicit counterexample, we show that this one qubit of memory is necessary. 
	We also provide almost  full characterization (i.e., full, but with one intriguing exception that remains open) of possible transitions that do not require memory in the case where the system is a qubit.
	We do this by analyzing arbitrary control sequences comprising level energy changes and partial thermalizations in each step. 
	
	
\end{abstract}

\maketitle


\section{Introduction}
		
Due to the rapid growth of technology, the field of quantum thermodynamics is booming.
The advent of nanotechnology has a plethora of questions about quantum and discrete-size effects in thermodynamics. 
Already in the past couple of years, we can observe experimental progress in studying thermodynamics at the quantum level \cite{Rossnagel2016, Cottet2017, Klatzow2018, Goldwater2019,Aamir_2025,uusnakki2026autonomousquantumheatengine}. 
One of the fundamental questions in this field is which states can be converted into one another via a suitable set of operations. 
Of course, defining a suitable set of operations is no easy task, and much work has been devoted to addressing this issue. 
Of particular interest is the set of Thermal Operations (\textbf{TO}) \cite{Janzing00, Ruch76, Streater95, HO2013} 
which have been recently widely used within a~resource-theoretic approach to quantum thermodynamics \cite{Goold_2016, Vitagliano2019}. 
In a nutshell, Thermal Operations consist of appending a heat bath to the system, applying the energy-conserving unitary transformation, and then discarding the heat bath.  
		
A general criterion is not known which would allow to decide for any two pairs of states, whether transition between them is possible by Thermal Operations \cite{Cwiklinski2015, Korzekwa-TO-coh-2015}. 
Yet for states diagonal in the energy basis, the transformation $\rho\xrightarrow{\textbf{TO}} \sigma$ is allowed iff $\rho$ ``thermomajorizes'' $\sigma$, i.e., $\rho \succ_T \sigma$ \cite{HO2013, RuchSS78}. 
This ``thermomajorization'' condition can also be used to calculate how much deterministic work can be extracted from a state by appending a 2-level battery (work bit) and maximizing over the energy gap of the battery.
The problem with Thermal Operations, however, is that, as used in \cite{HO2013}, they are intractable for experiments. 
The energy-conserving unitary $U$ would have to be very delicately made and couple the system to the bath in extremely precise ways. 
Furthermore, the Hamiltonian of the environment itself is also tailor-made with the correct exponentially growing degeneracies. 
Hence, a search for more experimentally friendly operations was performed in \cite{Perry_etal}, in the spirit of \cite{Anders2013, Alicki79}, and the set of the so-called Coarse Operations was proposed. 

More recently, related Markovian approaches to thermodynamic state transformations
have been developed under the name of Markovian thermal processes. In particular,
continuous thermomajorization gives necessary and sufficient conditions for
population dynamics generated by Markovian master equations and shows that
elementary thermalizations form a universal set of controls
\cite{LostaglioKorzekwa2022_continuous_thermomajorization}. This was further
used to optimize thermalization protocols in tasks such as cooling, work extraction
and catalysis \cite{KorzekwaLostaglio2022_optimizing_thermalization}.

Unlike in Thermal Operations, where there is access to an arbitrary heat bath, with Coarse Operations, it is only permissible to append a two-level auxiliary system in thermal equilibrium.  
Moreover, contact with the reservoir is allowed in the form of partial thermalizations, and one can change the energy levels of the total system (consisting of the original system and the auxiliary one).
Such operations are experimentally friendly, as the fine-grained control is needed only for the system of interest and the auxiliary qubit system, while the contact with a large heat bath is Markovian, and it does not require any fine-grained control over heat bath degrees of freedom. 
Therefore Coarse Operations are much more in the spirit of traditional thermodynamics.

Since Coarse Operations allow for raising and lowering energy levels, putting work into a system is also an option so that pretty much any transformation is possible. 
It only needs to respect superselection rules (i.e., coherence cannot be created between different energies).
One thus considers a subclass of Coarse Operations, denoted here by $\coww$, that does not use work, i.e., work can be borrowed, but in the end, should be returned with arbitrary accuracy. 
Also, Coarse Operations are almost Markovian -- the auxiliary qubit is the only memory here.

The main result of \cite{Perry_etal} was to show that \textbf{TO} $\subset$ $\coww$. 
I.e., if Thermal Operations allow for a state transformation $\rho\xrightarrow{\textbf{TO}} \sigma$, then the same transformation is possible under Coarse Operations without spending work: $\rho\xrightarrow{\coww} \sigma$.
		
In this article, we investigate whether auxiliary systems are necessary for $\coww$  to simulate Thermal Operations.
The purpose is to find a minimal set of experimentally friendly operations that can still produce the same transformations. This question is also closely related to recent work on memory-assisted
Markovian thermal processes, where ancillary thermal systems are used to promote
memoryless Markovian thermal dynamics to effectively non-Markovian dynamics
\cite{CzartowskiOliveiraKorzekwa2023_thermal_recall}. In contrast to that
general memory-assisted framework, our aim is to identify a concrete obstruction
already for a qubit system controlled by Coarse Operations without an auxiliary
memory.


In the paper, we show that auxiliary systems are indeed necessary for Coarse Operations  that do not spend work to reproduce Thermal Operations.
To this end, we \textit{characterize almost all transitions between diagonal qubit states that are possible by means of Coarse Operations without spending work and without using auxiliary systems.} 
We leave open only the following situation: when either the initial state or the final  state is the pure ground state. 
Apart from this exceptional case, the characterization is quite simple: the only transformations that can be implemented by Coarse Operations without auxiliary systems and without work expenditure are those that merely admix a Gibbs state, together with those for which the initial state is the pure excited state. 
Thus if a  state $\rho$, is less excited than the Gibbs state (and is not pure ground one), it cannot be transformed into a state that is more excited than the Gibbs state (unless it is pure excited state). 
Hence, for any such pair, any transition achievable by Thermal Operations constitutes an example of a transition that can be implemented by Coarse Operations, but cannot be implemented by Coarse Operations without memory.

Note, that Thermal Operations were shown to be achievable without memory \cite{KorzekwaL2020-TO-advantage}; however, their implementation did not proceed via Coarse Operations. Moreover the system–reservoir interaction allowed in \cite{KorzekwaL2020-TO-advantage} cannot be described within the standard Markovian dynamics in the weak-coupling regime. Specifically, the latter dynamics do not mix diagonal and off-diagonal elements of the density matrix -- a property that holds within Coarse Operations but is 
not obeyed in scenario of \cite{KorzekwaL2020-TO-advantage}.

\section{Results}\label{sec:main_results}

Coarse Operations, as defined in \cite{Perry_etal}, consist of the following basic bricks: 
	\label{def:COprim}
	\begin{enumerate}[align=left]
		\item [\textbf{CO1:}] A single 2-level system at thermal equilibrium may be appended as an auxiliary system.
		\item [\textbf{CO2:}] Partial thermalizations may be performed on a~chosen subset of energy levels.
		\item [\textbf{CO$\,'$3:}] Energy levels may be shifted up or down as long as the work cost is accounted for.
		\item [\textbf{CO4:}] The auxiliary system may be discarded.
		\item [\textbf{CO5:}] Any energy-conserving unitary operation $U$, such that $[U,H_S]=0$, may be applied.
	\end{enumerate}

There is one subtle issue about the above definition, which we have to discuss here.
Namely, the unitary transformation from item \textbf{CO5} may create coherences. 
When raising or lowering an energy level, it is possible to make two levels have equal energy. 
At this moment, any unitary that acts on these two levels will commute with the system Hamiltonian, $H_S$, and may generate coherences.

The problem is that shifting levels of a state with coherences does not lead to a well-defined random variable of work (see \cite{Acin-no-go-coherences-work}) which is needed if we want to account not only for average work but also for work fluctuations.
The latter is crucial for our results. 
Indeed, the phrase ``without spending work'' in \cite{Perry_etal}, and also in this paper, means that one can spend only an asymptotically vanishing amount of work with vanishing probability. 

To this end, in order to treat work as a random variable, we shall modify the operation \textbf{CO$\,'$3}. 
Namely, together with raising or lowering the energy levels, we shall dephase the suitable level, i.e., remove coherences between that level and the other ones. 

The modified item will thus be of the following form:
\begin{enumerate}[align=left]
    \item [\textbf{CO3:}] Energy levels may be shifted up or down as long as the work cost is accounted for. 
    Before the shift, the shifted level is dephased. 
\end{enumerate}

We are now in a position to define Coarse Operations in a way that we will use in the paper:

\begin{definition}[Coarse Operations (\textbf{CO})]
    \leavevmode
    Coarse Operations are defined as any composition of operations  
    \textbf{CO1}, 
    \textbf{CO2},
    \textbf{CO3},
    \textbf{CO4}, and
    \textbf{CO5} 
    described above.
\end{definition}
We now consider transitions that can be implemented using Coarse Operations which do not involve an auxiliary system and do not require work. To formalize this setting, we define below the class of Markovian Coarse Operations \textbf{MCO})

\begin{definition}\label{def:MCO_0}
    We say that a diagonal state $\rho$ can be transformed into a diagonal state $\sigma$ by Markovian Coarse Operations with no work expenditure ($\mcoww$) when for every $\epsilon >0$ and $\eta > 0$ there exists an operation $\Lambda$ composed of operations from the sets \textbf{CO2}, \textbf{CO3}, and \textbf{CO5} such that:
    \begin{equation}
        \Lambda(\rho)=\sigma,
    \end{equation}
    the final Hamiltonian is the same as the original one, and 
    \begin{equation}
        P(-W > \epsilon) \leq \eta,
    \end{equation}
    where the work random variable $W$ is defined in Section \ref{sec:notation}, Eq.\ \eqref{eq:work-def}.
\end{definition}

Our class \(\mcoww\) should not be confused with the most general Markovian
thermal processes generated by time-dependent master equations. The latter have
recently been characterized by continuous thermomajorization
\cite{LostaglioKorzekwa2022_continuous_thermomajorization} and, in a controlled
quantum setting, by reachable-set methods for Markovian quantum dynamics with
thermal resources
\cite{vomEndeMalvettiDirrSchulteHerbrueggen2023_controlled_markovian}. Here we
focus on the more restrictive operational model inherited from Coarse Operations:
level transformations, partial thermalizations, and energy-preserving unitaries at
degeneracies, together with a fluctuating-work no-expenditure condition.

We can now state our main result:
\begin{theorem}
    \label{thm:main} Consider two diagonal states $\rho$ and $\sigma$ (none of them pure), and let $\tau_\beta$ denote the thermal state corresponding to the initial Hamiltonian. 
    Then $\rho$ can be transformed into $\sigma$ by means of $\mcoww$ if and only if 
$\sigma$ is of the form
\begin{equation}
    \label{eq:sigma-thm}
    \sigma = (1-\lambda)\rho + \lambda \tau_\beta
    \quad \text{for some} \quad \lambda \in [0,1],
\end{equation}
\end{theorem}

\begin{proof}
Consider first the case, 
where \textbf{CO5} 
are not used. Then Theorem~\ref{thm:main_appendix} excludes from \textbf{MCO$_0$} all transitions apart from those mentioned above. 
    In Appendix  \ref{sec:appendix-with-swaps}, we argue that adding \textbf{CO5} will not change this state of affairs (see Theorem \ref{thm:main_appendix_swaps}).  On the other hand, 
    the above transition  can be done by $\mcoww$, since is just an example of \textbf{CO2}. We thus  obtain "if and only if". 
\end{proof}
Let us now discuss the cases that are not covered by the theorem, i.e. 
when either of the states is pure.

(i) Suppose first that the initial state is pure excited. Then one can achieve arbitrary final state by $\mcoww$. 
Indeed, by partial thermalization alone one can obtain any state more excited than the Gibbs state (and the Gibbs state itself is, of course, attainable by full thermalization).  
Also, one can obtain the ground state by means of  $\mcoww$, by lowering the upper level to zero, performing the swap, and raising the empty level back to  $E_0$. 
Then from the ground state by partial thermalization, we can obtain any state that is less excited than Gibbs. Thus all states are achievable in this manner.  

(ii) When the initial state is the ground state, we believe that still only states 
of the form \eqref{eq:sigma-thm} 
are achievable by $\mcoww$, but we do not have a proof. 

(iii) If the final state 
is excited, then we know that it cannot be obtained from any initial state by \textbf{TO}. So we believe the same is true for $\coww$ and therefore also for $\mcoww$. However we do not  have a formal proof, that $\coww$ is a subset of \textbf{TO}, although we believe so.

(iv) Finally, if the final state is ground, 
 then by \textbf{TO} one can achieve it only from initial pure excited state (and trivially from pure ground state), and we again believe that the same can be stated about $\coww$.

The nearly full characterization of transformations achievable by $\mcoww$ proves, in particular, that some transitions performed by Coarse Operations that do not use work ($\coww$) cannot be obtained by their Markovian version $\mcoww$. 
For instance, Thermal Operations allow for transitions from a state less excited than the Gibbs one to a state more excited than the Gibbs state. 
However, according to the above theorem, it is not possible to perform such a~transition by $\mcoww$. 

\noindent
\noindent

\textbf{Example:} \textit{Transition achievable by \textbf{TO} but not by $\mcoww$.} 


Consider a qubit system with Hamiltonian $H = E_0\,|1\rangle\langle 1|$.
Consider a thermal operation on diagonal states given by
\begin{align}
\Lambda =
\begin{pmatrix}
1 - e^{-\beta E_0} & 1 \\
e^{-\beta E_0} & 0
\end{pmatrix},
\end{align}
(cf.\ \cite{Mazurek_2018}).
Consider an almost ground diagonal state 
$\rho_\epsilon$
\begin{align}
\rho_\epsilon =
\begin{pmatrix}
1 - \epsilon \\
\epsilon
\end{pmatrix}.
\end{align}
Then
\begin{align}
\Lambda
\begin{pmatrix}
1 - \epsilon \\
\epsilon
\end{pmatrix}
=
\begin{pmatrix}
(1 - e^{-\beta E_0})(1 - \epsilon) + \epsilon \\
e^{-\beta E_0}(1 - \epsilon)
\end{pmatrix} 
\equiv \sigma.
\end{align}
We now find $\epsilon$ such that this output state $\sigma$ is more excited than the Gibbs state.
The latter means that
\begin{align}
e^{-\beta E_0}(1 - \epsilon)
&> \frac{e^{-\beta E_0}}{1 + e^{-\beta E_0}}.
\end{align}
This is equivalent to
\begin{align}
\epsilon < \frac{e^{-\beta E_0}}{1 + e^{-\beta E_0}}.
\end{align}
Thus, if
\begin{align}
0 < \epsilon < \frac{e^{-\beta E_0}}{1 + e^{-\beta E_0}},
\end{align}
then $\rho_\epsilon$ can be transformed by \textbf{TO} (and hence by $\coww$) into $\sigma$, which is not a mixture of $p_\epsilon$ and the Gibbs state (since its excited probability is larger than the Gibbs excited probability, while that of $\rho_\epsilon$ is smaller).

Thus the transition 
\begin{align}
    \rho_\epsilon \to \sigma
\quad \text{for} \quad 0<\epsilon<\frac{e^{-\beta E_0}}{1+e^{-\beta E_0}}    
\end{align}
can be obtained by means of $\coww$ 
but not $\mcoww$.

This example should be compared with extremal thermal operations appearing in
resource-theoretic analyses of non-Markovian thermodynamic protocols. In
particular, extremal thermal operations have been used to model non-Markovian
thermalization strokes in quantum heat engines, where memory effects can improve
both average performance and work fluctuations
\cite{Ptaszynski2022_nonmarkovian_thermal_engines}. Our result gives a
complementary structural statement: even in the simplest qubit setting, such
extremal population transfers cannot, in general, be reproduced by memoryless
Coarse Operations under the fluctuating-work no-expenditure requirement.

\section{Setup and Notation}
\label{sec:notation}

\subsection{System}
\label{subsec:system}

In this paper we consider a two-dimensional quantum system whose Hamiltonian is parameterized as 
\(H(E) = \operatorname{diag}(0, E)\) with \(E \ge 0\), 
which can be assumed without loss of generality.  
The initial (and final) Hamiltonian is given by $H(E_0)$. 
Both the initial state $\rho = (1-p_{\mathrm{IN}},p_{\mathrm{IN}})$ and the final state $\sigma = (1-p_{\mathrm{OUT}},p_{\mathrm{OUT}})$ are assumed to be diagonal in the Hamiltonian basis (i.e., there are no coherence terms). 
Likewise, we consider only positive (inverse) ambient temperatures $\beta$, which are fixed (and finite). 
For any given energy $E$, the partition function $Z_E$ is given by $Z_E = 1 + e^{-\beta E}$. 
The thermal state (also referred to as the equilibrium or the Gibbs state) is 
\begin{equation}
    \tau_E = (1-g(E),g(E))
\end{equation}
with 
\begin{equation}
    g(E) = \frac{1}{Z_E} e^{-\beta E}.    
\end{equation}
In the notation, we often do not make the inverse temperature $\beta$ explicit since it is constant. 
For any given probability $p\in(0,1/2]$, we shall define energy $E(p)$ such that the state $(1-p,p)$ is the Gibbs state with a Hamiltonian $H(E(p))$, i.e.,
\begin{equation}
    E(p) = -\frac{1}{\beta} \ln\left( \frac{p}{1-p} \right).
\end{equation}
Importantly, we then have that 
\begin{equation}
    \label{eq:gep}
    g(E(p))=p\quad\text{for any}\quad p\in(0,1/2].
\end{equation}

\subsection{Operations}
\label{subsec:operations}

An arbitrary Coarse Operation with an auxiliary qubit is a sequence of three types of operations. The operation \textbf{CO2}, which we will call a Partial Thermalization (PT), consists of admixing the Gibbs state with some probability.
Formally, we view a partial thermalization as a map that thermalizes the current state $\rho$ with a Hamiltonian $H(E)$ with probability $\lambda$, that is,
\begin{equation}\label{eq:partial-thermalization-gamma}
	\rho \mapsto (1-\lambda)\rho + \lambda \tau_E.
\end{equation}
The transformation keeps the Hamiltonian of the system intact. 

Operation \textbf{CO3}, called a Level Transformation (LT), shifts levels of the Hamiltonian of our two-level system. 
Without loss of generality, one can shift just the upper level. 
An energy increment in a single step will be denoted by $\Delta E_i$, i.e., $H(E_{i+1}) = H(E_i + \Delta E_i)$, and it may be different in each step. 

Finally, consider \textbf{CO5}.
Note that without loss of generality, we can dephase the qubit after \textbf{CO5}. 
One easily finds that the effect of such a transformation is the following: 
\begin{equation}
    \rho \mapsto (1-\zeta) \rho + \zeta \sigma_x \rho \sigma_x,
\end{equation}
where $\sigma_x $ is the Pauli-$X$ matrix. 
Thus with probability $\zeta$, a bit flip is performed on the qubit. 
For this reason, we will call it a Bistochastic Transformation (BT).

Further, note that it acts in a nontrivial way (i.e., with $\zeta\not=0$) only when the Hamiltonian of the system is trivial, i.e., the energy of the upper level is equal to the energy of the lower level. 

Now, let us note that the composition of several steps of a given type results in a single step of this type. 
E.g., several partial thermalizations applied one after another is a single partial thermalization. 
Therefore the most general Coarse Operation without access to an auxiliary qubit is the sequence of the following form: 
$\text{PT}_1\rightarrow\text{LT}_1
\rightarrow\text{BT}_1
\rightarrow\text{PT}_2
\rightarrow\text{LT}_2\rightarrow \text{BT}_2 \rightarrow\dots\rightarrow \text{PT}_N\rightarrow\text{LT}_N \rightarrow\text{BT}_N$.

Let us emphasize that we are interested in transitions between systems with the same Hamiltonian. Therefore, we consider only such sequences of operations for which, after the last step, the Hamiltonian of the system coincides with the initial one (in particular, any level transformation performed during the process must be undone at the end).


Later it will be  convenient to think of a partial thermalization in each step $i$ as the map 
\begin{equation}
    \mathcal{C}_i = (1-\lambda_i) \ONE + \lambda_i \hat{\tau}_i,
\end{equation}
where $\ONE$ is the identity map, and $\hat{\tau}_i$ is a constant map that outputs a thermal state of the Hamiltonian $H(E_i)$.  
 
Similarly, we represent the bistochastic operation as a~mixture of two maps: 
\begin{equation}
	\mathcal{B}_i = (1-\zeta_i) \ONE + \zeta_i \swap,
\end{equation}
where $\swap$ is a map performing a~bit flip.


\subsection{Work}

Work is a random variable which, by convention, takes positive values whenever it is performed by the system, i.e.\ when the system's energy decreases ($\Delta E_i\coloneqq E_{i+1}-E_i < 0$). 
From the perspective of a user of the system, this corresponds to a gain of work.

Conversely, work is negative when the system's energy increases, i.e.\ when $\Delta E_i > 0$; in this case, we say that the system acquires work, and thus the user simultaneously experiences work expenditure.

More precisely, for each level transformation we define work as a random variable with the following distribution:
\begin{equation}
	P\left(W_i = -\Delta E_i\right) = p_i,
	\quad
	P\left(W_i = 0\right) = 1 - p_i,
\end{equation}
where $p_i \in [0,1]$ denotes the probability that the excited level is occupied, and $(1 - p_i)$ is the probability that the ground level is occupied.  
The total work performed in an $N$-step process is given by the sum of the work contributions from each step:
\begin{equation}
    \label{eq:work-def}
	W = \sum_{i=1}^N W_i .
\end{equation}
For convenience, let us additionally introduce the work-loss random variable, namely $-W$, which is, of course, fully determined by $W$ itself.



\section{Outline of the paper}

We will approach the problem in several steps, gradually increasing the complexity of the arguments.

Our main goal is to characterize transitions that necessarily require work, in the sense that, with probability greater than some positive constant $\eta >0$, a strictly positive and non-vanishing work loss $-W > \epsilon > 0$ must be incurred.

In Section~\ref{sec:av}, we will depict a situation in which we instead care just about the average work.  Namely,
we analyze a simple, specific energy-change cycle which shows that a transition from an initial state to a final state can be performed without ancillary system with zero average work cost  if the free energy of the initial state is larger than that of the final state.
We do it for states with probability of excited state smaller than $1/2$. 
Most likely, we have also "only if" because,
if initial free energy is larger than the final one, then even using Thermal Operations one cannot perform the transition. And we believe that $\textbf{CO}_0$ 
are equal to Thermal Operations. 

Subsequently, in Section~\ref{sec:worm-up}, we demonstrate that for the same cycle a fixed amount of work is necessarily expended with some fixed positive probability.

In Section~\ref{sec:arbitrary-cycle}, we outline the strategy for proving the same but in full generality (that is, for an arbitrary energy-change cycle). 
In particular, in Section~\ref{sec:tools}, 
we introduce the main tools used throughout the paper, including paths, Gibbs curves, and related notions. 
Then in Section~\ref{subsec:preparations}
we sketch the proof for the restricted case in which \textbf{CO5} operations are excluded.  

Finally, in Section~\ref{subsec:co5}, we extend these considerations to the fully general setting, allowing for \textbf{CO5} operations.

\section{Average work extraction}
\label{sec:av}
In this section, we will show that if we care only about the average work, then transition
$\rho \to \sigma$ can be performed without the expenditure of work provided that the free energy does not increase.
We present case where both states have probability of exicted state not greater than $1/2$.
Thus, not only can we implement transitions that are possible via \textbf{TO}, but we can even realize transitions that cannot be achieved by \textbf{TO}, as long as the free energy does not increase.

More precisely, we will see that one can always go from $\rho$ to $\sigma$ using, on average, an amount of work given by the free energy difference:
\begin{equation}
	\label{eq:avgwork}
	\langle W \rangle = F(\rho,E_0) - F(\sigma,E_0),
\end{equation}
where
$F(\Psi,E) = \langle E \rangle_{\Psi} - \beta^{-1} S(\Psi)$
denotes the Helmholtz free energy of the state $\Psi$.


We now formulate a protocol that has been constructed so as to achieve an optimal average work extraction (that is, equal exactly to the free energy difference, as in Eq.~\eqref{eq:avgwork}).
The protocol consists of three stages. 
In stages~(I) and~(III) we perform only level transformations (whence stage~(I) ends exactly at the first thermalization and stage~(III) begins immediately after the last thermalization), while the intermediate stage consists of a sequence of small level transformations interlaced with full thermalizations (i.e.\ with $\lambda = 1$ in Eq.~\eqref{eq:partial-thermalization-gamma}).

More explicitly, we have the following scenario (see Fig.~\ref{fig:average-work}):
\begin{enumerate}[align=left]
	\item [I.] Level transformation raises the excited energy level from $E_0$ to $E(p_{\mathrm{IN}})$ while keeping the state constant. 
	\item [II.] The total energy increment (in stage (II)) is equal to $E(p_{\mathrm{IN}})-E(p_{\mathrm{OUT}})$, and in each step, it is given by a~constant value
    \begin{equation}
        \Delta_i:=\left|\Delta E_i\right|= \frac{E\left(p_{\mathrm{IN}}\right) - E\left(p_{\mathrm{OUT}}\right)}{n_{\mathrm{(II)}}},\quad i\in\left\{1,\ldots,n_\mathrm{(II)}\right\},
    \end{equation}
    where $n_{\mathrm{(II)}}$ denotes the number of steps in stage (II). In each $i$-th step, a full thermalization is performed, that is, $\lambda_i = 1$ for every $i \in \{1, \dots, n_{\mathrm{(II)}}\}$.
	\item [III.] Level transformation raises the excited energy level from $E(p_{\mathrm{OUT}})$ to $E_{0}$ while keeping the state constant. 
\end{enumerate}

\begin{figure}
	\centering
	\includegraphics[width=\linewidth]{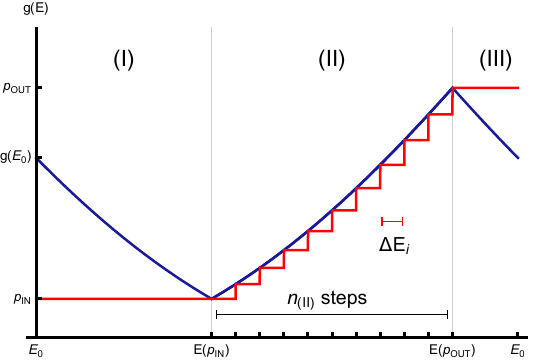}
	\caption{A specific protocol for which the average extracted work $\langle W \rangle$ is exactly equal to the free energy difference $F(\rho, E_0) - F(\sigma, E_0)$.}
	\label{fig:average-work}
\end{figure}


In both stages (I) and (III), the excited energy level is raised so that in these stages work is lost on average in each step. 
In stage (II), the opposite is true, so work is gained in this part. 
The work average is then simply given by
\begin{equation}\label{eq:optimal_wor_average}
    \begin{split}
    	\la W \ra=& -p_{\mathrm{IN}}\left(E\left(p_{\mathrm{IN}}\right) - E_0\right) + \sum_{i=1}^{n_{\mathrm{(II)}}} g(E_i)\Delta_i \\
    	&- p_{\mathrm{OUT}}\left(E_0 - E\left(p_{\mathrm{OUT}}\right)\right).
    \end{split}
\end{equation}
Note that we have explicitly put the negative signs to indicate the work loss, where needed. 
While tending with the number of steps in stage (II) to infinity ($n_{\mathrm{(II)}}\to\infty$), the average work in this stage is equal to
\begin{equation}
    \label{eq:gIntegral}
	\begin{split}
        -\int_{E(p_{\mathrm{IN}})}^{E(p_{\mathrm{OUT}})} g(E)dE 
        =& \int_{E(p_{\mathrm{OUT}})}^{E(p_{\mathrm{IN}})} g(E)dE\\
    	=& F_{\beta}\left(E\left(p_{\mathrm{IN}}\right)\right) \\
        &- F_{\beta}\left(E\left(p_{\mathrm{OUT}}\right)\right),
	\end{split}
\end{equation}
where $F_{\beta}(E) = F(\tau_{E},E)$ is the free energy of the thermal state $\tau_E$ corresponding to its Hamiltonian $H(E)$. 
The above equation can be directly verified, by performing integration and using the definition of free energy. 
Notice that this integral is keeping track of the work performed by an isothermal process between energy $E(p_{\mathrm{IN}})$ and $E(p_{\mathrm{OUT}})$, hence the work required is simply the difference of free energies \cite{Aberg2013}.
Then, note that
\begin{equation}
    F_\beta (E)=g(E) E - \frac{1}{\beta} S(E)
\end{equation}
with $S(E)=S(\tau(E))$ being the entropy of the Gibbs state with Hamiltonian $H(E)$, i.e.,
\begin{equation}
    S(E)=S(\{ g(E),1-g(E)\}).
\end{equation}
Since, by definition, for any $p\in(0,1/2]$ we have $g(E(p))=p$ (see Eq.\ (\ref{eq:gep})), then
\begin{equation}
    \label{eq:F-Gibbs-p}
    F_\beta (E(p))=p E(p) - \frac{1}{\beta} S(\{1-p,p\}),
\end{equation}
so that 
\begin{equation}
    \label{eq:F-Gibbs-in-out}
    \begin{split}
        &F_\beta \left(E\left(p_{\mathrm{IN}}\right)\right)=p_{\mathrm{IN}} E\left(p_{\mathrm{IN}}\right) - \frac{1}{\beta} S(\rho), \\
        &F_\beta \left(E\left(p_{\mathrm{OUT}}\right)\right)=p_{\mathrm{OUT}} E\left(p_{\mathrm{OUT}}\right) - \frac{1}{\beta} S(\sigma).
    \end{split}
\end{equation}
Hence, keeping in mind Eqs. \eqref{eq:optimal_wor_average} and \eqref{eq:gIntegral}, we can write 
\begin{equation}
    \begin{split}
        \< W\> =&  -p_{\mathrm{IN}}\left(E\left(p_{\mathrm{IN}}\right) - E_0\right) + F_\beta\left(E\left(p_{\mathrm{IN}}\right)\right)\\
         & -F_\beta\left(E\left(p_{\mathrm{OUT}}\right)\right) - p_{\mathrm{OUT}}\left(E_0 - E\left(p_{\mathrm{OUT}}\right)\right).
        \end{split}
\end{equation}
Inserting Eq.\ \eqref{eq:F-Gibbs-in-out} into the above equation, and using
\begin{equation}
    \begin{split}
        F\left(\rho,E_0\right)=&p_{\mathrm{IN}} E_0 - \frac{1}{\beta} S(\rho),\\
        F\left(\sigma,E_0\right)=&p_{\mathrm{OUT}} E_0 - \frac{1}{\beta} S(\sigma),
    \end{split}
\end{equation}
we obtain that the process described above extracts the free energy difference, as indicated in Eq.\ \eqref{eq:avgwork}.

\section{Warm-up: beyond the average for a single energy-change cycle}
\label{sec:worm-up}
		
In this section, we will make the first steps toward analyzing work fluctuations instead of work averages (similarly to the previous section, we restrict our attention here to the case in which both the initial and final states have excited-state populations not exceeding \(1/2\)).  
Let us consider the process described in Section \ref{sec:av}, that is, one that extracts an optimal amount of work on average (equal to the free energy difference). 
We aim to show that, with probability bounded away from zero (that is, bigger than some constant $\eta>0$, which is independent of the protocol of transforming $\rho$ into $\sigma$), at least some amount of work is lost. 
The result may be formulated as follows:

\begin{lemma}
	\label{lem:simplecase}
	The probability of extracting negative work by the process defined and discussed in Section \ref{sec:av} (in the context of its average work extraction), is bounded away from zero 
    (regardless  of the number of its steps).  
	In particular, the probability that the work loss $-W$ is greater than or equal to
	\begin{equation}
	    \frac{E\left(p_{\mathrm{IN}}\right) - E\left(p_{\mathrm{OUT}}\right)}{2}>0
	\end{equation}
	is lower bounded by
	\begin{equation}
	    p_{\mathrm{IN}}\,p_{\mathrm{OUT}}\left(1-e^{-2\left(\frac{1}{2}-p_{\mathrm{OUT}}\right)^2}\right),
	\end{equation}
    where the above is positive, provided that and $p_\mathrm{IN},p_\mathrm{OUT}\in(0,1/2]$.
\end{lemma}

\begin{proof}
First, note that with probability $p_{\mathrm{IN}}$ the excited level is initially occupied (up to the first thermalization),
which leads to a  positive work loss of $E(p_{\mathrm{IN}})-E_0$ in stage~(I), that is, $-W_\mathrm{(I)}=E(p_{\mathrm{IN}})-E_0$.    

By the same logic, after the last thermalization (which ends stage~(II) and begins stage~(III)), the excited level is occupied with probability $p_{\mathrm{OUT}}$. 
As a result, the system remains in this state throughout stage~(III), giving us $-W_\mathrm{(III)}=E_0-E(p_{\mathrm{OUT}})>0$. 

Hence, with probability $p_{\mathrm{IN}}\, p_{\mathrm{OUT}}$, we obtain a strictly positive work loss in stages~(I) and~(III), whose exact value is
\begin{equation}
    -W_\mathrm{(I)} - W_\mathrm{(III)}
    =
    E(p_{\mathrm{IN}}) - E(p_{\mathrm{OUT}}).
\end{equation}
	
	The only way for this work to be recovered in stage (II) is the following: during every thermalization, the state shall end up at the excited level. 
	We can express this in terms of random variables, say $s_i$, $i\in\{1,\ldots,n_{\mathrm{(II)}}\}$, each of which takes the value $0$ if the ground state is occupied, and $1$ if the excited state is occupied. 
	Being more precise, any $s_i$ has the Bernoulli distribution with parameter $g(E_i)$, where $g(E_i)\leq p_{\mathrm{OUT}}$, by assumption.  
	Moreover, note that, due to thermalizing in each step (of stage (II)), the random variables $s_i$, $i\in\{1,\ldots, n_{\mathrm{(II)}}\}$, are independent. 
	The situation in which no work is lost in total (i.e., when the excited state is occupied in every step) is described by the event
$\{s = n_{\mathrm{(II)}}\}$, where $s = \sum_{i=1}^{n_{\mathrm{(II)}}} s_i$. 
Its probability can be easily estimated by
	\begin{equation}\label{eq:prob_s=n}
		\text{P}\left(\left\{s=n_\mathrm{(II)}\right\}\right) = \prod_{i=1}^{n_\mathrm{(II)}} g(E_i) \leq \left(p_{\mathrm{OUT}}\right)^{n_{\mathrm{(II)}}}.
	\end{equation}
 For a finite number of steps $n_{\mathrm{(II)}}$, the extracted work is smaller than the required amount by at least
\begin{equation}
\Delta E_i = \frac{E(p_{\mathrm{IN}}) - E(p_{\mathrm{OUT}})}{n_{\mathrm{(II)}}},
\qquad i \in \{1,\ldots,n_{\mathrm{(II)}}\},
\end{equation}
with probability greater than or equal to
\begin{equation}
    p_{\mathrm{IN}}\, p_{\mathrm{OUT}}
    \left(1 - \left(p_{\mathrm{OUT}}\right)^{n_{\mathrm{(II)}}}\right).
\end{equation}
This follows from the fact that in at least one of the $n_{\mathrm{(II)}}$ steps the ground level is occupied instead of the excited one.

In the limiting case $n_{\mathrm{(II)}} \to \infty$, this probability increases further. However, at the same time the quantity $\Delta E_i$ tends to zero.
Consequently, in this regime it becomes necessary to evaluate the probability that the system occupies the ground state in a non-vanishing fraction of steps, rather than in just a single step. 
We will therefore evaluate the probability that, in at least three quarters of the steps of stage~(II), the ground state is occupied, which is formally described by the event $\{s \leq n_{\mathrm{(II)}}/4\}$. 
Note that, since all the steps are the same length, such an event means that the work gain along stage~(II) is at most one quarter of the total energy change, i.e.,
\begin{equation}
    W_\mathrm{(II)}\leq\frac{1}{4}\left(E\!\left(p_{\mathrm{IN}}\right) - E\!\left(p_{\mathrm{OUT}}\right)\right),
\end{equation}
which means that at most one quarter of the work lost in stages ~(I) and ~(III) can potentially be recovered, resulting in
\begin{equation}
    -W=-W_\mathrm{(I)}-W_\mathrm{(II)}-W_\mathrm{(III)}\geq \frac{3}{4}\left(E\!\left(p_{\mathrm{IN}}\right) - E\!\left(p_{\mathrm{OUT}}\right)\right).
\end{equation}
Now, to bound the probability of such an event from below, it is convenint to consider the auxiliary random variable 
$\tilde{s}=\sum_{i=1}^{n_{\mathrm{(II)}}} \tilde s_i$, where the variables $\tilde s_i$ are i.i.d.\ Bernoulli random variables with parameter $p_{\mathrm{OUT}}$. 
Since $g(E_i)\leq p_{\mathrm{OUT}}$ for all $i\in\{1,\ldots, n_{\mathrm{(II)}}\}$, it follows that
\begin{equation}
    \text{P}\!\left(s \leq \frac{n_{\mathrm{(II)}}}4\right) \;\geq\; 
    \text{P}\!\left(\tilde{s} \leq \frac{n_{\mathrm{(II)}}}4\right).
\end{equation}
We now apply Hoeffding's inequality (cf.\ \cite[Theorem 1]{Hoeffding}) in the following form: if $X_1,\dots,X_n$ are independent Bernoulli random variables with mean $\mathbb{E}[X_i]=p$, then for any $\delta>0$,
\begin{equation}
    \text{P}\!\left(\frac{1}{n}\sum_{i=1}^n X_i - p \geq \delta \right)
    \leq e^{-2n\delta^2}.
\end{equation}
In our case, setting $p=p_{\mathrm{OUT}}$ and $\delta = \frac{3}{4} - p_{\mathrm{OUT}}$ (which is positive by assumption), we obtain
\begin{align}
    \text{P}\!\left(\tilde{s} \geq \frac{3n_{\mathrm{(II)}}}{4}\right)
    &= \text{P}\!\left(\frac{\tilde{s}}{n_{\mathrm{(II)}}} - p_{\mathrm{OUT}} \geq \frac{3}{4} - p_{\mathrm{OUT}} \right)
    \\ \nonumber &\leq e^{-2n_{\mathrm{(II)}}\left(\tfrac{3}{4}-p_{\mathrm{OUT}}\right)^2}.
\end{align}
Taking the complement, we arrive at
\begin{align}
    \text{P}\!\left(\tilde{s} \leq \frac{n_{\mathrm{(II)}}}{4}\right)
    \geq 1 - e^{-2n_{\mathrm{(II)}}\left(\tfrac{3}{4}-p_{\mathrm{OUT}}\right)^2},
\end{align}
whence:
	\begin{equation}
    \begin{split}
	     \text{P}\left(\left\{s\leq \frac{n_{\mathrm{(II)}}}4\right\}\right)
	     &\geq 1-e^{-2n_{\mathrm{(II)}}\left(\frac{3}{4}-p_{\mathrm{OUT}}\right)^2}\\
	     &\geq 1-e^{-2\left(\frac{3}{4}-p_{\mathrm{OUT}}\right)^2}.
	\end{split}
    \end{equation}
	It implies that the probability of losing at least $3[E(p_{\mathrm{IN}}) - E(p_{\mathrm{OUT}})]/4$ of the work (in all the three stages) is greater than or equal to 
	\begin{equation}
	\label{eq:lower_bound}
	    p_{\mathrm{IN}}\,p_{\mathrm{OUT}}\left(1-e^{-2\left(\frac{3}{4}-p_{\mathrm{OUT}}\right)^2}\right).
	\end{equation}
	The proof is therefore complete.
\end{proof}

To summarize, in this section we have shown that -- for some particular protocol -- work is  lost (with some fixed positive probability) if we care not only about averages, but also about individual realizations. 
Note that we have assumed that all steps in the protocol are of equal length, i.e., in each step the energy is shifted by the same amount (and that each step entails a full thermalization). If one wishes to extend this result to the case of non-equal steps, one needs to ensure that the event $\{s \geq 3n_{\mathrm{(II)}}/4\}$ does not concentrate on a subset of steps of negligible length.  
While this can indeed be shown, rather than proving it directly, we proceed in the following sections to a general argument that applies to arbitrary protocols (including the possibility of partial thermalizations as well), not only to the specific one considered here.
		
\section{Arbitrary energy change cycle.}
\label{sec:arbitrary-cycle}
	
In this section, we will drop all the restrictions concerning the changes in energy levels.
Namely, these changes and the related Gibbs curve can now be completely arbitrary as long as they satisfy a single condition that the initial energy equals the final one.
The latter constraint means that the blue curve in Fig.\ \ref{fig:Gibbs-general} starts and ends at the same value. 

\begin{figure}
	\centering
	\includegraphics[width=\linewidth]{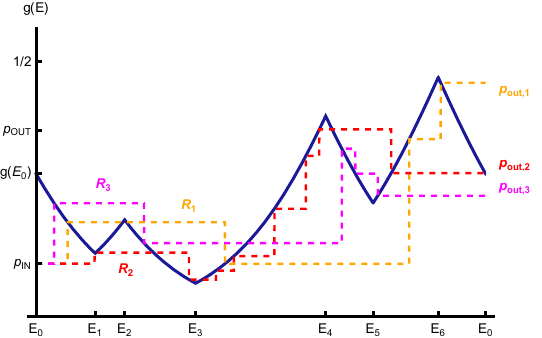}
	\caption{
	    \label{fig:Gibbs-general}
The Gibbs curve (solid), together with examples of different possible paths $R_1, R_2, R_3$ (dashed), where thermalization occurs at different moments along the process. 
Note that the $x$-axis represents the evolution of the process with energy levels changed in the order
$E_0 \rightarrow E_1 \rightarrow \ldots \rightarrow E_0$. 
While the initial states of all three paths are equal to $\rho = (1-p_\mathrm{IN}, p_\mathrm{IN})$, their final states may differ. The only requirement is that the appropriate mixture corresponding to the decomposition of the whole process, $\sum_{j \in I} \gamma_j R_j$, 
coincides with the target state $\sigma = (1-p_\mathrm{OUT}, p_\mathrm{OUT})$, i.e., 
$p_\mathrm{OUT} = \sum_{j \in I} \gamma_j p_{\mathrm{out},j}$. 
(For a more detailed explanation, see Appendix~\ref{sec:appendix_path}.)
}
\end{figure}



\subsection{The main tools used in the proof}
\label{sec:tools}

The central tool underlying our entire proof is the notion of {paths}.

\textbf{Paths.} 
Recall that at each step of the process one applies a thermalization map
\[
\mathcal{C}_i = (1-\lambda_i)\,\ONE + \lambda_i\,\hat{\tau}_i,
\]
where $\hat{\tau}_i$ is a constant map replacing the state with the thermal state
$\tau_i$ corresponding to the Hamiltonian $H(E_i)$, and $\ONE$ denotes the identity map.

We keep track of both the maps applied and their associated probabilities.
For instance, in a two-step process $\mathcal{C}_2 \mathcal{C}_1$, one of the four maps
$\ONE \ONE$, $\ONE \hat\tau_1$, $\hat\tau_2 \ONE$, $\hat\tau_2 \hat\tau_1$ 
is applied, with respective probabilities
$(1-\lambda_2)(1-\lambda_1)$, $(1-\lambda_2)\lambda_1$, $\lambda_2 (1-\lambda_1)$ and $\lambda_2\lambda_1$ 
(though the exact values of these probabilities will not play an important role later.)

After $N$ steps, the process is therefore a mixture of various sequences of maps
$\ONE$ and $\hat{\tau}_i$. 
Representing $\hat{\tau}_i$ by the symbol $G$ (for ``Gibbs''), such sequences take the form
$\ONE GGG\ONE G\ONE \ONE GG \ldots$, 
where $\ONE$ indicates that the identity map was applied, and $G$ that a full thermalization occurred. 
Each individual sequence produces some final state $(1-p_{\mathrm{out}}, p_{\mathrm{out}})$;
however, the mixture of final states arising from all such sequences must coincide with the target state
$\sigma = (1-p_{\mathrm{OUT}}, p_{\mathrm{OUT}})$. 
Importantly, every sequence in the mixture starts from the same initial state
$\rho = (1-p_{\mathrm{IN}}, p_{\mathrm{IN}})$. 

This motivates the introduction of the notion of a~``path'' (see Fig.~\ref{fig:Gibbs-general}).
A path is specified by a sequence of the form
\begin{equation}
\Delta E_1, X_1, \Delta E_2, X_2, \ldots, \Delta E_N, X_N,
\end{equation}
where $\Delta E_i$ are the energy increments associated with the level transformations,
and $X_i \in \{\ONE, G\}$ denotes whether the $i$-th step is the identity or a full thermalization
($\hat{\tau}_i$).

Since the initial energy is $E_0$, we may equivalently describe a path by the sequence of energies
$\{E_i\}_{i=1}^N$ instead of the increments $\Delta E_i$.
Because the process forms a cycle, the final energy must satisfy
$E_N = E_0$.

Let us draw the reader’s attention to the fact that a~path can be represented by a polygonal curve, consisting of horizontal segments of lengths $|\Delta E_i|$ and vertical segments corresponding to thermalizations, which occur only when $X_i = G$; see Fig.~\ref{fig:Gibbs-general}.  
In particular, a~subsequence of the form
$E_1,\, \ONE,\, E_2,\, \ONE,\, \ldots,\, E_k,\, \ONE$
is represented by a single horizontal line composed of $k$ shorter segments.

\textbf{Gibbs curve.}
For us, it will be important to consider, along with the path, the ``Gibbs curve'', which is the curve whose value is $g(E)$, i.e., the population of the upper level for Gibbs state at energy $E$. 
In Fig.\ \ref{fig:Gibbs-general}, the Gibbs curve is the solid one. 
The curve has a property that the area below it over a given interval is given, up to a sign, by the difference of free energy of Gibbs states at the beginning and at the end of the interval.

\textbf{Path Shrinking.}
\label{sec:shrinking}
Once we have defined a path and the Gibbs curve, it is convenient to simplify both without affecting the generality of our considerations. 
The idea of \emph{Path Shrinking} is illustrated in Fig.~\ref{fig:shrink}, and, briefly speaking, amounts to removing the $\ONE$ symbols from the sequence determining the path.
\begin{figure}
    \subfloat[Original]{\includegraphics[width=0.66\linewidth]{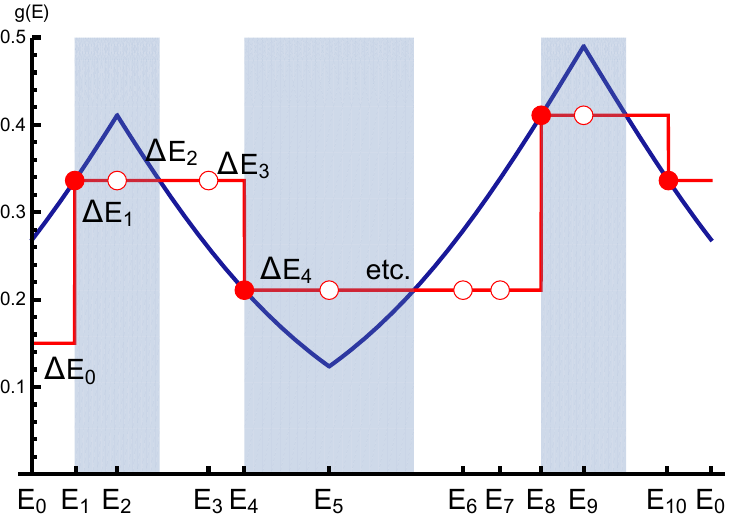}}
    \subfloat[Simplified]{\includegraphics[clip, trim=0 0 {175} 0, width=0.34\linewidth]{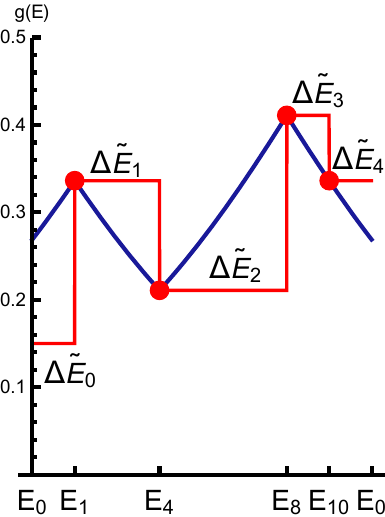}}
    \caption{
    \label{fig:shrink}
Simplifying paths (\emph{Path Shrinking}).
(a) A path consists of level transformations that change the energy by $\Delta E_i = E_{i+1}-E_i$, followed at each stage either by a thermalization (solid dot) or by no thermalization (white dot).
Sequences of consecutive level transformations without thermalizations are then combined into a single level transformation with the appropriate energy difference $\Delta \tilde{E}_i$.
Finally, unnecessary up--down excursions (shaded regions) are eliminated, without changing $\Delta \tilde{E}_i$.
(b) The resulting simplified process, shown with the original $x$-axis energy values retained for illustration.
    }
\end{figure}

Namely, we observe that in any path, a sequence of, say, $k$ consecutive $\ONE$’s means that no thermalization occurs between the corresponding level transformations in this part of the path.
Therefore, we can glue these level transformations together and obtain a single level transformation with energy increment $\sum_i \Delta E_i$, equal to the sum of the increments in this sequence.
Hence, without loss of generality, we may restrict attention to paths that contain no $\ONE$’s and consist solely of level transformations interlaced with full thermalizations.

In some cases, this procedure results in a level transformation in which the energy is first raised and then lowered (or vice versa). 
As illustrated in Fig.~\ref{fig:shrink}a, the Gibbs curve -- which encodes the changes of the energy levels -- goes up and then down, while the path itself contains no thermalization in between. If we now remove from the curve this round trip, it will 
clearly not affect the 
work random variable. 
Altogether we obtain a shrunk path, as depicted 
in Fig.~\ref{fig:shrink}b.
\blk

Thus, after applying the \emph{Path Shrinking} procedure, our paths never cross the Gibbs curve as a straight line. 
The only crossings are at thermalization points, and at such crossings, the path changes from vertical to horizontal. 

Also, after \emph{Path Shrinking}, paths correspond to sequences $E_1, G, E_2, G, E_3, G, \ldots, G, E_N$. 

\textbf{Division of a path into three stages.} 
There is a~natural division of any given path into three stages:
\begin{itemize}[align=left]
    \item[(I)] before the first thermalization,
    \item[(II)] between the first and the last thermalization,
    \item[(III)] after the last thermalization.
\end{itemize}

Note that, in the case of a path containing solely a single thermalization, only stages~(I) and~(III) are nontrivial, whereas stage~(II) is degenerate. On the other hand, the above decomposition is not applicable to paths without any thermalization. These special cases will therefore be discussed separately (at the end of Section \ref{subsec:co5}).

\subsection{Preparations for the formal proof}
\label{subsec:preparations}

For greater clarity, we shall first determine which transitions cannot be implemented within $\mcoww$ when operations from \textbf{CO5} are not allowed (that is, when an operation $\Lambda$ in Definition~\ref{def:MCO_0} of $\mcoww$ is composed only of operations from the sets \textbf{CO2} and \textbf{CO3}, excluding those from \textbf{CO5}). 
In Section \ref{subsec:co5} we then outline how the reasoning extends once operations from \textbf{CO5} are included.

As already mentioned above, the idea is to first analyze the possible behaviour of individual paths (already appropriately shortened by the \emph{Path Shrinking} procedure). Let us recall that the final state of a single path is denoted here by $(1 - p_\mathrm{out}, p_\mathrm{out})$ (to distinguish it from the final state of the whole process, denoted by $(1 - p_\mathrm{OUT}, p_\mathrm{OUT})$).  
The entire proof strategy is then based on considering separately two mutually exclusive scenarios:
\begin{itemize}
    \item[(a)] which assumes that \textbf{a single path ends above $g(E_0)$}, i.e.\ $g(E_0) < p_{\mathrm{out}}$;
    \item[(b)] which assumes that \textbf{a single path ends below $g(E_0)$}, i.e.\ $p_{\mathrm{out}} < g(E_0)$.
\end{itemize}

\textbf{Scenario (a).}  
In scenario~(a), a work loss greater than some positive constant is incurred in stage~(III) with probability $p_\mathrm{out}$ (the reasoning is analogous to that in the proof of Lemma~\ref{lem:simplecase}). Moreover, with probability bounded away from zero, this work loss is not compensated by the contributions from stages~(I) and~(II). To prove this, we adopt the following strategy.

In stage~(I), we condition on the event ``0'' (occupation of the ground state) when the first level transformation lowers the energy, and on the event ``1'' (occupation of the excited state) when it raises the energy.  
It is then straightforward to argue that, conditioned on such an event, the amount of lost work (denoted by $-W_{\mathrm{(I)}}$) is no smaller than  the free energy difference $\Delta F_{\mathrm{(I)}}$ of the corresponding Gibbs state during stage~(I). 
This is established in Appendix~\ref{sec:appendA}, Lemma~\ref{lem:w1}.

In stage~(III), we perform the same conditioning.  
We then prove (which is relatively simple) that the amount of lost work (denoted by $-W_{\mathrm{(III)}}$) is, with probability $p_\mathrm{out}$, strictly greater than the free energy difference of the Gibbs state in stage~(III), i.e., $-W_\mathrm{(III)}\geq \Delta F_\mathrm{(III)}+\epsilon_\mathrm{(III)}$ (which value is strictly positive under the assumption $g(E_0) < p_\mathrm{out}$ defining case~(a) and causing the Gibbs curve to decrease along stage~(III)).  
The proof is again given in Appendix~\ref{sec:appendA}, Lemma~\ref{lem:w1}.

Combining the above two observations, we obtain
\begin{equation}\label{total_W(I)(III)}
    -W_{\mathrm{(I)}} - W_{\mathrm{(III)}}
    \geq
    \Delta F_{\mathrm{(I)}} + \Delta F_{\mathrm{(III)}} + \epsilon_{\mathrm{(III)}}.
\end{equation}

As far as stage~(II) is concerned, the key tool is the Jarzynski equality~\cite{Jarzynski1997}, which implies that
\begin{equation}
\label{eq:JarzynskiII}
    \left\langle e^{\beta W_{\mathrm{(II)}}} \right\rangle
    = e^{-\beta \Delta F_{\mathrm{(II)}}},
\end{equation}
where $\Delta F_{\mathrm{(II)}}$ denotes the free energy difference between the Gibbs states corresponding to the initial and final Hamiltonians of stage~(II)$\,$ (this comes from independence of works between each thermalization; see the proof of Lemma~\ref{lem:w2} in Appendix~\ref{sec:appendA} for an explicit derivation within our framework).

This identity, in turn, implies that
for any $\epsilon>0$ there exists $\eta>0$ such that 
\begin{equation}\label{total_W(II)}
    -W_{\mathrm{(II)}} \geq \Delta F_{\mathrm{(II)}} - \epsilon,
\end{equation}
with probability greater than $\eta$.   
We prove this statement in Lemma~\ref{lem:Jarzynski} at the end of the present subsection using a general framework. Its discrete counterpart is formulated in Lemma~\ref{lem:w2} in Appendix~\ref{sec:appendA}.

Now, note that since we are dealing with a cyclic process, i.e., the final Hamiltonian coincides with the initial one, we have
\begin{equation}
    \label{eq:total-deltaF}
    \Delta F_{\mathrm{(I)}} +
    \Delta F_{\mathrm{(II)}} +
    \Delta F_{\mathrm{(III)}} = 0
\end{equation}
(recall that this is a change of free energy of Gibbs states related to temporary Hamiltonian, rather than the actual states).

Finally, combining Eqs.~\eqref{total_W(I)(III)}, \eqref{total_W(II)}, and~\eqref{eq:total-deltaF}, we obtain
\begin{equation}
    -W_{\mathrm{(I)}} - W_{\mathrm{(II)}} - W_{\mathrm{(III)}}
    > \epsilon_{\mathrm{(III)}} - \delta > 0 .
\end{equation}
This inequality holds with probability equal to the product of the probabilities of the corresponding conditioning events: one arising from stage~(I), one from stage~(III), and the probability  associated with stage~(II) (greater than $\eta$). 
Let us emphasize that, since stages~(I), (II), and~(III) are separated by full thermalizations, the corresponding events are independent. 
Therefore, the probability of the total event is given by the product of the individual probabilities.

\textbf{Scenario (b).} In scenario~(b), no work loss along stage~(III) can be guaranteed with positive probability. 
Fortunately, the condition $p_{\mathrm{out}} < g(E_0)$ provides an opportunity to incur work loss either along stage~(I) or along stage~(II). 
It is therefore convenient to distinguish two subcases of case~(b). 
These subcases are not disjoint in general (although we later make them disjoint in the appendix, mainly for technical reasons; cf.\ Appendix~\ref{sec:idea}).

Before introducing them, let us note that the boundary between the first two stages (I) and (II) is determined by the first thermalization, i.e., by a point $(E_1, g(E_1))$ on the Gibbs curve. 
Depending on the position of this point relative to the end of a given path, that is, $p_\mathrm{out}$, as well as to the point $(E_0, g(E_0))$ (which, within case~(b), is separated from the endpoint $p_\mathrm{out}$ in the sense that $p_{\mathrm{out}} < g(E_0)$), we define the following subscenarios:
\begin{itemize}
    \item[(b1)] the path ends below $g(E_0)$, i.e., $p_{\mathrm{out}} < g(E_0)$, and the first thermalization is applied below the point $(E_0, g(E_0))$, that is, $E_1 > E_0$ and thus $g(E_1) < g(E_0)$;
    \item[(b2)] the path ends below $g(E_0)$, i.e., $p_{\mathrm{out}} < g(E_0)$, and the first thermalization $(E_1, g(E_1))$, which indicates the beginning of stage (II) of the path, lies above the endpoint of this stage $(E(p_{\mathrm{out}}), p_{\mathrm{out}})$, that is,  $p_{\mathrm{out}}<g(E_1)$.
\end{itemize}

In short, the reasoning within scenario~(b1) is analogous to that in case~(a), with the only difference that the roles of stages~(I) and~(III) are interchanged. 
Here, the work loss in the first stage is guaranteed by the assumption $g(E_1) < g(E_0)$, which forces the Gibbs curve to be decreasing along this segment.

On the other hand, in scenario~(b2) we cannot guarantee any work loss in either stage~(I) or stage~(III). 
Instead, we observe that the free energy increase\ increment within stage~(II) is positive, i.e., $\Delta F_{\mathrm{(II)}} > 0$ (since the starting and ending points of stage~(II), namely $(E_1, g(E_1))$ and $(E(p_{\mathrm{out}}), p_{\mathrm{out}})$, respectively, satisfy $p_{\mathrm{out}} < g(E_1)$). 
Therefore, recalling the already established consequence of the Jarzynski identity, namely Eq.~\eqref{total_W(II)}, we conclude that work loss is incurred in stage~(II) with some positive probability (bounded away from zero).

Simultaneously, if we again apply appropriate conditioning in stages~(I) and~(III) (this time only on the event ``0''), we obtain, with a fixed positive probability (greater than or equal to $(1-p_\mathrm{in})(1-p_\mathrm{out})$),
\begin{equation}\label{eq:total_W(I)(III)_simple}
    -W_{\mathrm{(I)}} - W_{\mathrm{(III)}} \ge 0 .
\end{equation}

Combining Eqs.~\eqref{total_W(II)}, \eqref{eq:total-deltaF}, and
\eqref{eq:total_W(I)(III)_simple}, we finally obtain that
\begin{equation}
    -W_{\mathrm{(I)}} - W_{\mathrm{(II)}} - W_{\mathrm{(III)}}
    > \Delta F_{\mathrm{(II)}} - \epsilon
\end{equation}
with probability bounded away from zero (which again follows from the independence of the events in the individual stages~(I), (II), and~(III), ensured by full thermalizations separating them). 
Since $\epsilon > 0$ can be chosen arbitrarily, this implies that the total work loss of the entire path is strictly positive with such a~probability.

\textbf{From a single path to the whole process.}
As already mentioned above, in our approach, the idea is to focus on a concrete path and show that, idependetly of which of the two disjoint  scenarios (a) or (b) realizes, the work loss is incured (more precisely, $-W$ is greater than some positive constant), with some probability bounded away from zero. 
Then, since the mixture of the ends of these paths must be equal to $p_{\mathrm{OUT}}$ (as the final state is to be $\sigma$), by Markov's inequality, we will get that the overall process must cost work with probability bounded away from zero.\newline
	

At the end of this section we provide the promised lemma.
\begin{lemma}
\label{lem:Jarzynski}
Let $\epsilon>0$. Then
\begin{align}
\label{eq:lemJarz}
P(-W_\mathrm{(II)}\geq \Delta F_\mathrm{(II)}-\epsilon)> 1-e^{-\beta\epsilon}\equiv \eta    
\end{align}
\end{lemma}
\begin{proof}
Let us denote \(W = -W_{\mathrm{(II)}}\) (which matches the convention commonly used in the Jarzynski equality) and \(\Delta F = \Delta F_{\mathrm{(II)}}\). Then Eq.~\eqref{eq:JarzynskiII} can be rewritten as
\begin{align}
    \left\langle e^{-\beta W} \right\rangle
    = e^{-\beta \Delta F}.
\end{align}
We then write
\begin{align}
    P(W \leq \Delta F - \epsilon)
    =
    \int_{-\infty}^{\Delta F - \epsilon} \rho(W)\, dW,
\end{align}
where \(\rho(W)\) is the probability density function of \(W\) (in the discrete case, it takes the form of a sum of Dirac delta distributions). 
Now, for all \(W\) within the above integration range, we have
\begin{align}
    e^{\beta (\Delta F - \epsilon - W)} \geq 1.
\end{align}
Hence,
\begin{align}
    P(W \leq \Delta F - \epsilon)
    &=
    \int_{-\infty}^{\Delta F - \epsilon} \rho(W)\, dW \nonumber \\
    &\leq
    \int_{-\infty}^{\Delta F - \epsilon}
    e^{\beta (\Delta F - \epsilon - W)} \rho(W)\, dW
    \nonumber \\
    &=
    e^{\beta (\Delta F - \epsilon)}
    \int_{-\infty}^{\Delta F - \epsilon}
    e^{-\beta W} \rho(W)\, dW
    \nonumber \\
    &\leq
    e^{\beta (\Delta F - \epsilon)}
    \int_{-\infty}^{\infty}
    e^{-\beta W} \rho(W)\, dW
    \nonumber \\
    &=
    e^{\beta (\Delta F - \epsilon)}
    \left\langle e^{-\beta W} \right\rangle
    =
    e^{-\beta \epsilon},
\end{align}
where in the last equality we used Eq.~\eqref{eq:JarzynskiII}. 
We have thus shown that 
\begin{align}
    P(W    \leq \Delta F- \epsilon) \leq  e^{-\beta \epsilon}
\end{align}
or equivalently 
\begin{align}
    P(W >\Delta F- \epsilon) 
    >1- e^{-\beta \epsilon},
\end{align}
which of course implies 
\begin{align}
    P(W \geq \Delta F- \epsilon) 
    > 1- e^{-\beta \epsilon}.
\end{align}
Returning to the original notation, we obtain the desired relation~\eqref{eq:lemJarz}.
\end{proof}

\subsection{Adding CO5}
\label{subsec:co5}
Let us now outline the fully general case in which swaps are allowed. 
As discussed in Section~\ref{subsec:operations}, without loss of generality, we may assume that after each unitary transformation the system is dephased. Such an operation can then be represented as a mixture of the identity operation and the swap operation (i.e., the bit-flip operation). 

Consequently, an arbitrary path is now determined by a sequence composed of three possible types of elements:
\begin{itemize}
    \item Level Transformations (represented by horizontal segments),
    \item Thermalizations (represented by vertical segments ending on the Gibbs curve),
    \item Swaps (represented by vertical segments ending above the Gibbs curve at the point $(0,1/2)$, corresponding to a reflection about the point $1/2$).
\end{itemize}

Thus, a path is described by a sequence of the form
\begin{equation}
    \Delta E_1, X_1, \Delta E_2, X_2, \ldots, \Delta E_n, X_n,
\end{equation}
where, as before, $\Delta E_i$ are the energy increments associated with the level transformations, while each $X_i$ is either $\ONE$, $G$ (i.e., $\hat{\tau}(E_i)$), or $S$, representing the $\swap$ operation.

Similarly as before, we may apply the \emph{Path Shrinking} procedure, which now consists not only in removing $\ONE$ operations, but also in eliminating pairs of $S$ operations, since the swap is an involution.

Hence, we may restrict attention to paths for which each $X_i$ is either $G$ or $S$. 
An example of such a path is depicted in Fig.~\ref{fig:path-swaps}.
\begin{figure}
	\centering
	\includegraphics[width=0.7\linewidth]{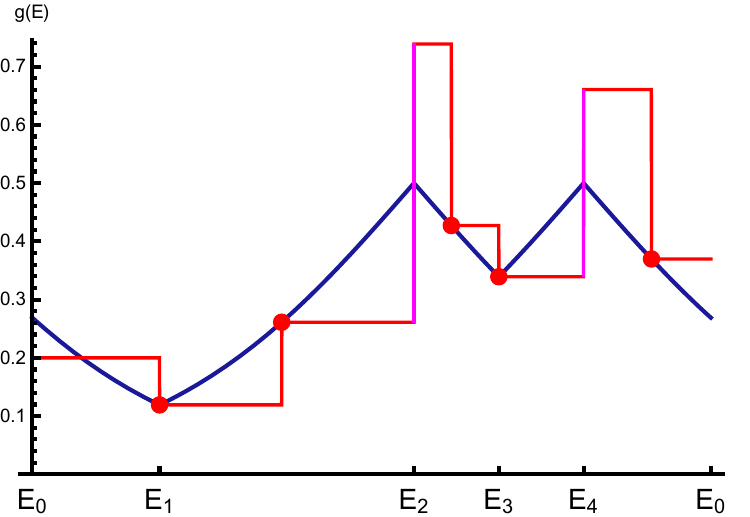}
	\caption{
    	\label{fig:path-swaps}
    	Path with swaps (indicated in pink). Note that swaps can occur only when the Gibbs curve attains the value $1/2$, i.e., when the energy is zero (at $E_2$ and $E_4$ in the example).
	}
\end{figure}

As before, the path can be represented by a polygonal curve. 
Again, $\Delta E_i$ means that there is a horizontal part, $G$ means that there is a vertical part ending up at a Gibbs curve, and $S$ means \textit{reflection} about point $1/2$. 
This last element can only appear when also Gibbs curve is at this point, i.e., when the energy of the upper level is $0$.  

To see that $\swap$ acts as a reflection, note that it transforms the state $(1-q,q)$ into $(1-q',q')$, where $q' = 1-q$. Hence, the values $q$ and $q'$ are symmetrically positioned with respect to $1/2$. More precisely, we have
\begin{equation}
    \frac{q+q'}{2}=\frac{q+(1-q)}{2}=\frac{1}{2}.
\end{equation}

Under a swap operation (similarly as under thermalization), the path may move either upward or downward; we shall loosely refer to these situations as \emph{swap up} and \emph{swap down}. Which of the two occurs depends on the state of the system immediately before the swap. 
More precisely, if the state is less excited than the corresponding Gibbs state (i.e., the path lies below the Gibbs curve), then the operation $\swap$ causes a \emph{swap up}: the path ends up above the Gibbs curve (in fact, above the level $1/2$). Conversely, if the path lies above the Gibbs curve (and therefore also above $1/2$), then a \emph{swap down} moves the path below the Gibbs curve. 
As already mentioned, this behaviour is analogous to the effect of thermalization, which causes a jump upward when the state lies below the Gibbs curve and a~jump downward when it lies above. The crucial difference is that thermalization moves the path directly onto the Gibbs curve, whereas $\swap$ produces a symmetric reflection about the level $1/2$.\blk

We shall now argue that 
a \emph{swap down} can occur exclusively in part~(I) of the path, i.e., before the first thermalization. Recall that a \emph{swap down} means that the swap operation moves the path from above the level $1/2$ to below it. Hence, immediately before such a swap, the path must lie above $1/2$. 
This can happen in only two ways. 

First, the path may initially start above the Gibbs curve (the initial state may have a population
of excited level greater than 1/2, i.e., $p_\mathrm{IN}>1/2$), before any thermalization is applied. Since the first thermalization always brings the path onto the Gibbs curve, and thus below or exactly to the level $1/2$, this possibility can occur only before the first thermalization.

Second, the path may previously undergo a \emph{swap up}. However, this is excluded after applying the \emph{Path Shrinking} procedure. Indeed, since the swap operation is an involution, multiple swaps between the same pair of thermalizations reduce to at most one effective swap. Consequently, any two swaps must be separated by a thermalization, and every thermalization brings the path below (or exactly to) the level $1/2$. Hence, after a \emph{swap up}, no subsequent \emph{swap down} can occur before another thermalization.


Let us also emphasize that, for a swap to appear along stage~(III) (and we have already explained that it can only be a \emph{swap up}), it is necessary that \(p_{\mathrm{out}} > \tfrac12\). In particular, scenario~(b), defined by the condition \(p_\mathrm{out} < g(E_0)\), excludes the possibility of a swap, since the Gibbs curve (and thus also \(g(E_0)\)) never exceeds the level \(1/2\). Indeed, any swap performed after the last thermalization necessarily moves the path above the level \(1/2\), which is possible only in scenario~(a). Moreover, performing such a swap is the only way to obtain a final state satisfying \(p_{\mathrm{out}} > 1/2\) (actually, this is also possible when \(p_\mathrm{in} \geq p_\mathrm{out} > 1/2\), but we do not need to consider this case separately, since it is a special instance of the scenario in which the initial and final states \(\rho\) and \(\sigma\) satisfy Eq.~\eqref{eq:sigma-thm} -- a scenario that allows one to transform \(\sigma\) into \(\rho\) by means of \(\mcoww\)).
Taking the above into account, we obtain the following rules:
\begin{itemize}
    \item There can be at most one swap (either a \emph{swap up} or a \emph{swap down}) in stage~(I).
    \item There can be at most one swap (necessarily a \emph{swap up}) in stage~(III), and this is possible only within scenario~(a), i.e., when $p_\mathrm{out} > q_0$.
    \item In stage~(II), a~\emph{swap down} cannot occur, whereas a~single \emph{swap up} may occur between any two neighboring thermalizations.
\end{itemize}
\blk

The proof now proceeds along the same general lines as before, with some additional technical complications. In particular, scenario~(a), which allows for a swap in stage~(III), requires a more subtle argument.

\textbf{Stages~(I) and~(III).}
Concerning stages~(I) and~(III), we proceed analogously to the case without swaps. We again use the rule that we condition on the event ``0'' when the Gibbs curve increases, and on the event ``1'' when it decreases. The idea behind this conditioning is that we always condition on the most ``work-losing'' event: when the energy level is raised, conditioning on ``1'' yields the maximal possible work loss (equal to the energy change), whereas when the level is lowered, conditioning on ``0'' yields zero work gain, i.e., the minimal possible contribution.

The effect of swaps will be discussed in detail in Appendix~\ref{sec:swaps(I)(III)} (cf. Lemmas \ref{lem:easy_swaps}-\ref{lem:w1_swaps}). Here we present only an illustrative example, showing how the presence of swaps modifies the reasoning in the proof. The example is depicted in Fig.~\ref{fig:swap-cond01}.

\begin{figure*}
	\centering
    \subfloat[Illustration by means of levels]{
        \includegraphics[width=0.95\linewidth]{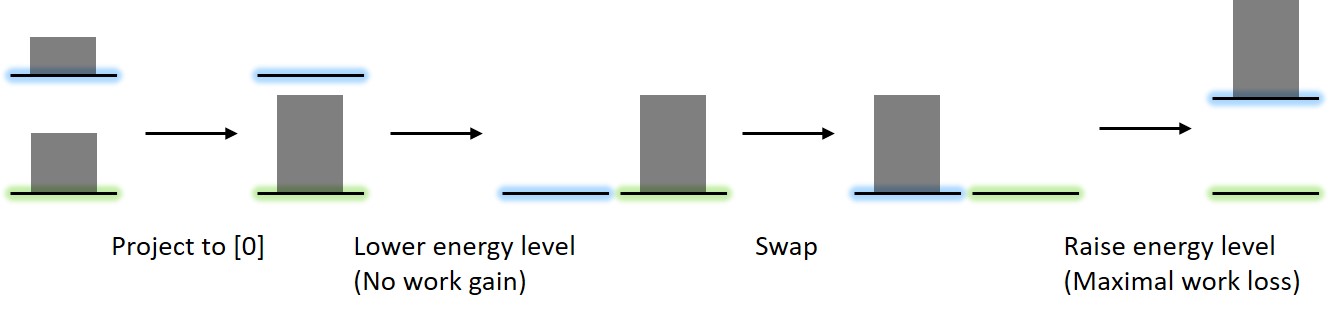}
    }
	\newline
    \subfloat[Illustration by means of a path]{
        \includegraphics[width=0.5\linewidth]{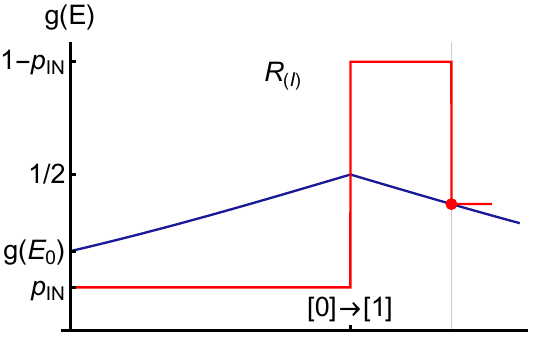}
    }
	\caption{
		\label{fig:swap-cond01}%
        Conditioning on ``0''  in the \emph{swap up} case along stage (I).
    }
\end{figure*}

Suppose that we lower the level to zero, perform a~swap, and then raise the level. 
Then, we see that conditioning on ``0'' gives the largest work loss: when the level is lowered, the work is zero (the smallest possible gain) as, due to our conditioning, the upper state is not populated. 
Then a~swap occurs, meaning that the upper level is now fully populated, and during raising the level, maximal work loss is incurred (equal to energy change).

Thus conditioning on ``0'' changes after a swap into conditioning on ``1'', so during the sequence lowering-swap-raising, we are all the time in the regime of largest possible work loss. 
	
\textbf{Stage (II).} 
Here it suffices to observe that all the conclusions concerning stage~(II), obtained in Appendix~\ref{sec:appendA} (cf. Lemma \ref{lem:w2}) 
in the setting without swaps, remain valid after extending the analysis to paths that include swaps. This is established in Lemma~\ref{lem:w2_swaps}, proven in Appendix~\ref{sec:swaps(II)}.\newline\blk

As already announced in Section~\ref{sec:tools}, we also need to briefly discuss what happens when a given path cannot be naturally divided into the three stages~(I), (II), and~(III), due to containing at most one thermalization.

In the case where no thermalization is performed, we either have \(p_\mathrm{out} = p_\mathrm{in}\) (when only level transformations are applied) or \(p_\mathrm{out} = 1 - p_\mathrm{in}\) (when, after applying the \emph{Path Shrinking} procedure, the upper energy level is lowered to zero, a swap is performed, and finally the energy level is raised back to its initial value). The first case is a particular instance of scenario~\eqref{eq:sigma-thm} with \(\lambda = 0\). In the second case, we can guarantee that
\(
    -W \geq E_0
\)
with probability \(1 - p_\mathrm{in}\) (we condition on the event ``0'' while lowering the energy, and then the swap forces conditioning on the event ``1'' while raising the energy).

Let us now discuss an arbitrary path containing a single thermalization, which naturally divides the path into parts~(I) and~(III) (while part~(II) is degenerate). It suffices to observe that the first thermalization is simultaneously the last one, so the path necessarily falls into either scenario~(a) or scenario~(b1). These two scenarios are clearly disjoint and, in the present setting, exhaust all possibilities. Consequently, the previous arguments still apply, and we can guarantee a positive work loss either in stage~(I) or in stage~(III).

\section{Conclusions}


In this paper, we have shown that without the use of a 2-level auxiliary system, the Coarse Operations that do not spend work cannot simulate Thermal Operations. 
It means that commutative Markovian dynamics cannot simulate Thermal Operations, as one bit of memory is needed. 
Thus such a~simulation must be a~hidden Markov process, with a~one-bit space of hidden states.
Note that the weak coupling limit is just an example of such commutative dynamics. Indeed, weak coupling limit leads to Markovian master equation known as Davies equation. 
The evolution according to such equation separates diagonal part of density matrix from off diagonal one.

This conclusion is consistent with the broader picture emerging from recent
studies of memory-assisted thermal processes: ancillary systems can be viewed as
explicit carriers of the memory needed to bridge the gap between Markovian
thermal dynamics and the full set of transitions allowed by Thermal Operations
\cite{CzartowskiOliveiraKorzekwa2023_thermal_recall}. Our contribution is to
show that, for Coarse Operations, this memory is not merely a technical
convenience but is necessary already for certain diagonal qubit transitions.


In \cite{KorzekwaL2020-TO-advantage}, it was shown that abstract quantum Markovian dynamics  can simulate Thermal Operations. 
We have thus here established a difference in a continuous dynamics regime, analogous to the difference between Thermal Operations and Gibbs Preserving operations in the quantum operations regime of \cite{Faist_2015}.


An open question remains what if  the initial state
or final state are pure (apart from the case when initial state is excited and final is ground one).
Other possible open questions are about tightening quantitative bounds: namely, what is maximal work loss at a given probability of the loss? 
One can also consider other measures of the loss, like the average of the work over the region where work is negative, and minimize such value over all possible processes. 

Finally, one may consider an explicit battery and allow coherences to build up in the intermediate times while measuring the battery only at the end. 
It would be an even more general setup, where one may try to prove (or disprove) a similar no-go result.

\begin{acknowledgments}
    The work is partially supported by John Templeton Foundation through grant No.\ 56033.
	The work of \hbox{HW-\'S} is supported by the Foundation for Polish Science (FNP) under grant number POIR.04.04.00-00-17C1/18.
    MS and KH acknowledge National Science Centre, Poland, grant Sonata Bis 5, 2015/18/E/ST2/00327. 
	MS acknowledges National Science Centre, Poland, grant SHENG 1, 2018/30/Q/ST2/00625.
    MS acknowledges National Science Centre, Poland, grant OPUS 25, 2023/49/B/ST2/02468.
    KH and MH acknowledge National Science Centre, Poland, grant OPUS 9.\ 2015/17/B/ST2/01945.
	KH and MH acknowledge partial support by the Foundation for Polish Science (IRAP project, ICTQT, contract no.\ 2018/MAB/5, co-financed by EU via Smart Growth Operational Programme). 
    KH acknowledges National Science Centre, Poland, grant Opus 25, 2023/49/B/ST2/02468.
	AG was also partially supported by the Polish National Science Centre (NCN) under the Maestro Grant No. DEC-2019/34/A/ST2/00081.
	MH has been also supported by QuantERA II project ExTRaQT grant
No: 2021/03/Y/ST2/00178 that has received funding
from EU’s Horizon 2020.

	%
		
	
	
\end{acknowledgments}




	
\bibliography{Ancillarefs}

@article{Aamir_2025,
	title        = {Thermally driven quantum refrigerator autonomously resets a superconducting qubit},
	author       = {Aamir, Mohammed Ali and Jamet Suria, Paul and Marín Guzmán, José Antonio and Castillo-Moreno, Claudia and Epstein, Jeffrey M. and Yunger Halpern, Nicole and Gasparinetti, Simone},
	year         = 2025,
	month        = {Jan},
	journal      = {Nature Physics},
	publisher    = {Springer Science and Business Media LLC},
	volume       = 21,
	number       = 2,
	pages        = {318–323},
	doi          = {10.1038/s41567-024-02708-5},
	issn         = {1745-2481},
	url          = {http://dx.doi.org/10.1038/s41567-024-02708-5}
}

@misc{uusnakki2026autonomousquantumheatengine,
	title        = {Autonomous quantum heat engine},
	author       = {Tuomas Uusnäkki and Miika Rasola and Vasilii Vadimov and Priyank Singh and Ahmad Darwish and Mikko Möttönen},
	year         = 2026,
	url          = {https://arxiv.org/abs/2603.15355},
	eprint       = {2603.15355},
	archiveprefix = {arXiv},
	primaryclass = {quant-ph}
}

@article{Hoeffding,
	title        = {Probability Inequalities for Sums of Bounded Random Variables},
	author       = {Wassily Hoeffding},
	year         = 1963,
	month        = mar,
	journal      = {Journal of the American Statistical Association},
	publisher    = {Informa {UK} Limited},
	volume       = 58,
	number       = 301,
	pages        = {13--30},
	doi          = {10.1080/01621459.1963.10500830},
	url          = {https://doi.org/10.1080/01621459.1963.10500830}
}

@article{Perry_etal,
	title        = {A Sufficient Set of Experimentally Implementable Thermal Operations for Small Systems},
	author       = {Christopher Perry and Piotr {\'{C}}wikli{\'{n}}ski and Janet Anders and Micha{\l} Horodecki and Jonathan Oppenheim},
	year         = 2018,
	month        = dec,
	journal      = {Physical Review X},
	publisher    = {American Physical Society ({APS})},
	volume       = 8,
	number       = 4,
	pages        = {041049},
	doi          = {10.1103/physrevx.8.041049},
	url          = {https://doi.org/10.1103/physrevx.8.041049}
}

@article{Aberg2013,
	title        = {Truly work-like work extraction via a single-shot analysis},
	author       = {Johan {\AA}berg},
	year         = 2013,
	month        = jun,
	journal      = {Nature Communications},
	publisher    = {Springer Science and Business Media {LLC}},
	volume       = 4,
	number       = 1,
	pages        = 1925,
	doi          = {10.1038/ncomms2712},
	url          = {https://doi.org/10.1038/ncomms2712}
}

@article{HO2013,
	title        = {Fundamental limitations for quantum and nanoscale thermodynamics},
	author       = {Micha{\l} Horodecki and Jonathan Oppenheim},
	year         = 2013,
	month        = jun,
	journal      = {Nature Communications},
	publisher    = {Springer Science and Business Media {LLC}},
	volume       = 4,
	number       = 1,
	pages        = 2059,
	doi          = {10.1038/ncomms3059},
	url          = {https://doi.org/10.1038/ncomms3059}
}

@article{Janzing00,
	title        = {Thermodynamic Cost of Reliability and Low Temperatures: Tightening Landauer's Principle and the Second Law},
	author       = {Janzing, D. and Wocjan, P. and Zeier, R. and Geiss, R. and Beth, Th.},
	year         = 2000,
	month        = {Dec},
	day          = {01},
	journal      = {International Journal of Theoretical Physics},
	volume       = 39,
	number       = 12,
	pages        = {2717--2753},
	doi          = {10.1023/A:1026422630734},
	issn         = {1572-9575},
	url          = {http://dx.doi.org/10.1023/A:1026422630734},
	abstract     = {Landauer's principle states that the erasure of one bit of information requires thefree energy kT ln 2. We argue that the reliability of the bit erasure process isbounded by the accuracy inherent in the statistical state of the energy source(``the resources'') driving the process. We develop a general framework describingthe ``thermodynamic worth'' of the resources with respect to reliable bit erasureor good cooling. This worth turns out to be given by the distinguishability of theresource's state from its equilibrium state in the sense of a statistical inferenceproblem. Accordingly, Kullback---Leibler relative information is a decisivequantity for the ``worth'' of the resource's state. Due to the asymmetry of relativeinformation, the reliability of the erasure process is bounded rather by the relativeinformation of the equilibrium state with respect to the actual state than by therelative information of the actual state with respect to the equilibrium state (whichis the free energy up to constants).}
}

@article{Ruch76,
	title        = {The principle of increasing mixing character and some of its consequences},
	author       = {Ruch, Ernst and Mead, Alden},
	year         = 1976,
	month        = {Jun},
	day          = {01},
	journal      = {Theoretica chimica acta},
	volume       = 41,
	number       = 2,
	pages        = {95--117},
	doi          = {10.1007/BF01178071},
	issn         = {1432-2234},
	url          = {http://dx.doi.org/10.1007/BF01178071},
	abstract     = {The ``Principle of Increasing Mixing Character'', previously postulated by one of us (and derived for the case of an ensemble of isolated systems obeying a ``master equation'') as a stronger version of the second law of thermodynamics, is re-derived using von Neumann's density matrix formulation of statistical mechanics. To make the principle more convenient for applications, it is reformulated in terms of ``Mixing Homomorphic Functions'', a set of state functions all of which must increase in an allowed irreversible process in an-isolated system. The entropy is one such function, but no one function, and no finite set of functions, suffices to determine the increase of mixing character. The principle is extended to the case of a system which is not isolated, but in contact with a heat bath, for which it takes a form which we name the ``Principle of Decreasing Mixing Distance'' from the equilibrium distribution. As examples, applications are made to two simple cases: diffusion in an ideal solution, and chemical reactions in ideal gas mixtures.}
}

@article{RuchSS78,
	title        = {The mixing distance},
	author       = {Ernst Ruch and Rudolf Schranner and Thomas H. Seligman},
	year         = 1978,
	journal      = {The Journal of Chemical Physics},
	publisher    = {{AIP} Publishing},
	volume       = 69,
	number       = 1,
	pages        = 386,
	doi          = {10.1063/1.436364},
	url          = {https://doi.org/10.1063/1.436364}
}

@book{Streater95,
	title        = {Statistical Dynamics},
	author       = {Streater, R F},
	year         = 2009,
	publisher    = {Published by Imperial College Press and Distributed by World Scientific Publishing CO.},
	address      = {},
	doi          = {10.1142/p576},
	edition      = {2nd}
}

@article{Alicki79,
	title        = {The quantum open system as a model of the heat engine},
	author       = {R Alicki},
	year         = 1979,
	month        = {may},
	journal      = {Journal of Physics A: Mathematical and General},
	publisher    = {{IOP} Publishing},
	volume       = 12,
	number       = 5,
	pages        = {L103--L107},
	doi          = {10.1088/0305-4470/12/5/007}
}

@article{Anders2013,
	title        = {Thermodynamics of discrete quantum processes},
	author       = {Janet Anders and Vittorio Giovannetti},
	year         = 2013,
	month        = mar,
	journal      = {New Journal of Physics},
	publisher    = {{IOP} Publishing},
	volume       = 15,
	number       = 3,
	pages        = {033022},
	doi          = {10.1088/1367-2630/15/3/033022},
	url          = {https://doi.org/10.1088/1367-2630/15/3/033022}
}

@article{Korzekwa-TO-coh-2015,
	title        = {{Quantum Coherence, Time-Translation Symmetry, and Thermodynamics}},
	author       = {{Lostaglio}, Matteo and {Korzekwa}, Kamil and {Jennings}, David and {Rudolph}, Terry},
	year         = 2015,
	month        = apr,
	journal      = {Physical Review X},
	volume       = 5,
	number       = 2,
	pages        = {021001},
	doi          = {10.1103/PhysRevX.5.021001},
	keywords     = {03.65.Ta, 03.67.-a, 05.70.Ln, Foundations of quantum mechanics, measurement theory, Quantum information, Nonequilibrium and irreversible thermodynamics, Quantum Physics},
	eid          = {021001},
	adsurl       = {https://ui.adsabs.harvard.edu/abs/2015PhRvX...5b1001L},
	adsnote      = {Provided by the SAO/NASA Astrophysics Data System}
}

@article{Cwiklinski2015,
	title        = {Limitations on the Evolution of Quantum Coherences: Towards Fully Quantum Second Laws of Thermodynamics},
	author       = {\ifmmode \acute{C}\else \'{C}\fi{}wikli\ifmmode \acute{n}\else \'{n}\fi{}ski, Piotr and Studzi\ifmmode \acute{n}\else \'{n}\fi{}ski, Micha\l{} and Horodecki, Micha\l{} and Oppenheim, Jonathan},
	year         = 2015,
	month        = {Nov},
	journal      = {Phys. Rev. Lett.},
	publisher    = {American Physical Society},
	volume       = 115,
	pages        = 210403,
	doi          = {10.1103/PhysRevLett.115.210403},
	url          = {https://link.aps.org/doi/10.1103/PhysRevLett.115.210403},
	issue        = 21,
	numpages     = 5
}

@article{Rossnagel2016,
	title        = {A single-atom heat engine},
	author       = {{Ro{\ss}nagel}, Johannes and {Dawkins}, Samuel T. and {Tolazzi}, Karl N. and {Abah}, Obinna and {Lutz}, Eric and {Schmidt-Kaler}, Ferdinand and {Singer}, Kilian},
	year         = 2016,
	month        = apr,
	journal      = {Science},
	publisher    = {American Association for the Advancement of Science ({AAAS})},
	volume       = 352,
	number       = 6283,
	pages        = {325--329},
	doi          = {10.1126/science.aad6320},
	url          = {https://doi.org/10.1126/science.aad6320}
}

@article{Goldwater2019,
	title        = {Levitated electromechanics: all-electrical cooling of charged nano- and micro-particles},
	author       = {Goldwater,  Daniel and Stickler,  Benjamin A and Martinetz,  Lukas and Northup,  Tracy E and Hornberger,  Klaus and Millen,  James},
	year         = 2019,
	month        = {Jan},
	journal      = {Quantum Science and Technology},
	publisher    = {IOP Publishing},
	volume       = 4,
	number       = 2,
	pages        = {024003},
	doi          = {10.1088/2058-9565/aaf5f3},
	issn         = {2058-9565},
	url          = {http://dx.doi.org/10.1088/2058-9565/aaf5f3}
}

@article{Klatzow2018,
	title        = {Experimental Demonstration of Quantum Effects in the Operation of Microscopic Heat Engines},
	author       = {Klatzow, James and Becker, Jonas N. and Ledingham, Patrick M. and Weinzetl, Christian and Kaczmarek, Krzysztof T. and Saunders, Dylan J. and Nunn, Joshua and Walmsley, Ian A. and Uzdin, Raam and Poem, Eilon},
	year         = 2019,
	month        = {Mar},
	journal      = {Phys. Rev. Lett.},
	publisher    = {American Physical Society},
	volume       = 122,
	pages        = 110601,
	doi          = {10.1103/PhysRevLett.122.110601},
	url          = {https://link.aps.org/doi/10.1103/PhysRevLett.122.110601},
	issue        = 11,
	numpages     = 6
}

@article{Cottet2017,
	title        = {Observing a quantum Maxwell demon at work},
	author       = {Cottet, Nathana{\"e}l and Jezouin, S{\'e}bastien and Bretheau, Landry and Campagne-Ibarcq, Philippe and Ficheux, Quentin and Anders, Janet and Auff{\`e}ves, Alexia and Azouit, R{\'e}mi and Rouchon, Pierre and Huard, Benjamin},
	year         = 2017,
	month        = jul,
	journal      = {Proceedings of the National Academy of Sciences},
	publisher    = {Proceedings of the National Academy of Sciences},
	volume       = 114,
	number       = 29,
	pages        = {7561--7564},
	doi          = {10.1073/pnas.1704827114},
	url          = {https://doi.org/10.1073/pnas.1704827114}
}

@article{KorzekwaL2020-TO-advantage,
	title        = {Quantum Advantage in Simulating Stochastic Processes},
	author       = {Kamil Korzekwa and Matteo Lostaglio},
	year         = 2021,
	month        = apr,
	journal      = {Physical Review X},
	publisher    = {American Physical Society ({APS})},
	volume       = 11,
	number       = 2,
	pages        = {021019},
	doi          = {10.1103/physrevx.11.021019},
	url          = {https://doi.org/10.1103/physrevx.11.021019}
}

@article{Goold_2016,
	title        = {The role of quantum information in thermodynamics—a topical review},
	author       = {Goold,  John and Huber,  Marcus and Riera,  Arnau and Rio,  Lídia del and Skrzypczyk,  Paul},
	year         = 2016,
	month        = {Feb},
	journal      = {Journal of Physics A: Mathematical and Theoretical},
	publisher    = {IOP Publishing},
	volume       = 49,
	number       = 14,
	pages        = 143001,
	doi          = {10.1088/1751-8113/49/14/143001},
	issn         = {1751-8121},
	url          = {http://dx.doi.org/10.1088/1751-8113/49/14/143001}
}

@inbook{Vitagliano2019,
	title        = {Resource Theory of Quantum Thermodynamics: Thermal Operations and Second Laws},
	author       = {Ng, Nelly Huei Ying and Woods, Mischa Prebin},
	year         = 2018,
	booktitle    = {Thermodynamics in the Quantum Regime: Fundamental Aspects and New Directions},
	publisher    = {Springer International Publishing},
	address      = {Cham},
	pages        = {625--650},
	doi          = {10.1007/978-3-319-99046-0_26},
	isbn         = {978-3-319-99046-0},
	url          = {https://doi.org/10.1007/978-3-319-99046-0_26},
	editor       = {Binder, Felix and Correa, Luis A. and Gogolin, Christian and Anders, Janet and Adesso, Gerardo},
	abstract     = {Resource theories are a generic approach used to manage any valuable resource, such as entanglement, purity, and asymmetry. Such frameworks are characterized by two main elements: a set of predefined (free) operations and states, that one assumes to be easily obtained at no cost. Given these ground rules, one can ask: what is achievable by using such free operations and states? This usually results in a set of state transition conditions, that tell us if a particular state {\$}{\$} {\backslash}rho {\$}{\$}$\rho$may evolve into another state {\$}{\$} {\backslash}rho ' {\$}{\$}$\rho${\textasciiacutex}via the usage of free operations and states. We shall see in this chapter that thermal interactions can be modelled as a resource theory. The state transition conditions arising out of such a framework, are then referred to as ``second laws''. We shall also see how such state transition conditions recover classical thermodynamics in the i.i.d. limit. Finally, we discuss how these laws are applied to study fundamental limitations to the performance of quantum heat engines.}
}

@article{Acin-no-go-coherences-work,
	title        = {No-Go Theorem for the Characterization of Work Fluctuations in Coherent Quantum Systems},
	author       = {Perarnau-Llobet, Mart\'{\i} and B\"aumer, Elisa and Hovhannisyan, Karen V. and Huber, Marcus and Acin, Antonio},
	year         = 2017,
	month        = {Feb},
	journal      = {Phys. Rev. Lett.},
	publisher    = {American Physical Society},
	volume       = 118,
	pages        = {070601},
	doi          = {10.1103/PhysRevLett.118.070601},
	url          = {https://link.aps.org/doi/10.1103/PhysRevLett.118.070601},
	issue        = 7,
	numpages     = 6
}

@article{Faist_2015,
	title        = {Gibbs-preserving maps outperform thermal operations in the quantum regime},
	author       = {Faist,  Philippe and Oppenheim,  Jonathan and Renner,  Renato},
	year         = 2015,
	month        = {Apr},
	journal      = {New Journal of Physics},
	publisher    = {IOP Publishing},
	volume       = 17,
	number       = 4,
	pages        = {043003},
	doi          = {10.1088/1367-2630/17/4/043003},
	issn         = {1367-2630},
	url          = {http://dx.doi.org/10.1088/1367-2630/17/4/043003}
}

@article{Mazurek_2018,
	title        = {Decomposability and convex structure of thermal processes},
	author       = {Mazurek, Paweł and Horodecki, Michał},
	year         = 2018,
	month        = {May},
	journal      = {New Journal of Physics},
	publisher    = {IOP Publishing},
	volume       = 20,
	number       = 5,
	pages        = {053040},
	doi          = {10.1088/1367-2630/aac057},
	issn         = {1367-2630},
	url          = {http://dx.doi.org/10.1088/1367-2630/aac057}
}

@article{Jarzynski1997,
	title        = {Nonequilibrium Equality for Free Energy Differences},
	author       = {Jarzynski, Christopher},
	year         = 1997,
	journal      = {Physical Review Letters},
	volume       = 78,
	number       = 14,
	pages        = {2690--2693},
	doi          = {10.1103/PhysRevLett.78.2690}
}

@article{LostaglioKorzekwa2022_continuous_thermomajorization,
  author  = {Lostaglio, Matteo and Korzekwa, Kamil},
  title   = {Continuous thermomajorization and a complete set of laws for {Markovian} thermal processes},
  journal = {Physical Review A},
  volume  = {106},
  number  = {1},
  pages   = {012426},
  year    = {2022},
  doi     = {10.1103/PhysRevA.106.012426}
}

@article{KorzekwaLostaglio2022_optimizing_thermalization,
  author  = {Korzekwa, Kamil and Lostaglio, Matteo},
  title   = {Optimizing Thermalization},
  journal = {Physical Review Letters},
  volume  = {129},
  number  = {4},
  pages   = {040602},
  year    = {2022},
  doi     = {10.1103/PhysRevLett.129.040602}
}

@article{CzartowskiOliveiraKorzekwa2023_thermal_recall,
  author  = {Czartowski, Jakub and de Oliveira Junior, A. and Korzekwa, Kamil},
  title   = {Thermal Recall: Memory-Assisted {Markovian} Thermal Processes},
  journal = {PRX Quantum},
  volume  = {4},
  number  = {4},
  pages   = {040304},
  year    = {2023},
  doi     = {10.1103/PRXQuantum.4.040304},
  eprint  = {2303.12840},
  archivePrefix = {arXiv},
  primaryClass  = {quant-ph}
}

@article{vomEndeMalvettiDirrSchulteHerbrueggen2023_controlled_markovian,
  author  = {vom Ende, Frederik and Malvetti, Emanuel and Dirr, Gunther and Schulte-Herbr{\"u}ggen, Thomas},
  title   = {Exploring the Limits of Controlled {Markovian} Quantum Dynamics with Thermal Resources},
  journal = {Open Systems \& Information Dynamics},
  volume  = {30},
  number  = {1},
  pages   = {2350005},
  year    = {2023},
  doi     = {10.1142/S1230161223500051},
  eprint  = {2303.01891},
  archivePrefix = {arXiv},
  primaryClass  = {quant-ph}
}

@article{Ptaszynski2022_nonmarkovian_thermal_engines,
  author  = {Ptaszy{\'n}ski, Krzysztof},
  title   = {Non-{Markovian} thermal operations boosting the performance of quantum heat engines},
  journal = {Physical Review E},
  volume  = {106},
  number  = {1},
  pages   = {014114},
  year    = {2022},
  doi     = {10.1103/PhysRevE.106.014114},
  eprint  = {2203.14671},
  archivePrefix = {arXiv},
  primaryClass  = {quant-ph}
}

	
\onecolumngrid
\appendix

\section{Characterization of transitions possible by means of \texorpdfstring{$\mcoww$ (without \textbf{CO5}}{MCO0 without CO5})}\label{sec:appendA}


In this section of the Appendix, we will provide a nearly full characterization -- with one intriguing exception that remains open (when
either the initial state or the final state is the pure ground
state) -- of the set of transitions that can be done with no work cost by Coarse Operations \textbf{CO2} and \textbf{CO3}. 
In the next section, we will expand it by \textbf{CO5}.

\subsection{Notation}\label{sec:App_not}



For the reader's convenience, we begin by recalling several definitions and pieces of notation used in the main part of the paper.

As we have already mentioned in Section \ref{subsec:system}, we consider a two-dimensional quantum system with Hamiltonian $H$ parameterized as \hbox{$H(E)=\text{diag}(0, E)$}, where $E\geq 0$. 
The initial and final Hamiltonians are given by $H(E_0)$. 
We also assume that the initial state $\rho = (1-p_{\mathrm{IN}},p_{\mathrm{IN}})$ and the final state $\sigma = (1-p_{\mathrm{OUT}},p_{\mathrm{OUT}})$ are diagonal in the Hamiltonian basis. Throughout Appendix~\ref{sec:appendA}, we assume that \(p_\mathrm{IN} \in (0,1)\) and \(p_\mathrm{OUT} \in (0,1/2]\). The case \(p_\mathrm{OUT} \in (1/2,1)\) is analyzed in Appendix~\ref{sec:appendix-with-swaps}, while the cases of pure initial or final states are briefly discussed in the main text (see Section~\ref{sec:main_results}).
Furthermore, we consider an arbitrarily fixed  ambient  inverse temperature $\beta >0$. For any given energy $E$, the partition function $Z_E$ is given by $Z_E = 1 + e^{-\beta E}$, and the thermal state $\tau_E$  is equal to
\begin{equation}
    \tau_E = (1-g(E), g(E))
\end{equation}
with 
\begin{equation}\label{def:g}
    g(E) = \frac{1}{Z_E} e^{-\beta E}.    
\end{equation}
For any given probability $p\in(0,1/2]$, the energy $E(p)$ is defined so that the state $(1-p,p)$ is the Gibbs state with Hamiltonian $H(E(p))$, i.e.,
 \begin{equation}\label{def:E(p)}
     E(p) = -\frac{1}{\beta} \ln\left( \frac{p}{1-p} \right).
 \end{equation}
As we have already observed before, this immediately gives
\begin{equation}
    g(E(p))=p.
\end{equation}

\subsubsection{Free energy and work}
Let us recall that 
 $F(\Psi,E) = \la E \ra_{\Psi} - \beta^{-1} S(\Psi)$ is the Helmholtz Free Energy of the state $\Psi$ (cf.\ Section \ref{sec:av}). 
Additionally, the free energy of the thermal state $\tau_E$ corresponding to its Hamiltonian $H(E)$ is given by the formula 
\begin{equation}
    F(E) \coloneqq F\left(\tau_E,E\right) = g(E) E - \frac{1}{\beta} S(E)
\end{equation}
with $S(E) \coloneqq S\left(\tau_E\right)$ being the entropy of the Gibbs state with Hamiltonian $H(E)$. 
Let us indicate that the difference  $F(E') - F(E)$ can be expressed as
\begin{equation}
\label{eq:deltaTildeF}
    F(E') - F(E) \coloneqq \int\limits_{E}^{E'}g(u) \mathrm{d}u.
\end{equation}

In Section \ref{subsec:operations}, we have discussed the notion of the \textbf{CO3} operation. 
Recalling the notation introduced there, we write $\Delta E_i = E_{i+1} - E_i$ to denote the energy increment in a single \textbf{CO3} operation performed in the $i$-th step, i.e., $H(E_{i+1}) = H(E_i + \Delta E_i)$ for every $i\in\{0,\ldots, N-1\}$. 
We have also introduced
\begin{equation}\label{def:Delta_i}
	\Delta_i \coloneqq |\Delta E_i| = |E_{i+1} - E_{i}|
\end{equation}
and
\begin{equation}\label{def:qi}
	q_i \coloneqq g(E_i)= \frac{\mathrm{e}^{-\beta E_i}}{1 + \mathrm{e}^{-\beta E_i}}.
\end{equation}

Additionally, we have defined work as a random variable which, by convention, takes positive values whenever it is performed by the system, i.e.\ when the system's energy decreases ($\Delta E_i < 0$). 
From the perspective of a user of the system, this corresponds to a gain of work.

Conversely, work is negative when the system's energy increases, i.e.\ when $\Delta E_i > 0$; in this case, we say that the system acquires work, and thus the user simultaneously experiences work expenditure.

Formally, the work performed in the $i$-th step of a given process (thanks to which initial state $\rho$ is transformed into the final state $\sigma$) is described by a random variable $W_i$ with the following distribution:
\begin{equation}
    P(W_i = -\Delta E_i) = p_i,
    \quad
    P(W_i = 0) = 1 - p_i,
\end{equation}
where $p_i$ is the probability that the excited level is occupied, and $(1 - p_i)$ is the probability that the ground level is occupied. 

The total work for an $N$-step process is then simply the sum of the work performed in each step, that is, 
\begin{equation}\label{def:W_as_a_sum}
	W=\sum_{i=0}^{N-1} W_i.
\end{equation}
Let us emphasize that later in Appendix \ref{sec:appendA} (especially in Appendix \ref{sec:WII}) we shall analyze an isothermal process between energies  $E(p_{\mathrm{in}})$ and $E(p_{\mathrm{out}})$, in which case random variables $W_i$ have the following distributions: 
\begin{equation}\label{def:Wi}
    P(W_i=-\Delta E_i)=q_i\quad\text{and}\quad P(W_i=0)=1-q_i.
\end{equation}

Within this setting, where the primary goal is to characterize operations that can be performed with no work expenditure, it is convenient to introduce the work loss random variable, namely $-W$, which is, of course, fully determined by $W$ itself.

By saying ``no work expenditure'',  we formally mean that the event $\{-W > \epsilon\}$, for some $\epsilon > 0$, has vanishing probability. 
Conversely, whenever we show that this probability is strictly positive for some $\epsilon > 0$, this implies that given transition from an initial state $\rho$ to a final state $\sigma$ cannot be achieved with no work expenditure, and thus, in particular, is unattainable by means of $\mcoww$ (see Definition~\ref{def:MCO_0}).

\subsubsection{Path}\label{sec:appendix_path}
Referring to Section \ref{sec:tools}, recall that at each step of an $N$-step process we apply a thermalization map  
\begin{equation}
    \mathcal{C}_i \;=\; (1-\lambda_i)\,\ONE \;+\; \lambda_i\, \hat{\tau}_i, \qquad i \in \{0,1,\ldots,N-1\},
\end{equation}
where $\hat{\tau}_i$ is the constant map that replaces the state with the thermal state $\tau_i$ corresponding to the Hamiltonian $H(E_i)$, and $\ONE$ is the identity map. The parameters $(1-\lambda_i),\lambda_i \in [0,1]$ specify the probabilities with which either of the two behaviours (doing nothing or fully thermalizing) is realized.

As already noted in Section~\ref{sec:arbitrary-cycle}, an arbitrary process can be represented as a mixture of \emph{paths} $\{R_j\}_{j\in I}$, 
each occurring with probability~$\gamma_j$. By a path $R_j$ we mean a simplified, deterministic version of the process in which, at every step $i$, the corresponding thermalization parameter satisfies  
\begin{equation}
    \lambda_{i,j} \in \{0,1\},
\end{equation}
so that at step $i$ either nothing happens or a full thermalization is performed.

Each path $R_j$ can be depicted as a polygonal curve (cf. Fig. \ref{fig:Gibbs-general}) consisting of:
\begin{itemize}
    \item {horizontal segments}, each of length $\Delta_i$, representing steps in which the Hamiltonian is changed but no thermalization occurs, and  
    \item {vertical segments}, representing thermalizations, which appear exactly at those steps where $\lambda_{i,j} = 1$.
\end{itemize}

We will use $p_{\mathrm{in},j}$ and $p_{\mathrm{out},j}$ to denote the starting and ending points of each path $R_j$, respectively. 
    For the process that transforms state $\rho \coloneqq (1-p_{\mathrm{IN}}, p_{\mathrm{IN}})$ into state $\sigma \coloneqq (1-p_{\mathrm{OUT}}, p_{\mathrm{OUT}})$  the mixture of paths $\{ R_j\}_{j \in I}$ has to fulfill 
	\begin{equation}
		\mathop\forall_{j \in I} p_{\mathrm{in},j} = p_{\mathrm{IN}}
	\end{equation}
	and
	\begin{equation}
		\sum_{j \in I} \gamma_j p_{\mathrm{out},j} = p_{\mathrm{OUT}}.
	\end{equation}

    
	Let $Y$ be the random variable defined as $P(Y = p_{\mathrm{out},j}) = \gamma_j$. Then
	\begin{equation}
        \E Y = \sum_{j \in I} \gamma_j p_{\mathrm{out},j} = p_{\mathrm{OUT}}.
	\end{equation}
    \vspace{\baselineskip}

For notational simplicity, we drop the index~$j$ whenever possible and refer to a single path simply as~$R$. 
Similarly, we use the symbols $p_\mathrm{in}$ and $p_\mathrm{out}$ to denote the starting and ending points of this path, respectively. 

Any such path $R$ can be decomposed into three parts:
\begin{itemize}
    \item $R_{\mathrm{(I)}}$, the segment up to the first thermalization,  
    \item $R_{\mathrm{(II)}}$, the segment between the first and the last thermalization, and  
    \item $R_{\mathrm{(III)}}$, the segment starting immediately after the last thermalization.
\end{itemize}
This decomposition is illustrated in Fig.~\ref{fig:proofR}.  Let us emphasize that we consider here only paths containing at least two thermalizations (paths with at most one thermalization are briefly discussed in the main text; see Section~\ref{subsec:co5}).
\vspace{\baselineskip}

\begin{figure}
	\centering
	\includegraphics[width=0.9\linewidth]{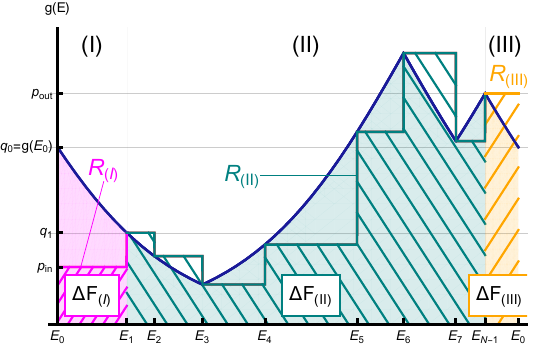}
	\caption{
        Visualization of the relevant concepts.
The path $R$ is divided into three parts $R_{\mathrm{(I)}}, R_{\mathrm{(II)}}, R_{\mathrm{(III)}}$, shown in magenta, teal, and orange, respectively.
Part (I) begins at $(E_0, p_{\mathrm{in}})$ and ends at $(E_1, q_1)$, where the first thermalization takes place at energy level $E_1$.
Part (II) comprises the portion of the path from the first thermalization at $E_1$ to the last thermalization at $E_{N-1}$ (which is $E_8$ in the diagram).
Part (III) starts at $(E_{N-1}, q_{N-1}=p_{\mathrm{out}})$ and continues until the energy returns to $E_0$, i.e., to the point $(E_0, p_{\mathrm{out}})$.
The total area under the Gibbs curve is
$\Delta F = \Delta F_{\mathrm{(I)}} + \Delta F_{\mathrm{(II)}} + \Delta F_{\mathrm{(III)}}$,
with each contribution shown in a~different color: violet for part (I), green for part (II), and yellow for part (III). 
}
	\label{fig:proofR}
\end{figure}


In what follows, let $\Delta F_i$ denote the difference between 
$F(E_{i+1})$ and $F(E_i)$.
Due to \eqref{eq:deltaTildeF} 
it can be written as 
\begin{equation}\label{def:DeltaFi}
    \Delta F_i \coloneqq \int\limits_{E_i}^{E_{i+1}}g(E) \mathrm{d}E.
\end{equation}
For the parts (I), (II), (III), described above, we, therefore, obtain
\begin{align}\label{eq:DeltaF(I)(II)(III)}
    \Delta F_\mathrm{(I)} 
    = \int\limits_{E_0}^{E_{1}}g(E) \mathrm{d}E,\quad
    \Delta F_\mathrm{(II)} 
     = \int\limits_{E_1}^{E_{N-1}}g(E) \mathrm{d}E,\quad
     \Delta F_\mathrm{(III)} 
    = \int\limits_{E_{N-1}}^{E_0}g(E) \mathrm{d}E,
\end{align}
where $E_1$ is the energy at which the first thermalization is performed, indicating the end of $R_{\mathrm{(I)}}$ and the beginning of $R_{\mathrm{(II)}}$, and $E_{N-1}$ is the energy at which the last thermalization is performed, indicating the end of $R_{\mathrm{(II)}}$ and the beginning of $R_{\mathrm{(III)}}$ (see Fig.\ \ref{fig:proofR}).  
Furthermore, let us indicate that
\begin{equation}\label{eq:deltaF}
    \Delta F = \sum_{i=0}^{N-1} \Delta F_i 
    =\Delta F_\mathrm{(I)}+
    \Delta F_\mathrm{(II)}+\Delta F_\mathrm{(III)}
    \blk
    =\int\limits_{E_0}^{E_0}g(E) \mathrm{d}E = 0. 
\end{equation}

We will denote the work of the path as $W$ (similarly to the case of the work of the whole process -- see Eqs.~\eqref{def:W_as_a_sum} and~\eqref{def:Wi} ; importantly, it will always be clear from the context whether we refer to the entire process or to a single path), where $W =  W_\mathrm{(I)}+ W_\mathrm{(II)}+ W_\mathrm{(III)}$, and $W_\mathrm{(I)}, W_\mathrm{(II)}$, $W_\mathrm{(III)}$ stand for the work of each of the parts (I), (II), (III), respectively. 


\begin{figure}
    \centering
    \includegraphics[width=0.75\linewidth]{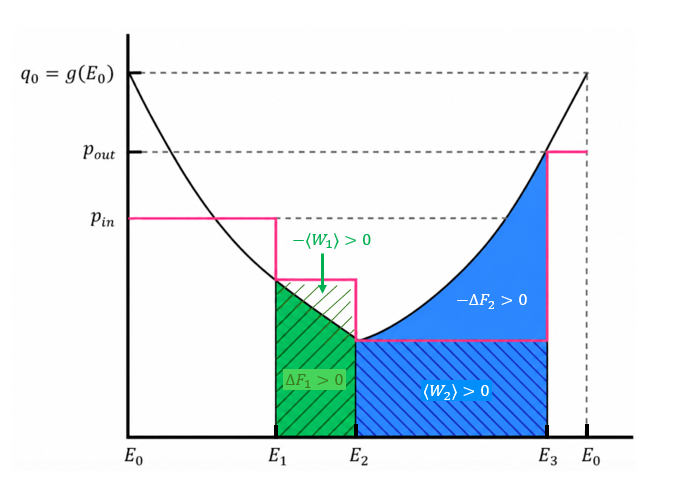}
    \caption{
    An example of the graphical visualization of the values $\langle W_i\rangle$ (hatched areas) and $\Delta F_i$ (shaded areas), as well as relations between them -- in two different cases: (1) $E_1<E_2$, meaning that the Gibbs curve over the interval $[E_1,E_2]$ is decreasing, and thus $\Delta F_1>0$ and $\langle W_1\rangle<0$; we have $\Delta F_1\leq - \langle W_1\rangle$ (2) $E_2<E_3$, meaning that the Gibbs curve over the interval $[E_2,E_3]$ is increasing, and thus $\Delta F_2<0$ and $\langle W_2\rangle >0$; we have $\langle W_2\rangle\leq - \Delta F_2$.
    }
    \label{fig:fig0}
\end{figure}

Another important important observation that we need to make here is as follows:
\begin{equation}\label{eq:<Wi>_DeltaFi}
    \left\langle W_i\right\rangle\leq -\Delta F_i\quad \text{for every}\quad i\in\{0,\ldots, N-1\},
\end{equation}
where the average work $\langle W_i\rangle$ in the $i$th step can be calculated directly from Eq.\ \eqref{def:Wi}, i.e., $\langle W\rangle=-\Delta E_i\,q_i$.


To see it (cf.\ Fig.\ \ref{fig:fig0}), note that for every $i\in\{0,\ldots, N-1\}$ the value $\langle W_i\rangle$ is equal to the area under the part of the path $R$ which starts at $E_i$ and ends at $E_{i+1}$ (with either plus or minus sign), while $\Delta F_i$ is equal to the area under the corresponding part of the Gibbs curve, that is, the one over the interval $[E_i, E_{i+1}]$, (again, with either plus or minus sign). Observe that whenever the Gibbs curve is increasing (decreasing) over $[E_i, E_{i+1}]$, the value $\langle W_i\rangle$ is positive (negative) and, simultaneously, $\Delta F_i$ is negative (positive). 
\vspace{\baselineskip}

In the end, let us also recall that $q_0$, which, according to \eqref{def:qi}, is given by $q_0=e^{-\beta E_0}/{(1+ e^{-\beta E_0})}$, describes the probability of occupying the excited level of the Gibbs curve at the beginning and at the end of the process (corresponding to the energy level $E_0$). Similarly, the values 
\begin{equation}\label{def:q1_gN-1}
    q_1= \frac{e^{-\beta E_1}}{1+ e^{-\beta E_1}} \quad\text{and} \quad q_{N-1}= \frac{e^{-\beta E_{N-1}}}{1+ e^{-\beta E_{N-1}} }
\end{equation}
describe the probabilities of occupying the excited level of the Gibbs curve at the moment of the first and the last thermalization, respectively, that is, the beginning of parts $R_\mathrm{(II)}$ and $R_\mathrm{(III)}$ of the path (note that $q_{N-1}=p_\mathrm{out}$, and thus  $E_{N-1}=E(p_{\mathrm{out}})$).

\subsection{Idea of the proof}\label{sec:idea}


First of all, let us point out that whenever 
$p_{\mathrm{OUT}}\in[q_0,p_{\mathrm{IN}}]$ (assuming $q_0 \leq p_{\mathrm{IN}}$), or 
$p_{\mathrm{OUT}}\in[p_{\mathrm{IN}},q_0]$ (if $p_{\mathrm{IN}} \leq q_0$),
the transition can be achieved with no work loss. 
To see this, let us consider the first case (the second one is analogous). 
Since $p_{\mathrm{OUT}}\in[q_0,p_{\mathrm{IN}}]$, we can write
\begin{equation}
    p_{\mathrm{OUT}} = (1-\gamma)\, p_{\mathrm{IN}} + \gamma q_0,
\qquad \gamma \in [0,1].
\end{equation}
One can observe that such a process consists of a single step, with no energy-level transformation. 
Formally, this corresponds to setting $E_1 = E_0$ (note that in this case parts~(II) and~(III) of the process degenerate and become trivial). 
Namely, with probability $1-\gamma$ nothing happens, which corresponds to a path $R_1$ that degenerates to a single point at $p_{\mathrm{in},1}=p_{\mathrm{IN}}$, and with probability $\gamma$, a full thermalization is performed at the very beginning, which corresponds to a path $R_2$ that immediately moves vertically from $p_{\mathrm{in},2}=p_{\mathrm{IN}}$ to $q_0$. 
Since path $R_1$ leads to the final state $p_{\mathrm{out},1}=p_{\mathrm{IN}}$, 
and path $R_2$ leads to the final state $p_{\mathrm{out},2}=q_0$, 
the overall process results in the convex mixture
\begin{equation}
   p_{\mathrm{OUT}} = (1-\gamma)p_{\mathrm{out},1} + \gamma p_{\mathrm{out},2}=(1-\gamma)p_{\mathrm{IN}} + \gamma q_0, 
\end{equation}
which is the desired output state.
\vspace{\baselineskip}

The aim of this paper is to deal with the opposite case, that is 
$p_{\mathrm{OUT}}\notin[q_0,p_{\mathrm{IN}}]$ (assuming $q_0 \leq p_{\mathrm{IN}}$), and  
$p_{\mathrm{OUT}}\notin[p_{\mathrm{IN}},q_0]$ (if $p_{\mathrm{IN}} \leq q_0$), which gives the following possible settings: 
\begin{equation}\label{cases-i-iv}
    \text{(i)}\quad p_{\mathrm{IN}}\leq q_0<p_{\mathrm{OUT}},\quad
    \text{(ii)}\quad q_0<p_{\mathrm{IN}}<p_{\mathrm{OUT}},\quad
    \text{(iii)}\quad p_{\mathrm{OUT}}<p_{\mathrm{IN}}\leq q_0,\quad
    \text{(iv)}\quad p_{\mathrm{OUT}}<q_0< p_{\mathrm{IN}}    
\end{equation}
to consider. 



Let us now describe the \textbf{idea of the proof}. Since we can think of a process as a probabilistic mixture of paths, we shall start by analyzing the work loss $-W$ of a single path $R$. 
Let us here highlight that, to make the reasoning more clear and compact, we identify with each other all those paths which, after being transformed in accordance with the \emph{Path Shrinking} method, are identical. 
The method itself, as well as the rationale for considering only those paths to which the \emph{Path Shrinking} method has already been applied, is given in Section \ref{sec:shrinking}, so we do not repeat it here.

We will first separately consider the possible work losses: $-W_\mathrm{(I)}$, $-W_\mathrm{(II)}$, and $-W_\mathrm{(III)}$ in each of the parts: (I), (II), and (III) of an arbitrary path $R$.

The necessary conclusions regarding the work loss $-W_{\mathrm{(II)}}$ follow primarily from the Jarzynski equality~\cite{Jarzynski1997} 
and they are collected in Section~\ref{sec:WII}.

Section~\ref{sec:WI_III}, in turn, is devoted to estimating the work losses $-W_\mathrm{(I)}$ and $-W_\mathrm{(III)}$.

The results obtained in Sections \ref{sec:WII} and \ref{sec:WI_III} are then combined in Section \ref{sec:entire_path} in order to conclude about the work loss $-W$ of the entire single path $R$, which is done  separately in two mutually exclusive cases:

\begin{itemize}
    \item[(a)] $q_0< q_{N-1}=p_{\mathrm{out}}$, covering settings (i) and (ii) in \eqref{cases-i-iv} (for a single path $R$), and guaranteeing that $-W_\mathrm{(III)}>0$ with probability bounded away from zero (cf.\ Fig.\ \ref{fig:cases_with_primes});
    \item[(b)] $q_{N-1}=p_{\mathrm{out}}<q_0$,  covering settings (iii) and (iv) in \eqref{cases-i-iv} (for a single path $R$), which case shall be further examined within two (not disjoint!) sub-cases: 
    \begin{itemize}
        \item[(b1)]  $p_{\mathrm{out}}<q_0$ and $q_1<q_0$, guaranteeing that $-W_\mathrm{(I)}>0$ with probability bounded away from zero (cf.\ Fig.\ \ref{fig:cases_with_primes});
        \item[(b2)] $p_{\mathrm{out}}<q_0$ and $p_{\mathrm{out}}<q
_1$, guaranteeing that $-W_\mathrm{(II)}>0$ with probability bounded away from zero (cf.\ Fig.\ \ref{fig:cases_with_primes}).
    \end{itemize}
\end{itemize}

More specifically, \textbf{in case~(a)} we show that the positive work loss $-W_{\mathrm{(III)}}$ scales appropriately with respect to the (possibly negative) contribution $-W_\mathrm{(I)}-W_\mathrm{(II)}$, with probability bounded away from zero. 
By this we mean that the strictly positive (i.e., greater than some positive constant) total work loss
\begin{equation}
  -W = -W_{\mathrm{(I)}} - W_{\mathrm{(II)}} - W_{\mathrm{(III)}}
\end{equation}
occurs with probability greater or equal to some positive constant.

Note that, under the assumption $q_0 < p_\mathrm{out}$, the Gibbs curve corresponding to the last part of the path is decreasing and -- as we will argue in the course of the proof -- this ensures that the work loss $-W_{\mathrm{(III)}}$ in part~(III) is strictly positive and non-vanishing with probability greater or equal to some positive constant.

The main challenge is then to compare it with the term $-W_{\mathrm{(I)}} - W_{\mathrm{(II)}}$. 
We can easily prove that $-W_{\mathrm{(I)}} \ge \Delta F_{\mathrm{(I)}}$ with probability bounded away from zero. 
Let us note here that if we could additionally show that the work loss in part~(II) is at least $\Delta F_{\mathrm{(II)}}$, we would immediately obtain the desired result.
 Unfortunately, this cannot be proven in general. 
However, we can show that, with probability bounded away from zero, the difference between $-W_\mathrm{(II)}$ and $\Delta F_{\mathrm{(II)}}$ is smaller than the corresponding difference in part~(III), that is,
\begin{equation}
    \Delta F_{\mathrm{(II)}} + W_\mathrm{(II)} < -W_\mathrm{(III)} - \Delta F_{\mathrm{(III)}}.
\end{equation}

To complete the proof, it then suffices to combine the results for all parts of the path and observe that the total free energy difference over the entire path is always zero (cf.~\eqref{eq:deltaF}).
\vspace{\baselineskip}


\textbf{In case (b)}, the reasoning is slightly more involved. Here, the key point to prove is either that the positive term $-W_{\mathrm{(I)}}$ scales appropriately relative to the (possibly negative) contribution $-W_{\mathrm{(II)}} - W_{\mathrm{(III)}}$ (with probability bounded away from zero), or that $-W_{\mathrm{(II)}}$ itself is positive while simultaneously $-W_{\mathrm{(I)}} \ge 0$ and $-W_{\mathrm{(III)}} \ge 0$ (again with probability bounded away from zero). The first occurs whenever $q_1 < q_0$ (case (b1)), whereas the second occurs if $p_{\mathrm{out}} < q_1$ (case (b2)).

For certian technical reasons, and to improve the clarity of the presented arguments, we shall introduce specific subcases (b1$^{\prime}$) and (b2$^{\prime}$) of cases (b1) and (b2), respectively.  
Note that case (b) is determined by the assumption $p_\mathrm{out} < q_0$, which allows us to fix a point $q_*$ between these two values; for example,  
\begin{equation}
    q_* = p_{\mathrm{out}}+\frac{q_0 - p_\mathrm{out}}{2}=\frac{p_\mathrm{out}+q_0}{2}.
\end{equation}
We then define the following cases:
\begin{itemize}
    \item[(b1$^{\prime}$)] $p_{\mathrm{out}} < q_0$ and $q_1 \leq q_*$,
    \item[(b2$^{\prime}$)] $p_{\mathrm{out}} < q_0$ and $q_1 > q_*$,
\end{itemize}
which are mutually disjoint and together cover the entire case (b) (see Fig. \ref{fig:cases_with_primes}).  


Most importantly, after this modification, all limiting cases involving a possible vanishing of the positive contribution $-W_{\mathrm{(I)}}$ (in the limit $q_1 \nearrow q_0$ in case (b1)) or a possible vanishing of the positive contribution $-W_{\mathrm{(II)}}$ (in the limit $q_1 \searrow p_{\mathrm{out}}$ in case (b2)) are now eliminated (see Sections \ref{sec:WI_III} and \ref{sec:entire_path} for details).
 
Proceeding to our argumentation, we can then conclude the following:
\begin{itemize}
    \item If we can ensure that $-W_\mathrm{(I)}$ is strictly positive and non-vanishing with probability bounded away from zero (\textbf{case (b1$^{\prime}$)}), we then have to -- similarly as in scenario (a) -- show that it scales appropriately with the possibly negative term $-W_\mathrm{(II)} - W_\mathrm{(III)}$. The reasoning in this case is analogous to that described above while discussing scenario (a).
    \item If, on the other hand, we can prove that $-W_\mathrm{(II)}$ is strictly positive and non-vanishing with probability bounded away from zero (\textbf{case (b2$^{\prime}$)}), then the claim follows immediately, since it is straightforward to verify that $W_\mathrm{(I)} = W_\mathrm{(III)} = 0$ can simultaneously hold with some fixed probability.
\end{itemize}

	
\begin{figure}
    \centering
    \includegraphics[width=0.8\linewidth]{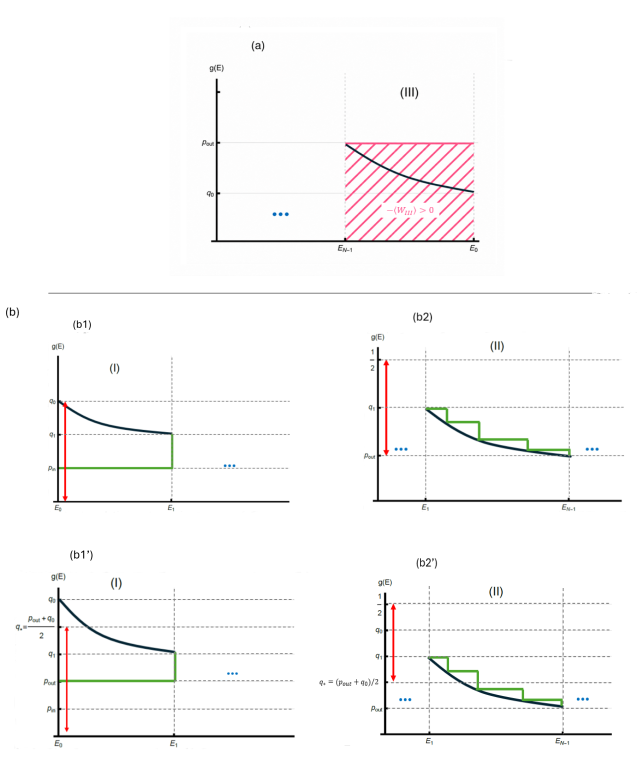}
    \caption{
    \underline{On the top}: an exemplary realization of \textbf{scenario (a)}: $q_0<p_\mathrm{out}$, in which we can guarantee a positive work loss $-W_\mathrm{(III)}$ in the third part of the path, with probability greater or equal to some positive constant. This follows from the fact that in this case the Gibbs curve associated with the last part of the path is decreasing.\newline    
        \underline{On the bottom}: 
        possible realizations of \textbf{scenario (b)}: $p_\mathrm{out}<q_0$, including: \textbf{case (b1)}  $p_\mathrm{out}<q_0$ and $q_1<q_0$ (\(q_1\) can take any value in the red range between \(0\) and \(q_0\)), which guarantees that $-W_\mathrm{(I)} > 0$ with probability of  $p_\mathrm{in}$ (the first thermalization occurs below $q_0$, making the Gibbs curve over part~(I) decreasing); and \textbf{case (b2)}: $p_\mathrm{out}<q_0$ and $p_\mathrm{out}<q_1$ (\(q_1\) can take any value in the red range between \(p_\mathrm{out}\) and \(1/2\)), which ensures that $-W_\mathrm{(II)} > 0$ with some positive probability (because the first thermalization $q_1$ lies above the last one $p_\mathrm{out}$). In contrast, neither part~(I) nor part~(III) guarantees a positive work loss, occuring with non-zero probability, in case~(b2). \newline
Moreover, \underline{the special instances of scenarios (b1) and (b2)} are also depicted:  
\textbf{case (b1$^{\prime}$)}: $p_\mathrm{out} < q_0$ and $q_1 < q_* = (p_\mathrm{out} + q_0)/2$ (\(q_1\) can take any value in the red range between \(0\) and \(q_*\)), and  
\textbf{case (b2$^{\prime}$)}: $p_\mathrm{out} < q_0$ and $q_1 > q_*$ (\(q_1\) can take any value in the red range between \(q_*\) and \(1/2\)).  
In contrast to the original cases (b1) and (b2), these refined subcases already are mutually exclusive.
    }
    \label{fig:cases_with_primes}
\end{figure}
Finally, in Section \ref{sec:paths_into_process}, we make conclusions about the work loss of the whole process, using what we know about the work loss of any single path.

\subsection{\texorpdfstring{Work of $R_{\mathrm{(II)}}$}{Work of R 2}}\label{sec:WII}

Let us recall that, according to the notation introduced above, $W_\mathrm{(II)} = \sum_{i=1}^{N-2} W_i$ and $\Delta F_\mathrm{(II)} = \sum_{i=1}^{N-2} \Delta F_i$ (see, for instance, Fig.\ \ref{fig:proofR}, where $N - 2 = 7$).


As already mentioned in Section \ref{sec:idea}, the key fact that allows us to appropriately estimate the work loss $-W_{\mathrm{(II)}}$ along part (II) of the path $R$ is the Jarzynski identity~\cite{Jarzynski1997}. In our setting, this fundamental identity applied to a~single $i$-th step takes the form
\begin{equation}\label{eq:jarzynski}
    \left\langle e^{\beta W_i} \right\rangle = e^{-\beta \Delta F_i}, \quad i \in \{0,1,\ldots, N-1\},
\end{equation}
where $\beta > 0$ is an arbitrarily fixed (inverse) ambient temperature.  
Using this identity, we can readily establish the following result:


\begin{lemma}\label{lem:w2}
    Let $\epsilon_\mathrm{(II)} > 0$. With probability of at least
    \begin{equation}
        1-e^{-\beta \epsilon_\mathrm{(II)}},
    \end{equation}
    we have
    \begin{equation}
        -W_{\mathrm{(II)}} \geq \Delta F_{\mathrm{(II)}} - \epsilon_{\mathrm{(II)}}.
    \end{equation}
\end{lemma}

\begin{proof}
One can easily observe that the Jarzynski identity \eqref{eq:jarzynski}, together with the mutual independence of the random variables $W_i$ (as well as its further consequences), further implies
\begin{equation}\label{eq:jarzynski_(II)}
        \left\langle e^{\beta W_{\mathrm{(II)}}} \right\rangle
        = \left\langle e^{\beta \sum_{i=1}^{N-2} W_i} \right\rangle
         = \left\langle \prod_{i=1}^{N-2} e^{\beta W_i} \right\rangle
         = \prod_{i=1}^{N-2} \left\langle e^{\beta W_i} \right\rangle 
         = \prod_{i=1}^{N-2} e^{-\beta \Delta F_i}
         = e^{-\beta \sum_{i=1}^{N-2} \Delta F_i}
         = e^{-\beta \Delta F_{\mathrm{(II)}}}.
\end{equation}

Now, since $W_{\mathrm{(II)}}$ (and therefore also $-W_{\mathrm{(II)}}$) is, in our setting, a discrete random variable (specifically,
$W_{\mathrm{(II)}} = \sum_{i=1}^{N-2} W_i$, where the variables $W_i$, $i \in \{1,2,\ldots, N-2\}$, are mutually independent, and satisfy  
$P(W_i = -\Delta E_i) = q_i$, $P(W_i = 0) = 1 - q_i$), it follows that for any $a \in \mathbb{R}$ we have
\begin{equation}
    P\left( -W_{\mathrm{(II)}} \leq a \right)
    = \sum_{j \in J :\; x_j \leq a} P\left( -W_{\mathrm{(II)}} = x_j \right),
\end{equation}
where $J$ is a finite index set, and $\{x_j\}_{j \in J}$ denotes the set of all possible values of $-W_{\mathrm{(II)}}$.  
We may therefore, in particular, write
\begin{equation}\label{eq:W(II)_discrete}
    P\left( -W_{\mathrm{(II)}} \leq \Delta F_{\mathrm{(II)}} - \epsilon_{\mathrm{(II)}} \right)
    = \sum_{j \in J :\; x_j \leq \Delta F_{\mathrm{(II)}} - \epsilon_{\mathrm{(II)}}} 
      P\left( -W_{\mathrm{(II)}} = x_j \right).
\end{equation}

Note that, under the assumption
\begin{equation}
    x_j \leq \Delta F_{\mathrm{(II)}} - \epsilon_{\mathrm{(II)}},
\end{equation}
we clearly have
\begin{equation}
    0 \leq \beta\left( \Delta F_{\mathrm{(II)}} - \epsilon_{\mathrm{(II)}} - x_j \right),
\end{equation}
and consequently,
\begin{equation}\label{eq:simple}
    1 = e^{0} \leq e^{\beta\left( \Delta F_{\mathrm{(II)}} - \epsilon_{\mathrm{(II)}} - x_j \right)}.
\end{equation}

Combining \eqref{eq:W(II)_discrete} and \eqref{eq:simple}, and using the discrete version of the Laplace transform, we obtain
\begin{align}
\begin{aligned}
    P\left( -W_{\mathrm{(II)}} \leq \Delta F_{\mathrm{(II)}} - \epsilon_{\mathrm{(II)}} \right)
    &\leq \sum_{j\in J:\; x_j \leq \Delta F_{\mathrm{(II)}} - \epsilon_{\mathrm{(II)}}}
        e^{\beta\left( \Delta F_{\mathrm{(II)}} - \epsilon_{\mathrm{(II)}} - x_j \right)}
        P\left( -W_{\mathrm{(II)}} = x_j \right)\\
    &= e^{\beta\left( \Delta F_{\mathrm{(II)}} - \epsilon_{\mathrm{(II)}} \right)}
        \sum_{j \in J:\; x_j \leq \Delta F_{\mathrm{(II)}} - \epsilon_{\mathrm{(II)}}}
        e^{-\beta x_j}\, P\left( -W_{\mathrm{(II)}} = x_j \right)\\
    &\leq e^{\beta\left( \Delta F_{\mathrm{(II)}} - \epsilon_{\mathrm{(II)}} \right)}
        \sum_{j \in J}
        e^{-\beta x_j}\, P\left( -W_{\mathrm{(II)}} = x_j \right)\\
    &= e^{\beta\left( \Delta F_{\mathrm{(II)}} - \epsilon_{\mathrm{(II)}} \right)}
       \left\langle e^{-\beta\left( -W_{\mathrm{(II)}} \right)} \right\rangle \\
    &= e^{\beta\left( \Delta F_{\mathrm{(II)}} - \epsilon_{\mathrm{(II)}} \right)}
       \left\langle e^{\beta W_{\mathrm{(II)}}} \right\rangle.
\end{aligned}
\end{align}

Applying Eq.\ \eqref{eq:jarzynski_(II)}, which follows directly from the Jarzynski identity, we get
\begin{equation}
    P\left( -W_{\mathrm{(II)}} \leq \Delta F_{\mathrm{(II)}} - \epsilon_{\mathrm{(II)}} \right)
        \leq e^{\beta\left( \Delta F_{\mathrm{(II)}} - \epsilon_{\mathrm{(II)}} \right)} 
             e^{-\beta \Delta F_{\mathrm{(II)}}}
        = e^{-\beta \epsilon_{\mathrm{(II)}}},
\end{equation}
whence we finally obtain

\begin{equation}
    P\left( -W_{\mathrm{(II)}} \geq\Delta F_{\mathrm{(II)}} - \epsilon_{\mathrm{(II)}} \right)\geq P\left( -W_{\mathrm{(II)}} >\Delta F_{\mathrm{(II)}} - \epsilon_{\mathrm{(II)}} \right)
        = 1 - P\left( -W_{\mathrm{(II)}} \leq \Delta F_{\mathrm{(II)}} - \epsilon_{\mathrm{(II)}} \right)
        \geq 1 - e^{-\beta \epsilon_{\mathrm{(II)}}},
\end{equation}
which completes the proof.

\end{proof}

\subsection{\texorpdfstring{Work of $R_{\mathrm{(I)}}$ and $R_{\mathrm{(III)}}$}{Work of R 1 and R 3}}\label{sec:WI_III}
While analysing \textbf{scenario (b2)}, i.e.\ the case where  $p_{\mathrm{out}} < q_0$ and $p_{\mathrm{out}} < q_1$, described in Section~\ref{sec:idea} (including its special instance~(b2$^{\prime}$)), it will be sufficient to use the following general result:

\begin{lemma}\label{lem:easy}
Let $p_\mathrm{in},p_\mathrm{out}\in(0,1)$ (in fact, the general assumption of the Appendix \ref{sec:appendA}, namely the exclusion of the possibility of performing any operation \textbf{CO5}, gives $p_\mathrm{out}\in(0,1/2]$).
With positive probability of at least $1-p_\mathrm{in}$ we have $-W_\mathrm{(I)} \geq 0$, and with positive probability of at least $1-p_\mathrm{out}$ we have $-W_\mathrm{(III)} \geq 0$.
\end{lemma}
\begin{proof}
    It suffices to observe that in parts $R_\mathrm{(I)}$ and $R_\mathrm{(III)}$ of the path $R$, either the ground states or the excited states are constantly occupied. 
    Indeed, the first thermalization is the endpoint of $R_\mathrm{(I)}$, so if the ground (excited) state is occupied at the beginning of the path, which happens with probability $1-p_\mathrm{in}$ ($p_\mathrm{in}$), it is occupied for the entire length of $R_\mathrm{(I)}$. 
    Similarly, if the ground (excited) state is occupied at the beginning of  $R_\mathrm{(III)}$, which happens with probability $1-p_\mathrm{out}$ ($p_\mathrm{out}$), it is occupied for the entire length of $R_\mathrm{(III)}$. 
    We therefore get
    \begin{align}
        P\left(-W_\mathrm{(I)}\geq 0\right)\geq P\left(-W_\mathrm{(I)}=0\right)=1-p_\mathrm{in}\qquad\text{and}\qquad 
        P\left(-W_\mathrm{(III)}\geq 0\right)\geq P\left(-W_\mathrm{(III)}=0\right)=1-p_\mathrm{out},        
    \end{align}
    and the proof is completed.
\end{proof}

In the remaining cases, a slightly more elaborate argument is required.
We analyse separately the following situations:
\begin{itemize}
    \item $q_0 < q_{N-1} = p_\mathrm{out}$ (that is, \textbf{scenario~(a)} introduced in Section~\ref{sec:idea}),
    \item $p_{\mathrm{out}} < q_0$ and $q_1 < q_0$ (that is, \textbf{scenario~(b1)}, as well as its special instance~(b1$^{\prime}$), both characterised in Section~\ref{sec:idea}).
\end{itemize}
Before formulating the appropriate lemmas, let us define the constants
\begin{equation}
	\label{eq:eps3}
	\epsilon_\mathrm{(III)}\left(p\right) 
    \coloneqq 
    \frac{1}{\beta} \ln \left(\frac{1+e^{\beta E_0}}{1+e^{\beta E\left(p\right)}}\right)
\end{equation}
and 
\begin{equation}
	\label{eq:eps1}
	\epsilon_\mathrm{(I)}\left(p\right) \coloneqq 
    \frac{1}{\beta} \ln \left(\frac{1+e^{\beta E(p)}}{1+e^{\beta E_0}}\right)
\end{equation}
for any given probability value $p\in(0,1/2]$, where $p\mapsto E(p)$ is defined via \eqref{def:E(p)} (as a~reminder,  $g(E_i)\coloneqq q_i$ for $i\in\{0,1, \ldots, N-1\}$, and we also use $E(p_{\mathrm{out}})=E_{N-1}$). Importantly, one can easily observe that $\epsilon_{\mathrm{(III)}}(p) = -\epsilon_{\mathrm{(I)}}(p)$, which implies that the positivity of one immediately implies the negativity of the other.

\begin{lemma}
	\label{lem:w1}
    Let $p_\mathrm{in}\in(0,1)$ and $p_\mathrm{out}\in(0,1/2]$.
	If $q_0 < q_{N-1} = p_\mathrm{out}$, then, with positive probability of at least $\min\{p_{\mathrm{in}},1-p_{\mathrm{in}}\}$ we have $ -W_\mathrm{(I)} \geq \Delta F_\mathrm{(I)}$, and with positive probability of at least $p_\mathrm{out}$ we have 
    \begin{equation}
		  -W_\mathrm{(III)} 
		\geq \Delta F_\mathrm{(III)} +\epsilon_\mathrm{(III)}(p_{\mathrm{out}}).
    \end{equation}
    where 
\begin{align}\label{def:eps3_out}
\epsilon_\mathrm{(III)}(p_{\mathrm{out}})
= 
\frac{1}{\beta} \ln \left(\frac{1+e^{\beta E_0}}{1+e^{\beta E\left(p_{\mathrm{out}}\right)}}\right).
    \end{align}
    Moreover, $\epsilon_\mathrm{(III)}(p_{\mathrm{out}}) > 0$.
\end{lemma}
\begin{proof}
    As it was already observed in the proof of Lemma \ref{lem:easy}, in parts $R_\mathrm{(I)}$ and $R_\mathrm{(III)}$ of the path $R$ either the ground states or the excited states are constantly occupied. 
    Moreover, upon previously applying the \emph{Path Shrinking} method to a given path, the corresponding Gibbs curve is monotonic in both part (I) and part (III). 
    
    Let us first analyze part (I). 
    In case of an increasing Gibbs curve and decreasing ergo 
    energy ($\Delta F_\mathrm{(I)}<0$), we have $-W_\mathrm{(I)}=0>\Delta F_\mathrm{(I)}$ with probability $1-p_{\mathrm{in}}$ (cf. Fig. \ref{fig:fig6}(ii)). 
    In the opposite case (see Fig. \ref{fig:fig6}(i)) we, in turn, get $-W_\mathrm{(I)}=\Delta E_\mathrm{(I)}$ with probability $p_\mathrm{in}$. 
    Obviously, the energy difference $\Delta E_\mathrm{(I)}$ is equal to the rectangular area below the horizontal line at height $1$, and so one can easily observe that it is not smaller than the free energy difference $\Delta F_\mathrm{(I)}$, equal to the area below the corresponding part of the Gibbs curve (cf. Fig. \ref{fig:proofR}). 
    In this connection, we further have $-W_\mathrm{(I)}=\Delta E_\mathrm{(I)}\geq \Delta F_\mathrm{(I)}$ with probability $p_\mathrm{in}$. 
    The proof of the first part of this lemma is, therefore, complete.

    Proceeding to part (III), we first observe that, since $R_{\mathrm{(III)}}$ starts at the Gibbs curve (more precisely, at point $(E_{N-1},q_{N-1})=(E(p_{\mathrm{out}}),p_{\mathrm{out}})$), and ends above it (by assumption of this lemma), the Gibbs curve has to be decreasing in this part, with path $R_\mathrm{(III)}$ entirely above it (see Fig.\ \ref{fig:fig6}(iii)).  
    Note that, with probability $p_{\mathrm{out}}$, the excited level is occupied for the entire length of $R_\mathrm{ (III)}$. 
    As a consequence, with probability $p_{\mathrm{out}}$, we have 

    
	\begin{equation}
	    -W_\mathrm{(III)} =  \Delta E_\mathrm{(III)} 
	    =  E_0 - E(p_{\mathrm{out}}) 
	    =  E_0 - E(p_{\mathrm{out}}) + \Delta F_\mathrm{(III)} - \Delta F_\mathrm{(III)},
	\end{equation}
	where (due to \eqref{eq:DeltaF(I)(II)(III)})

	\begin{equation}
		\Delta F_\mathrm{(III)} = 
        E_0-E\left(p_{\mathrm{out}}\right)
        - \frac{1}{\beta} \ln \left(\frac{1+e^{\beta E_0}}{1+e^{\beta E(p_{\mathrm{out}})}}\right),
	\end{equation}
	and so 
	\begin{equation}
		-W_\mathrm{(III)} = 
	    \Delta F_\mathrm{(III)} + \frac{1}{\beta} \ln \left(\frac{1+e^{-\beta E_0}}{1+e^{-\beta E(p_{\mathrm{out}})}}
    		\right)= \Delta F_\mathrm{(III)} + \epsilon_\mathrm{(III)}\left(p_{\mathrm{out}}\right).
	\end{equation}

It remains to justify that $\epsilon_{\mathrm{(III)}}(p_{\mathrm{out}})$ is strictly positive.
This follows immediately from the fact that, in part~(III), the Gibbs curve is decreasing while the ergo energy is increasing (in particular, $q_0 < p_{\mathrm{out}}$, and therefore $E_0 > E(p_{\mathrm{out}})$).
Consequently, $\epsilon_{\mathrm{(III)}}(p_{\mathrm{out}}) \geq 0$.
Moreover, equality can occur only in the degenerate case $E_0 = E(p_{\mathrm{out}})$, that is, when $p_{\mathrm{out}} = q_0$, which does not apply here.
\end{proof}

\renewcommand{\thesubfigure}{\roman{subfigure}}
\begin{figure}
    \subfloat[]{
        \includegraphics[width=0.48\linewidth]{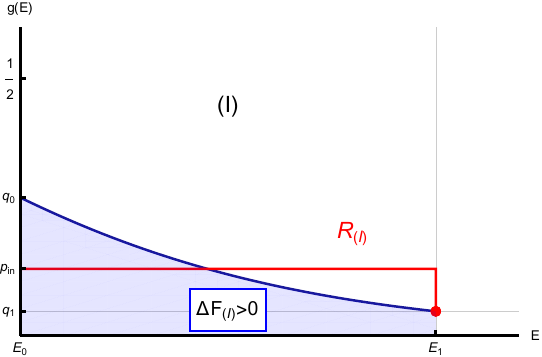}
    }
    \subfloat[]{
        \includegraphics[width=0.48\linewidth]{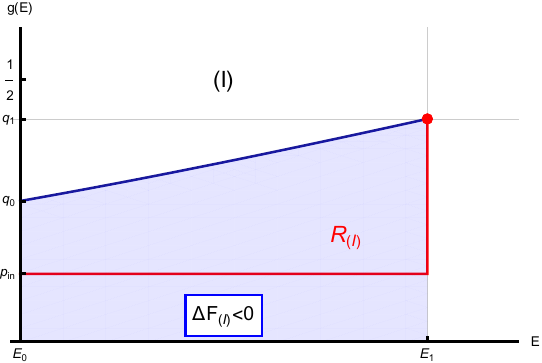}
    }\\
    \subfloat[]{
        \includegraphics[width=0.48\linewidth]{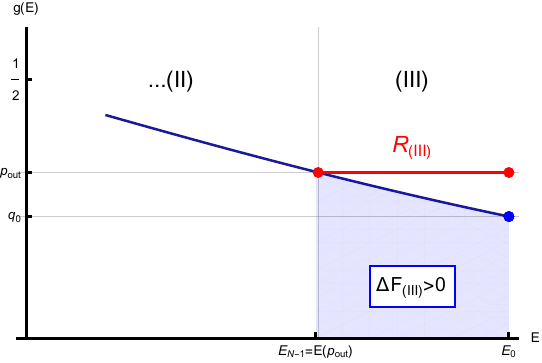}
    }
    \subfloat[]{
        \includegraphics[width=0.48\linewidth]{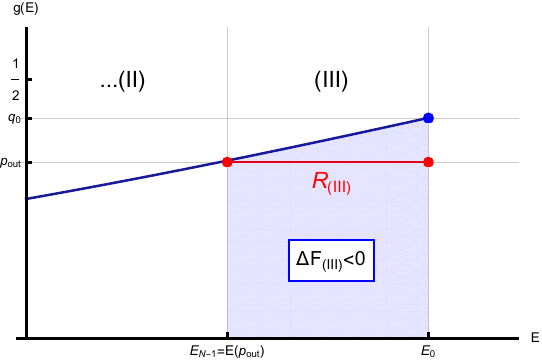}
    }
    \caption{
        \label{fig:fig6}
        Possible behaviors of a given path $R$ in its first and last part: 
        (i) the behavior of $R_{\mathrm{(I)}}$ in the case of decreasing Gibbs curve; 
        (ii) the behavior of  $R_{\mathrm{(I)}}$ in the case of increasing Gibbs curve; (iii) the behavior of $R_{\mathrm{(III)}}$ in the case of decreasing Gibbs curve; (iv) the behavior of $R_{\mathrm{(III)}}$ in the case of increasing Gibbs curve.  
    }
\end{figure}


\begin{lemma}
	\label{lem:w3}
	Let $p_\mathrm{in}\in(0,1)$ and $p_\mathrm{out}\in(0,1/2]$.
    If $p_{\mathrm{out}} < q_0$ (which is the opposite of what we assume in Lemma \ref{lem:w1}) and $q_1< q_0$, then, with positive probability of at least $(1-p_{\mathrm{out}})$ we have $ W_\mathrm{(III)} \leq - \Delta F_\mathrm{(III)}$, and with positive probability of at least  $p_\mathrm{in}$ we have 
     \begin{equation}\label{eq:WI}
    		 - W_\mathrm{(I)} 
		\geq \Delta F_\mathrm{(I)} +\epsilon_\mathrm{(I)}\left(q_1\right),
    \end{equation}
    where $\epsilon_\mathrm{(I)}(q_1)$  
    is given by 
    \begin{equation}\label{def:eps(I)(q1)}
        \epsilon_\mathrm{(I)}\left(q_1\right) \coloneqq 
    \frac{1}{\beta} \ln \left(\frac{1+e^{\beta E_1}}{1+e^{\beta E_0}}\right).
    \end{equation}
    Moreover, $\epsilon_\mathrm{(I)}(q_1) > 0$. 
\end{lemma}
\begin{proof}
The proof is analogous to that of Lemma~\ref{lem:w1}. 
Indeed, using similar arguments, we can justify that, since $p_\mathrm{out} < q_0$, the Gibbs curve in the third segment is increasing (see Fig.~\ref{fig:fig6}(iv)), and thus $\Delta F_\mathrm{(III)} < 0$. 
Moreover, under this assumption, we have $-W_\mathrm{(III)} = 0 > \Delta F_\mathrm{(III)}$ with probability $(1 - p_{\mathrm{out}})$.

Proceeding to part~(I), we note that the monotonicity of the Gibbs curve in this segment, together with the assumed condition $q_1 < q_0$, implies that it is decreasing (see Fig.~\ref{fig:fig6}(i)).  
Consequently, with probability $p_\mathrm{in}$, we have  
\begin{equation}
    -W_\mathrm{(I)} = \Delta E_\mathrm{(I)} = \Delta F_\mathrm{(I)} + \epsilon_\mathrm{(I)}(q_1).
\end{equation}

Finally, since the Gibbs curve is decreasing in part~(I) (in particular, $E_0 < E_1$), we conclude that $\epsilon_{\mathrm{(I)}}(q_1) \geq 0$.  
Equality occurs only in the degenerate case $E_0 = E_1$, i.e., when $q_0 = q_1$, which is not the case here.
\end{proof}

\begin{remark}
    Note that the inequalities $-W_{\mathrm{I}}\geq \Delta F_{\mathrm{I}}$ and $-W_{\mathrm{III}}\geq \Delta F_{\mathrm{III}}$ in Lemmas \ref{lem:w1} and \ref{lem:w2}, respectively, cannot be made strict, since we can thermalize just at the beginning (or respectively end), in which case part (I) (or respectively (III)) becomes trivial, and $-W_{\mathrm{(I)}}=\Delta F_{\mathrm{(I)}} = 0$ (respectively $-W_{\mathrm{(III)}}=\Delta F_{\mathrm{(III)}} = 0$). 
\end{remark}

\begin{remark}\label{rem:cases_with_primes}
As has already been highlighted in Section \ref{sec:idea}, cases (b1) and (b2) are not disjoint. The goal now is to redefine them so that they become disjoint. This procedure will -- most importantly --  eliminate the undesirable case of vanishing $\epsilon_{\mathrm{(I)}}$ in Lemma \ref{lem:w3} (when $q_1 \nearrow q_0$). Moreover, it will remove the dependence of the constant $\epsilon_{\mathrm{(I)}}$ on the parameter $q_1$, which is not fixed. Consequently, all constants in the subsequent statements will depend only on the given constants $q_0$, $p_{\mathrm{in}}$, and $p_{\mathrm{out}}$. 

The primary assumption in case (b) is that $p_{\mathrm{out}} < q_0$. We can therefore choose any point between these two values, that is, $q_* \in (p_{\mathrm{out}}, q_0)$; for example, let
\begin{equation}
q_* \coloneqq  \frac{p_\mathrm{out}+q_0}{2}.
\end{equation}
The redefined cases (b1) and (b2) are then given by:
\begin{itemize}
\item[(b1$^{\prime}$)] $p_{\mathrm{out}} < q_0$ and $q_1 \leq q_*$,
\item[(b2$^{\prime}$)] $p_{\mathrm{out}} < q_0$ and $q_1 > q_*$.
\end{itemize}
Let us note that cases (b1$^{\prime}$) and (b2$^{\prime}$) are particular instances of (b1) and (b2), respectively (see Fig. \ref{fig:cases_with_primes}). 
\end{remark}

Keeping in mind Remark \ref{rem:cases_with_primes}, we observe that Lemma \ref{lem:w3}, originally formulated under the assumptions of scenario (b1), also naturally holds in the case where $p_\mathrm{out}<q_0$ and $q_1 < (p_{\mathrm{out}}+q_0)/2$, which defines scenario (b1$^{\prime}$). As a~consequence, we obtain the following corollary of Lemma \ref{lem:w3}.

\begin{corollary}\label{corollary:w3_prime_case}
    Let $p_\mathrm{in}\in(0,1)$ and $p_\mathrm{out}\in(0,1/2]$. 
    If $p_{\mathrm{out}} < q_0$ and $q_1<(p_{\mathrm{out}}+q_0)/2$ (\textbf{case (b1$^{\prime}$)}), then with positive probability of at least $(1-p_{\mathrm{out}})$ we have $-W_{\mathrm{(III)}}\geq \Delta F_{\mathrm{(III)}}$, and with positive probability of at least $p_{\mathrm{in}}$ we have
    \begin{equation}
        -W_{\mathrm{(I)}}\geq \Delta F_{\mathrm{(I)}}+\overline{\epsilon}_{\mathrm{(I)}}\left(p_{\mathrm{out}}\right),
    \end{equation}
    where

     \begin{equation}\label{def:eps(I)}
        \overline{\epsilon}_{\mathrm{(I)}}\left(p_{\mathrm{out}}\right)\coloneqq\epsilon_\mathrm{(I)}\left(\tfrac{p_{\mathrm{out}}+q_0}{2}\right) = 
    \frac{1}{\beta} \ln \left(\frac{1+e^{\beta E\left(\tfrac{p_{\mathrm{out}}+q_0}{2}\right)}}{1+e^{\beta E_0}}\right).
    \end{equation}
    Moreover, $\overline{\epsilon}_{\mathrm{(I)}}(p_{\mathrm{out}}) > 0$.
\end{corollary}

\begin{proof}
   The proof of the first part of the assertion of Corollary~\ref{corollary:w3_prime_case} is identical to the corresponding argument in Lemma~\ref{lem:w3}. 
To prove the second part, observe first that the assumption $q_1 < (p_{\mathrm{out}}+q_0)/2$ clearly implies 
\begin{equation}
    q_1 <\tfrac{p_{\mathrm{out}}+q_0}{2}< q_0,
\end{equation}
and hence
\begin{equation}
    E_0 \leq E\left(\tfrac{p_{\mathrm{out}}+q_0}{2}\right) \leq E_1.
\end{equation}
Recalling the general definition of the function 
\(
    (0, q_0) \ni p \mapsto \epsilon_{\mathrm{(I)}}(p)
\)
(cf. Eq.~\eqref{eq:eps1}), we note that this function is decreasing, and takes positive values for $p\in(0,1/2]$ such that $E_0\leq E(p)$. 
Consequently,
\begin{equation}
    \epsilon_{\mathrm{(I)}}(q_1) > \epsilon_{\mathrm{(I)}}\left(\tfrac{p_{\mathrm{out}}+q_0}{2}\right)
        = \overline{\epsilon}_{\mathrm{(I)}}\!\left(p_{\mathrm{out}}\right)>0.
\end{equation}
Therefore,
\begin{equation}
    P\!\left(
        -W_{\mathrm{(I)}} 
        \geq \Delta F_{\mathrm{(I)}} 
             + \overline{\epsilon}_{\mathrm{(I)}}\!\left(p_{\mathrm{out}}\right)
    \right)
    \geq
    P\!\left(
        -W_{\mathrm{(I)}} 
        \geq \Delta F_{\mathrm{(I)}} 
             +\epsilon_{\mathrm{(I)}}(q_1)
    \right)
    \geq  p_{\mathrm{in}},
\end{equation}
which completes the proof.

\end{proof}

\subsection{Work of the entire path \texorpdfstring{$R$}{R}}\label{sec:entire_path}

We are now ready to estimate the work $W$ of the entire path $R$. 

Let us first analyze \textbf{case (a): $q_0<p_\mathrm{out}$}, pointed out in Section \ref{sec:idea}. 
Referring to Lemmas \ref{lem:w2} and \ref{lem:w1}, we obtain the following result:
\begin{theorem}\label{thm:case(a)altogether}
Let $p_\mathrm{in}\in(0,1)$ and $p_\mathrm{out}\in(0,1/2]$, and consider a path $R$ such that $q_0<p_\mathrm{out}$ (\textbf{case(a)}). Then 
\begin{equation}\label{def:eps(a)}
    \epsilon_{(a)}\left(p_\mathrm{out}\right)\coloneqq\frac{1}{2}\epsilon_\mathrm{(III)}\left(p_\mathrm{out}\right)=\frac{1}{2\beta} \ln \left(\frac{1+e^{\beta E_0}}{1+e^{\beta E\left(p_{\mathrm{out}}\right)}}\right)
\end{equation}
(cf. \eqref{def:eps3_out} in Lemma \ref{lem:w1}) is positive, and 
the work $W$ of $R$ fulfills

\begin{equation}
    -W\geq \epsilon_{(a)}(p_\mathrm{out})
\end{equation}
with a positive probability of at least
\begin{align}\label{def:p(a)}
    \eta_{(a)}\left(p_\mathrm{out}\right)\coloneqq\min\left\{p_\mathrm{in},1-p_\mathrm{in}\right\}p_\mathrm{out}\left(1- e^{-{\beta}\epsilon_{\mathrm{(a)}}\left(p_{\mathrm{out}}\right)}\right).
\end{align}
\end{theorem}
\begin{proof}
    Recall that a path $R$ consists of three parts $R = (R_\mathrm{(I)}, R_\mathrm{(II)}, R_\mathrm{(III)})$, and that we have denoted its work as \(W = W_\mathrm{(I)} + W_\mathrm{(II)} + W_\mathrm{(III)}\), where random variables $W_\mathrm{(I)}$, $W_\mathrm{(II)}$, $W_\mathrm{(III)}$ are independent. Moreover, the corresponding increment of free energy is denoted by \(\Delta F = \Delta F_\mathrm{(I)} + \Delta F_\mathrm{(II)} + \Delta F_\mathrm{(III)}\). 


By assumption, we have $q_0< p_\mathrm{out}$, which allows us to use Lemma \ref{lem:w1}, saying that

\begin{align}\label{eq:w1w3_b1}
    P\left(-W_\mathrm{(I)}-W_\mathrm{(III)}\geq \Delta F_\mathrm{(I)}+\Delta F_\mathrm{(III)}+\epsilon_\mathrm{(III)}\left(p_\mathrm{out}\right)\right)\geq \min\left\{p_\mathrm{in},1-p_\mathrm{in}\right\}p_\mathrm{out}.
\end{align}

Applying further Lemma \ref{lem:w2} with $\epsilon_\mathrm{(II)}\coloneqq\epsilon_\mathrm{(III)}(p_\mathrm{out})/2$, we get
\begin{align}
    P\left(-W_\mathrm{(II)}\geq\Delta F_\mathrm{(II)}-\frac{\epsilon_\mathrm{(III)}\left(p_\mathrm{out}\right)}{2}\right)
    \geq     1- e^{-\tfrac{\beta}{2}\epsilon_{\mathrm{(III)}}\left(p_{\mathrm{out}}\right)}=1- e^{-{\beta}\epsilon_{\mathrm{(a)}}\left(p_{\mathrm{out}}\right)}.
\end{align}
Combining it with Eq.\ \eqref{eq:w1w3_b1} and with the fact that $\Delta F_\mathrm{(I)}+\Delta F_\mathrm{(II)}+\Delta F_\mathrm{(III)}=0$ (cf.\ Eq.\ \eqref{eq:deltaF}), we then obtain
\begin{align}\label{eq:1}
P\left(-W\geq \epsilon_{(a)}\left(p_\mathrm{out}\right)\right)=
    P\left(-W\geq \frac{\epsilon_\mathrm{(III)}\left(p_\mathrm{out}\right)}{2}\right)
    \geq  \min\left\{p_\mathrm{in},1-p_\mathrm{in}\right\}p_\mathrm{out}\left(1- e^{-{\beta}\epsilon_{\mathrm{(a)}}\left(p_{\mathrm{out}}\right)}\right)=\eta_{(a)}\left(p_\mathrm{out}\right),
\end{align}
which completes the proof.

In the end, let us also highlight that $\epsilon_{(a)}(p_\mathrm{out})$ given by \eqref{def:eps(a)} is positive due to Lemma \ref{lem:w1}.
\end{proof}

Let us now consider the opposite case, that is, an arbitrary path $R$ such that $p_\mathrm{out}<q_0$ (\textbf{case (b)}, pointed out in Section \ref{sec:idea}). It is here convenient to consider sub-cases
\begin{itemize}
\item[(b1$^{\prime}$)] $p_{\mathrm{out}} < q_0$ and $q_1 \leq  (p_{\mathrm{out}}+q_0)/2$,
\item[(b2$^{\prime}$)] $p_{\mathrm{out}} < q_0$ and $q_1 > (p_{\mathrm{out}}+q_0)/2$
\end{itemize}
separately. 

Referring firstly to Lemma \ref{lem:w2} and Corollary \ref{corollary:w3_prime_case}, and secondly to Lemmas \ref{lem:w2} and \ref{lem:easy} we obtain the following two results concerning \textbf{cases (b1$^{\prime}$) and (b2$^{\prime}$)}, respectively:

\begin{lemma}\label{lem:(b1)}
    Let $p_\mathrm{in}\in(0,1)$ and $p_\mathrm{out}\in(0,1/2]$. If $p_{\mathrm{out}} < q_0$ and $q_1 \leq  (p_{\mathrm{out}}+q_0)/2$ \textbf{(case (b1$'$))}, 
    then the constants 
\begin{equation}\label{def:eps(b1)}
    \epsilon_\mathrm{(b1)}\left(p_{\mathrm{out}}\right)\coloneqq \frac{1}{2}\overline{\epsilon}_{\mathrm{(I)}}(p_{\mathrm{out}})=\frac{1}{2\beta} \ln \left(\frac{1+e^{\beta E\left(\tfrac{p_{\mathrm{out}}+q_0}{2}\right)}}{1+e^{\beta E_0}}\right)
\end{equation}
(cf. Eq.\ \eqref{def:eps(I)} in Corollary \ref{corollary:w3_prime_case}) and 
    \begin{equation}\label{def:p(b1)}
        \eta_\mathrm{(b1)}\!\left(p_{\mathrm{out}}\right)
        \coloneqq 
        \left(1-p_{\mathrm{out}}\right) p_{\mathrm{in}}
        \left(1 - e^{-{\beta}\epsilon_\mathrm{(b1)}\left(p_{\mathrm{out}}\right)}\right)
    \end{equation}
    are both positive, and moreover
    \begin{align}\label{eq:(b)1}
        P\!\left(-W \geq \epsilon_\mathrm{(b1)}\left(p_{\mathrm{out}}\right)\right)
        \geq \eta_\mathrm{(b1)}\!\left(p_{\mathrm{out}}\right).
    \end{align}
\end{lemma}

\begin{proof}
    If $p_{\mathrm{out}} < q_0$ and $q_1\leq (p_{\mathrm{out}}+q_0)/2$, we can apply Corollary \ref{corollary:w3_prime_case} to conclude that 
    \begin{align}
    P\left(-W_\mathrm{(I)}-W_\mathrm{(III)}\geq \Delta F_\mathrm{(I)}+\Delta F_\mathrm{(III)}+\overline{\epsilon}_\mathrm{(I)}\left(p_{\mathrm{out}}\right)\right)\geq \left(1-p_\mathrm{out}\right) p_\mathrm{in}.
\end{align}
Then, using Lemma \ref{lem:w2} with $\epsilon_\mathrm{(II)}\coloneqq\overline{\epsilon}_\mathrm{(I)}\left(p_{\mathrm{out}}\right)/2$, we obtain
\begin{align}
    P\left(-W_\mathrm{(II)}\geq\Delta F_\mathrm{(II)}-\frac{\overline{\epsilon}_\mathrm{(I)}\left(p_{\mathrm{out}}\right)}{2}\right)
    \geq     1-e^{-\tfrac{\beta}{2}\overline{\epsilon}_\mathrm{(I)}\left(p_{\mathrm{out}}\right)}=1 - e^{-{\beta}\epsilon_\mathrm{(b1)}\left(p_{\mathrm{out}}\right)}.
\end{align}
Combining it with Eq.\ \eqref{eq:w1w3_b1} and with the fact that $\Delta F_\mathrm{(I)}+\Delta F_\mathrm{(II)}+\Delta F_\mathrm{(III)}=0$ (cf.\ Eq.\ \eqref{eq:deltaF}), we finally get
\begin{align}
    P\left(-W\geq \epsilon_\mathrm{(b1)}\left(p_{\mathrm{out}}\right)\right)=
    P\left(-W\geq \frac{\overline{\epsilon}_\mathrm{(I)}\left(p_{\mathrm{out}}\right)}{2}\right)\geq  \left(1-p_\mathrm{out}\right)p_\mathrm{in}
\left(1 - e^{-{\beta}\epsilon_\mathrm{(b1)}\left(p_{\mathrm{out}}\right)}\right)=\eta_\mathrm{(b1)}\left(p_\mathrm{out}\right).
\end{align}
The proof is therefore completed.
\end{proof}

\begin{lemma}\label{lem:(b2)}
    Let $p_\mathrm{in}\in(0,1)$ and $p_\mathrm{out}\in(0,1/2]$. If $p_{\mathrm{out}} < q_0$ and $q_1>(p_{\mathrm{out}}+q_0)/2$ (\textbf{case (b2$^{\prime}$)}), then the constants
    \begin{equation}\label{def:eps(II)final}
        {\epsilon}_{\mathrm{(b2)}}\!\left(p_{\mathrm{out}}\right)
            \coloneqq \frac{1}{2}\left(E\!\left(p_{\mathrm{out}}\right) - E\left(\tfrac{p_{\mathrm{out}}+q_0}{2}\right)
           + \frac{1}{\beta} 
             \ln\!\left(
                 \frac{1 + e^{\beta E\left(\tfrac{p_{\mathrm{out}}+q_0}{2}\right)}}
                      {1 + e^{\beta E(p_{\mathrm{out}})}}
             \right)\right) 
    \end{equation}
    and 
    \begin{equation}\label{def:p(b2)}
        \eta_{(b2)}\left(p_\mathrm{out}\right)\coloneqq
\left(1-p_\mathrm{in}\right)\left(1-p_\mathrm{out}\right)\left(1-e^{-\beta{\epsilon}_{\mathrm{(b2)}}\!\left(p_{\mathrm{out}}\right)}\right)
    \end{equation}
    are both positive, and moreover
    \begin{equation}\label{eq:(b2)}
    P\left(-W\geq {\epsilon}_{\mathrm{(b2)}}\!\left(p_{\mathrm{out}}\right)\right)
    \geq  
    \eta_{(b2)}\left(p_\mathrm{out}\right).
\end{equation}
\end{lemma}
\begin{proof}
Let us first invoke the general result stated at the beginning of Section~\ref{sec:WI_III}, namely Lemma~\ref{lem:easy}, which yields
\begin{equation}\label{eq:W(I)(III)=0}
    P\!\left(-W_{\mathrm{(I)}} - W_{\mathrm{(III)}} \geq 0\right)
    \geq (1 - p_{\mathrm{in}})(1 - p_{\mathrm{out}}).
\end{equation}

Next, note that the considered case (b2$^{\prime}$) in particular implies the inequality $p_{\mathrm{out}} < q_1$, and under this assumption we have
\begin{equation}
    \Delta F_{\mathrm{(II)}}=\int_{E_1}^{E_{N-1}}g(E)\,dE=\int_{E_1}^{E\left(p_\mathrm{out}\right)}\tfrac{e^{-\beta E}}{1+e^{-\beta E}}\,dE > 0.
\end{equation}
Indeed, this follows immediately from the fact that the integral of the Gibbs curve over the interval $[E_1,E(p_{\mathrm{out}})]$ is strictly positive in this regime (provided that $E_1<E(p_{\mathrm{out}})$, which is the case here). Consequently, we may apply Lemma~\ref{lem:w2} with 
\[
    \epsilon_{\mathrm{(II)}} = \frac{1}{2}\Delta F_{\mathrm{(II)}},
\]
and therefore obtain
\begin{equation}\label{eq:estim_on_W(II)_using_DeltaF(II)}
    P\!\left( -W_{\mathrm{(II)}} \geq \tfrac{1}{2}\Delta F_{\mathrm{(II)}} \right)
        \geq 1 - e^{-\tfrac{\beta}{2}\Delta F_{\mathrm{(II)}}}.
\end{equation}
However, since we wish to keep both ${\epsilon}_{\mathrm{(b2)}}$ and 
$p_{(b2)}$ independent of $q_1$, we must provide an estimate of 
$\Delta F_{\mathrm{(II)}}$ that does not involve $q_1$ but remains strictly positive. 
By assumption, we have $q_1 > (p_{\mathrm{out}}+q_0)/2 > p_{\mathrm{out}}$, and therefore also 
\begin{equation}
    E(p_{\mathrm{out}}) > E\left(\frac{p_{\mathrm{out}}+q_0}{2}\right) > E_1.
\end{equation}
Consequently, we obtain 
\begin{equation}\label{eq:estim_DeltaF(II)}
\begin{aligned}
    \frac{1}{2}\Delta F_{\mathrm{(II)}}
        &\geq \frac{1}{2}\int_{E\left(\tfrac{p_{\mathrm{out}}+q_0}{2}\right)}^{E(p_{\mathrm{out}})}
                \frac{e^{-\beta E}}{1 + e^{-\beta E}}\, dE \\
        &= \frac{1}{2}\left(E\!\left(p_{\mathrm{out}}\right) - E\left(\frac{p_{\mathrm{out}}+q_0}{2}\right)
           + \frac{1}{\beta} 
             \ln\!\left(
                 \frac{1 + e^{\beta E\left(\tfrac{p_{\mathrm{out}}+q_0}{2}\right)}}
                      {1 + e^{\beta E(p_{\mathrm{out}})}}
             \right) \right)
        = {\epsilon}_{\mathrm{(b2)}}\!\left(p_{\mathrm{out}}\right)>0.
\end{aligned}
\end{equation}
The right-hand side of \eqref{eq:estim_DeltaF(II)} is strictly positive, as follows from the already mentioned fact that the integral of the Gibbs curve over the interval 
$[\,E((p_{\mathrm{out}}+q_0)/2),\, E(p_{\mathrm{out}})\,]$ is positive whenever $E((p_{\mathrm{out}}+q_0)/2) < E(p_{\mathrm{out}})$. Obviously, we then also get
\begin{equation}\label{eq:estim_probab}
    1-e^{-\tfrac{\beta}{2}\Delta F_\mathrm{(II)}}
        \geq 1-e^{-\beta{\epsilon}_{\mathrm{(b2)}}\left(p_\mathrm{out}\right)}>0.
\end{equation}
Combinig \eqref{eq:estim_on_W(II)_using_DeltaF(II)}, \eqref{eq:estim_DeltaF(II)} and \eqref{eq:estim_probab}, we obtain
\begin{align}\label{eq:W(II)_final}
    \begin{aligned}
    P\!\left( -W_{\mathrm{(II)}} \geq {\epsilon}_{\mathrm{(b2)}}\left(p_\mathrm{out}\right) \right)    
        \geq P\!\left( -W_{\mathrm{(II)}} \geq \tfrac{1}{2}\Delta F_{\mathrm{(II)}} \right)
                    \geq 1 - e^{-\tfrac{\beta}{2}\Delta F_{\mathrm{(II)}}}
            \geq 1-e^{-\beta{\epsilon}_{\mathrm{(b2)}}\left(p_\mathrm{out}\right)}.
    \end{aligned}
\end{align}

Finally, due to \eqref{eq:W(I)(III)=0} and \eqref{eq:W(II)_final}, we get
\begin{equation}
    P\left(-W\geq {\epsilon}_{\mathrm{(b2)}}\left(p_\mathrm{out}\right)\right)
        \geq  \left(1-p_\mathrm{in}\right)\left(1-p_\mathrm{out}\right)\left(1-e^{-\beta{\epsilon}_{\mathrm{(b2)}}\left(p_\mathrm{out}\right)}\right)
                    =\eta_{(b2)}\left(p_\mathrm{out}\right),
\end{equation}
which completes the proof.
\end{proof}

We are now ready to formulate a statement summarizing the analysis of the entire \textbf{scenario~(b)}, which is a direct and straightforward corollary of Lemmas~\ref{lem:(b1)} and~\ref{lem:(b2)}.
\begin{theorem}\label{thm:case(b)altogether}
Let $p_\mathrm{in}\in(0,1)$ and $p_\mathrm{out}\in(0,1/2]$. 
    For constants ${\epsilon}_\mathrm{(b1)}(p_\mathrm{out})$, ${\epsilon}_\mathrm{(b2)}(p_\mathrm{out})$, $p_{(b1)}(p_\mathrm{out})$, and $p_{(b2)}(p_\mathrm{out})$ given by \eqref{def:eps(b1)}, \eqref{def:eps(II)final}, \eqref{def:p(b1)}, and \eqref{def:p(b2)}, define 
    \begin{equation}\label{def:eps(b)}
        \epsilon_\mathrm{(b)}\left(p_\mathrm{out}\right)\coloneqq \min\left\{{\epsilon}_\mathrm{(b1)}\left(p_\mathrm{out}\right), {\epsilon}_\mathrm{(b2)}\left(p_\mathrm{out}\right)\right\},
    \end{equation}
    as well as
    \begin{equation}\label{def:p(b)}
        \eta_\mathrm{(b)}\left(p_\mathrm{out}\right)\coloneqq \min\left\{\eta_\mathrm{(b1)}\left(p_\mathrm{out}\right),\eta_\mathrm{(b2)}\left(p_\mathrm{out}\right)\right\}.
    \end{equation}

    Let $R$ be an arbitrary path such that $q_0>p_\mathrm{out}$ (\textbf{case(b)}). Then the constants $\epsilon_{(b)}(p_\mathrm{out})$ and $p_{(b)}(p_\mathrm{out})$ are both positive and 
the work $W$ of $R$ fulfills 
\begin{equation}
    -W\geq \epsilon_\mathrm{(b)}\left(p_\mathrm{out}\right)
\end{equation}
with a positive probability of at least $\eta_\mathrm{(b)}(p_\mathrm{out})$.
\end{theorem}
\begin{proof}
   First of all, note that the constants $\epsilon_\mathrm{(b1)}(p_\mathrm{out})$ and $\epsilon_\mathrm{(b2)}(p_\mathrm{out})$, and thus also $\eta_\mathrm{(b1)}(p_\mathrm{out})$ and $\eta_\mathrm{(b2)}(p_\mathrm{out})$, are positive whenever $p_\mathrm{out}<q_0$, which is precisely the situation covered by scenario~(b). This follows directly from the definitions of these constants. Hence, the constants $\epsilon_\mathrm{(b)}(p_\mathrm{out})$ and $\eta_\mathrm{(b)}(p_\mathrm{out})$ are also positive in this case.

One can then easily observe that the probability
\begin{equation}
    P\!\left(-W \geq \epsilon_\mathrm{(b)}(p_\mathrm{out})\right)
    = P\!\left(
        -W \geq 
        \min\Bigl\{
            \epsilon_\mathrm{(b1)}(p_\mathrm{out}),
            \epsilon_\mathrm{(b2)}(p_\mathrm{out})
        \Bigr\}
    \right)
\end{equation}
can be lower bounded by either
\begin{equation}\label{term1}
    P\!\left(-W \geq \epsilon_\mathrm{(b1)}\left(p_\mathrm{out}\right)\right),
\end{equation}
or
\begin{equation}\label{term2}
    P\!\left(-W \geq \epsilon_\mathrm{(b2)}\left(p_\mathrm{out}\right)\right).
\end{equation}

Depending on which of the two mutually exclusive cases (b1$^{\prime}$) or (b2$^{\prime}$) holds, we can further lower bound either the term~\eqref{term1} -- by $p_{(b1)}(p_\mathrm{out})$ (according to Lemma~\ref{lem:(b1)}), or the term~\eqref{term2} -- by $p_{(b2)}(p_\mathrm{out})$ (according to Lemma~\ref{lem:(b2)}). In either scenario, we therefore obtain
\[
    P\!\left(-W \geq \epsilon_\mathrm{(b)}(p_\mathrm{out})\right)
    \;\geq\;
    \min\!\left\{
        \eta_\mathrm{(b1)}(p_\mathrm{out}),
        \eta_\mathrm{(b2)}(p_\mathrm{out})
    \right\}
    = \eta_\mathrm{(b)}(p_\mathrm{out}).
\]
This completes the proof.

\end{proof}

\subsection{Combining paths into process}\label{sec:paths_into_process}

In the previous sections, we have shown under which conditions and with which probability the work of a single path is negative.  
Now we are ready to combine multiple paths into a process. 
In the theorem below we state the main result of Appendix \ref{sec:appendA}. Let us recall that $q_i$ is given by \eqref{def:qi} for any $i\in\{0,\ldots,N-1\}$. Moreover, let us recall that  the constants $\epsilon_\mathrm{(a)}, \epsilon_\mathrm{(b)},\eta_\mathrm{(a)},\eta_\mathrm{(b)}$ are defined as follows (cf. Eqs. \eqref{def:eps(a)}, \eqref{def:eps(b)}, \eqref{def:p(a)}, \eqref{def:p(b)}):
\begin{align}  &\epsilon_\mathrm{(a)}\left(p\right)\coloneqq \frac{1}{2\beta} \ln \left(\frac{1+e^{\beta E_0}}{1+e^{\beta E\left(p\right)}}\right);\\        &\epsilon_\mathrm{(b)}\left(p\right)\coloneqq
    \frac{1}{2}\min\left\{\frac{1}{\beta} \ln \left(\frac{1+e^{\beta E\left(\tfrac{p+q_0}{2}\right)}}{1+e^{\beta E_0}}\right)\,,\,\int_{E\left(\tfrac{p+q_0}{2}\right)}^{E(p)}
                \frac{e^{-\beta E}}{1 + e^{-\beta E}}\, dE\right\}\\
    &\nonumber\hspace{9,2mm} =
    \frac{1}{2}\min\left\{\frac{1}{\beta} \ln \left(\frac{1+e^{\beta E\left(\tfrac{p+q_0}{2}\right)}}{1+e^{\beta E_0}}\right)\,,\, E\!\left(p\right) - E\left(\tfrac{p+q_0}{2}\right)
           + \frac{1}{\beta} 
             \ln\!\left(
                 \frac{1 + e^{\beta E\left(\tfrac{p+q_0}{2}\right)}}
                      {1 + e^{\beta E(p)}}
             \right)\right\};\\
&\eta_\mathrm{(a)}\left(p\right)\coloneqq
    \min\left\{p_\mathrm{in},1-p_\mathrm{in}\right\}p\left(1-e^{-\beta \epsilon_\mathrm{(a)}\left(p\right)}\right)=
    \min\left\{p_\mathrm{in},1-p_\mathrm{in}\right\}p\left(1-  \left(\frac{1+e^{\beta E\left(p\right)}}{1+e^{\beta E_0}}\right)^{1/2} \right);\\
&\eta_\mathrm{(b)}\left(p\right)
    \coloneqq\min\left\{
        \left(1-p\right)
        p_{\mathrm{in}}
        \left(1 - \left(\frac{1+e^{\beta E_0}}{1+e^{\beta E\left(\tfrac{p+q_0}{2}\right)}}\right)^{1/2}\right)\, ,\right.\\ 
        &\hspace{25mm}\nonumber
        \left.\left(1-p_\mathrm{in}\right)\left(1-p\right)\left(1-e^{-\frac{\beta}{2}\left(E\!\left(p\right) - E\left(\tfrac{p+q_0}{2}\right)
          \right)}
          \left(\frac{1+e^{\beta E\left(p\right)}}{1+e^{\beta E\left(\tfrac{p+q_0}{2}\right)} }\right)^{1/2}
          \right) \right\}
\end{align}
for any $p\in(0,1/2]$. The above can be simplified thanks to substituting Eq.\ \eqref{def:E(p)}, that is the equality 
\begin{equation}
    E(p)=-\frac{1}{\beta}\ln\left(\frac{p}{1-p}\right)=\frac{1}{\beta}\ln\left(\frac{1-p}{p}\right).
\end{equation}
Indeed, we can write
\begin{align}
\label{def:epsilon(a)}    &\epsilon_\mathrm{(a)}\left(p\right)\coloneqq \frac{1}{2\beta}\ln\left(\frac{p}{q_0}\right);\\
\label{def:epsilon(b)}    &\epsilon_\mathrm{(b)}\left(p\right)\coloneqq
    \frac{1}{2\beta}\min\left\{\ln\left(\frac{2q_0}{p+q_0}\right),\ln\left(\frac{2(1-p)}{2-p-q_0}\right)\right\};\\    
\label{def:p(a)_}    &\eta_\mathrm{(a)}\left(p\right)\coloneqq
    \min\left\{p_\mathrm{in},1-p_\mathrm{in}\right\}p\left(1-\left(\frac{q_0}{p}\right)^{1/2}\right);\\   
\label{def:p(b)_}    &\eta_\mathrm{(b)}\left(p\right)
    \coloneqq\min\left\{(1-p)p_\mathrm{in}\left(1-\left(\frac{p+q_0}{2q_0}\right)^{1/2}\right),\left(1-p_\mathrm{in}\right)(1-p)\left(1-\left(\frac{p\left(2-p-q_0\right)}{(1-p)\left(p+q_0\right)}\right)^{1/2}\right)\left(\frac{p+q_0}{2p}\right)^{1/2}\right\}
\end{align}
for any $p\in(0,1/2]$.


\begin{theorem}
	\label{thm:main_appendix}
    Let $p_\mathrm{IN}\in(0,1)$. 
    For arbitrary $p_\mathrm{OUT}\in(0,1/2]$ and $q_0\in(0,1/2]$ 
    such that $p_\mathrm{OUT}\not=q_0$ consider one of the scenarios (i)-(iv) listed at the beginning of Section \ref{sec:idea} (which exclude those cases in which a~transition from $\rho \coloneqq (1-p_{\mathrm{IN}}, p_{\mathrm{IN}})$ to $\sigma \coloneqq (1-p_{\mathrm{OUT}}, p_{\mathrm{OUT}})$ can be achieved with no work loss). 
    Then, while transforming the state $\rho \coloneqq (1-p_{\mathrm{IN}}, p_{\mathrm{IN}})$ into the state $\sigma \coloneqq (1-p_{\mathrm{OUT}}, p_{\mathrm{OUT}})$ by operations \textbf{CO2} and \textbf{CO3}, one has to spend at least the following amount of work
	\begin{equation}\label{def:epsilon_final}
        \epsilon\coloneqq \max\left\{\epsilon_\mathrm{(a)}\left(\frac{p_\mathrm{OUT}+q_0}{2}\right)
        \,,\,
        \epsilon_\mathrm{(b)}\left(\frac{p_\mathrm{OUT}+q_0}{2}\right)\right\}  > 0
    \end{equation}
    with positive probability of at least
    \begin{equation}\label{def:p_final}
        \eta\coloneqq \max\left\{\,
        \eta_\mathrm{(a)}\left(\frac{p_\mathrm{OUT}+q_0}{2}\right)\frac{p_\mathrm{OUT}-q_0}{2}\,,\,
        \eta_\mathrm{(b)}\left(\frac{p_\mathrm{OUT}+q_0}{2}\right)\frac{q_0-p_\mathrm{OUT}}{2}
        \,\right\}.
    \end{equation}
\end{theorem}
\begin{proof}
	First, observe that an arbitrary process can be described by a mixture of paths $\{ R_j\}_{j \in I}$, each occurring with probability $\gamma_j$. 
    We will use $p_{\mathrm{in},j}$ and $p_{\mathrm{out},j}$ to denote the starting and ending points of any single path $R_j$. 
    For the process that transforms state $\rho \coloneqq (1-p_{\mathrm{IN}}, p_{\mathrm{IN}})$ into state $\sigma \coloneqq (1-p_{\mathrm{OUT}}, p_{\mathrm{OUT}})$  the mixture of paths $\{ R_i\}_{i \in I}$ has to fulfill 
	\begin{equation}
		\mathop\forall_{j \in I} p_{\mathrm{in},j} = p_{\mathrm{IN}}
	\end{equation}
	and
	\begin{equation}
		\sum_{j \in I} \gamma_j p_{\mathrm{out},j} = p_{\mathrm{OUT}}.
	\end{equation}
    As before, the symbols $W$ and $W_j$, $j \in I$, denote the work of the whole process and of the individual paths (whose mixture describes this process), respectively.
    
	Let $Y$ be the random variable defined as $P(Y = p_{\mathrm{out},j}) = \gamma_j$. Then
	\begin{equation}\label{eq:EY}
        \E Y = \sum_{j \in I} \gamma_j p_{\mathrm{out},j} = p_{\mathrm{OUT}}.
	\end{equation}
 
	Since, for all paths, we have $Y \in [ 0,1 ]$, we can use Markov's inequality in the following form (see Lemma \ref{lem:reverseMarkov}, Eqs.\ \eqref{eq:Markov1} and \eqref{eq:Markov2} for details):
	\begin{equation}
 P(Y<\alpha)\geq 1-\frac{\E Y}{\alpha}\quad\text{and}\quad
		P(Y > \alpha) \geq 1 - \frac{1 - \E Y}{1-\alpha}\quad\text{for any}\quad \alpha>0.
	\end{equation} 
	
 Let us choose 
 $\alpha\coloneqq (p_{\mathrm{OUT}} + q_0)/2$, which in \textbf{case (a)}: $q_0<p_\mathrm{OUT}$ gives us
\begin{equation}
        P\left (Y > \frac{p_{\mathrm{OUT}} + q_0}{2}\right ) 
        \geq 1 - \frac{1 - \E Y}{1-\frac{p_{\mathrm{OUT}} + q_0}{2}} 
        = 1 - \frac{1 - p_{\mathrm{OUT}}}{1-\frac{p_{\mathrm{OUT}} + q_0}{2}} 
        = \frac{p_{\mathrm{OUT}} - q_0}{2 - p_{\mathrm{OUT}} - q_0} 
        \geq \frac{p_{\mathrm{OUT}} - q_0}{2}>0,
	\end{equation}
    where the first equality follows from Eq.\ \eqref{eq:EY}. 
 We have already justified in Theorem \ref{thm:case(a)altogether} that $\epsilon_\mathrm{(a)}(p_{\mathrm{out},j})$ and $\eta_\mathrm{(a)}(p_{\mathrm{out},j})$ are positive for any $j\in I$, provided that $0<q_0<p_{\mathrm{out},j}\leq 1/2$. Let us then draw the attention of the reader to the fact that (within case (a)) for each $j\in I$ such that $1/2\geq p_{\mathrm{out},j} > (p_{\mathrm{OUT}} + q_0)/2>q_0>0$ we have 
 \begin{equation}     \epsilon_\mathrm{(a)}\left(p_{\mathrm{out},j}\right)>\epsilon_\mathrm{(a)}\left(\tfrac{p_\mathrm{OUT}+q_0}{2}\right)>0;\qquad         \eta_\mathrm{(a)}\left(p_{\mathrm{out},j}\right)>\eta_\mathrm{(a)}\left(\tfrac{p_\mathrm{OUT}+q_0}{2}\right)>0.     
 \end{equation}
which follows immediately from the definitions \eqref{def:epsilon(a)} and \eqref{def:p(a)} of $\epsilon_\mathrm{(a)}$ and $\eta_\mathrm{(a)}$, respectively (the functions $p\mapsto\epsilon_\mathrm{(a)}(p)$ and $p\mapsto\eta_\mathrm{(a)}(p)$ are increasing for $p\in(q_0,1/2)$). 

As a consequence, we know from Theorem \ref{thm:case(a)altogether} that for each path $R_j$, such that $1/2\geq p_{\mathrm{out},j} > (p_{\mathrm{OUT}} + q_0)/2>q_0>0$, 
 its work loss $-W_j$ is not smaller than
	\begin{equation}
    \epsilon_\mathrm{(a)}\left(\frac{p_{\mathrm{OUT}} + q_0}{2}\right)
	 \end{equation}  
	with a positive probability of at least
	\begin{equation}
	    \eta_\mathrm{(a)}\left(\frac{p_{\mathrm{OUT}} + q_0}{2}\right)
     ,
	 \end{equation}   
  whence we finally get
 \begin{align}\label{eq:process(a)}
 \begin{aligned}
     P\left(-W\geq  \epsilon_\mathrm{(a)}\left(\frac{p_{\mathrm{OUT}} + q_0}{2}\right)\right)
     &\geq 
     P\left(-W\geq  \epsilon_\mathrm{(a)}\left(\frac{p_{\mathrm{OUT}} + q_0}{2}\right)\;\Big\lvert\;Y>\frac{p_\mathrm{OUT}+q_0}{2}\right)P\left(Y>\frac{p_\mathrm{OUT}+q_0}{2}\right)\\
     &\geq 
    \eta_\mathrm{(a)}\left(\frac{p_{\mathrm{OUT}} + q_0}{2}\right)\frac{p_\mathrm{OUT}-q_0}{2},
\end{aligned}
 \end{align}
 which is the reasonable (that is, non-negative) estimation, 
 provided that $q_0<p_\mathrm{OUT}$. \vspace{2mm}
 
 In \textbf{case (b)}: $p_\mathrm{OUT}<q_0$, we proceed analogously as in case (a). First of all, we have
 \begin{equation}
     P\left(Y<\frac{p_\mathrm{OUT}+q_0}{2}\right)\geq 1-\frac{p_\mathrm{OUT}}{\frac{p_\mathrm{OUT}+q_0}{2}} 
     =\frac{q_0-p_\mathrm{OUT}}{p_\mathrm{OUT}+q_0}>0.
 \end{equation} 
 Let us also recall that it is claimed in Theorem \ref{thm:case(b)altogether} that $\epsilon_\mathrm{(b)}(p_{\mathrm{out},j})$ and $\eta_\mathrm{(b)}(p_{\mathrm{out},j})$ are positive for any $j\in I$, provided that $1/2\geq q_0>p_{\mathrm{out},j}>0$. 
 Moreover, referring to the definitions \eqref{def:epsilon(b)} and \eqref{def:p(b)_} of $\epsilon_\mathrm{(b)}$ and $\eta_\mathrm{(b)}$, respectively, one can observe that (within case (b)) for each $j\in I$ such that $0<p_{\mathrm{out},j}<(p_\mathrm{OUT}+q_0)/2\leq 1/2$ we have 
  \begin{equation}     \epsilon_\mathrm{(b)}\left(p_{\mathrm{out},j}\right)>\epsilon_\mathrm{(b)}\left(\tfrac{p_\mathrm{OUT}+q_0}{2}\right)>0;\qquad           \eta_\mathrm{(b)}\left(p_{\mathrm{out},j}\right)>\eta_\mathrm{(b)}\left(\tfrac{p_\mathrm{OUT}+q_0}{2}\right)>0     
 \end{equation}
(the functions $p\mapsto\epsilon_\mathrm{(b)}(p)$ and $p\mapsto\eta_\mathrm{(b)}(p)$ are decreasing for $p\in(0,q_0)$ what can be shown for example by using symbolic computation software). 
Theorem  \ref{thm:case(b)altogether} then yields that for each path $R_j$ such that $0<p_{\mathrm{out},j}<(p_\mathrm{OUT}+q_0)/2\leq 1/2$, its work loss $-W_j$ is not smaller than
\begin{equation}
    \epsilon_\mathrm{(b)}\left(\frac{p_\mathrm{OUT}+q_0}{2}\right)
\end{equation}
with a positive probability of at least
\begin{equation}
	    \eta_\mathrm{(b)}\left(\frac{p_{\mathrm{OUT}} + q_0}{2}\right).
     \end{equation}
As a consequence, we finally get
\begin{align}\label{eq:process(b)}
\begin{aligned}
    P\left(-W\geq \epsilon_\mathrm{(b)}\left(\frac{p_\mathrm{OUT}+q_0}{2}\right)\right)
    &\geq
    P\left(-W\geq \epsilon_\mathrm{(b)}\left(\frac{p_\mathrm{OUT}+q_0}{2}\right)\;\Big\lvert\;Y<\frac{p_\mathrm{OUT}+q_0}{2}\right)P\left(Y<\frac{p_\mathrm{OUT}+q_0}{2}\right)\\
     &\geq 
    \eta_\mathrm{(b)}\left(\frac{p_{\mathrm{OUT}} + q_0}{2}\right)\frac{q_0-p_\mathrm{OUT}}{2},
\end{aligned}
\end{align}
which is the reasonable (that is, non-negative) estimation, 
 provided that $p_\mathrm{OUT}<q_0$. 

 To complete the proof is stuffices to observe that for arbitrairly fixed $p_\mathrm{OUT}$ and $q_0$ such that $p_\mathrm{OUT}\neq q_0$ only one of the values $\epsilon_\mathrm{(a)}((p_\mathrm{OUT}+q_0)/2)$ or $\epsilon_\mathrm{(b)}((p_\mathrm{OUT}+q_0)/2)$ can be positive (which is a consequence of a similar observation made after defining $\epsilon_{\mathrm{(I)}}$ and $\epsilon_{\mathrm{(III)}}$ in Eqs.~\eqref{eq:eps3} and~\eqref{eq:eps1}). To be more precise, $\epsilon_\mathrm{(a)}((p_\mathrm{OUT}+q_0)/2)>0$ if and only if $p_\mathrm{OUT}>q_0$, while  $\epsilon_\mathrm{(b)}((p_\mathrm{OUT}+q_0)/2)>0$ if and only if $p_\mathrm{OUT}<q_0$. Thus, $\epsilon$ given by \eqref{def:epsilon_final} can be expressed as
 \begin{equation}
     \epsilon=
\begin{cases}
\epsilon_\mathrm{(a)}\left(\frac{p_\mathrm{OUT}+q_0}{2}\right), & \text{when}\quad p_\mathrm{OUT}>q_0\quad\text{(case(a))},\\[2mm]
\epsilon_\mathrm{(b)}\left(\frac{p_\mathrm{OUT}+q_0}{2}\right), & \text{when}\quad  p_\mathrm{OUT}<q_0\quad\text{(case(b))}.
\end{cases}
 \end{equation}
 Using the same arguments, we also get
 \begin{equation}
     \eta=
\begin{cases}
\eta_\mathrm{(a)}\left(\frac{p_\mathrm{OUT}+q_0}{2}\right)\frac{p_\mathrm{OUT}-q_0}{2}, &  \text{when}\quad  p_\mathrm{OUT}>q_0\quad\text{(case(a))},\\[2mm]
\eta_\mathrm{(b)}\left(\frac{p_\mathrm{OUT}+q_0}{2}\right)\frac{q_0-p_\mathrm{OUT}}{2}, &  \text{when}\quad  p_\mathrm{OUT}<q_0\quad\text{(case(b))}.
\end{cases}
 \end{equation}
 for $\eta$ defined as in \eqref{def:p_final}.

 Combining Eqs.\ \eqref{eq:process(a)} and \eqref{eq:process(b)} then gives the desired assertion of the theorem.

\end{proof}

\section{Characterization of transitions possible by means of \texorpdfstring{$\mcoww$}{MCO-zero}}
\label{sec:appendix-with-swaps}

In this part of the appendix, we provide an almost complete characterization (excluding only the case where $\rho$ is a~pure ground state, i.e.\ when $\rho_\mathrm{IN} = 0$) of the transitions that can be implemented to transform $\rho$ into $\sigma$ at no work cost by means of Coarse Operations without auxiliary systems, i.e., by operations from the class $\mcoww$.  
This requires analyzing the most general setting in which \textbf{CO5}-type operations are allowed.
  
As discussed in Section~\ref{subsec:co5}, this entails considering paths that may include swaps.  
Such paths acquire a new feature, that is, a \emph{reflection} about the point $1/2$.  
This reflection can occur only when the excited energy level is set to $0$, that is, when the Gibbs curve takes the value $1/2$.  
Furthermore, we have argued that, after applying the \emph{Path Shrinking} procedure to a given path $R$, any \emph{swap down}  operation can occur exclusively in part~$(I)$ of that path. This follows from the fact that any thermalization brings the path below, or exactly to, the level $1/2$, and a subsequent swap can only move it upward (hence a \emph{swap down} can only occur before the first thermalization, that is, in part~$(I)$ of the path).

In general, a swap may occur between any two neighboring thermalizations. Crucially, multiple swaps between the same pair of thermalizations can always be simplified: an even number of swaps is equivalent to no swap at all, whereas an odd number of swaps reduces to a single effective swap. As a consequence, after applying the \emph{Path Shrinking} procedure, any two swaps are necessarily interlaced by thermalizations. Combining this observation with the argument above completes the reasoning and justifies that a \emph{swap down} can occur exclusively in part~$(I)$ of the path.

Now, we shall show that, in a certain sense, the new setup can be reduced to the old one. 
To this end, we will argue that Lemmas~\ref{lem:w2}--\ref{lem:w1} and Corollary \ref{corollary:w3_prime_case}, which constitute the building blocks for the remaining statements in Appendix~\ref{sec:appendA}, continue to hold (up to a few subtle details that do not affect the general argument, nor the final conclusion presented in Appendix~\ref{sec:appendA}).

Before presenting the analogues of Lemmas~\ref{lem:w2}--\ref{lem:w1} and Corollary \ref{corollary:w3_prime_case}, let us first examine how the shape of a~single path \( R \) may change in the presence of swaps.

As before, we discuss each part~(I), (II) and~(III) of a path \( R \) separately:
\begin{itemize}[align=left]
\item[\textit{(I)}] \textit{Swap performed in part (I) (i.e., before the first thermalization).}
We can distinguish here three cases depending on the starting point of the path. 
The first in which $q_0 \leq p_{\mathrm{in}}\leq 1/2$ is presented in the Fig.\ \ref{fig:swap-RI-RIII}(i). 
The second one in which $p_{\mathrm{in}} > 1/2$ is displayed in Fig.\ \ref{fig:swap-RI-RIII}(ii). 
Finally, the third when $p_{\mathrm{in}} < q_0\leq1/2$ is depicted in the Fig.\ \ref{fig:swap-RI-RIII}(iii). 

\renewcommand{\thesubfigure}{\roman{subfigure}}
\begin{figure}
    \subfloat[]{
        \includegraphics[width=0.23\linewidth]{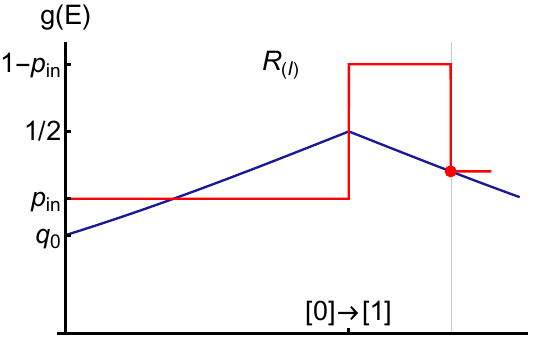}
    }
    \subfloat[]{
        \includegraphics[width=0.23\linewidth]{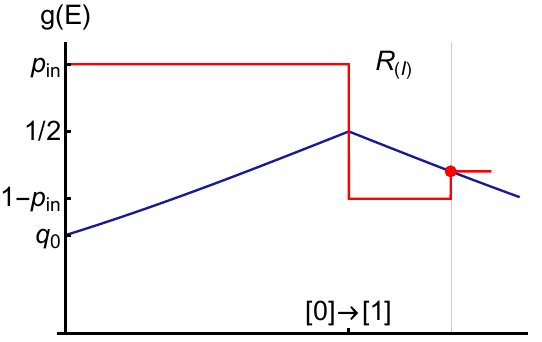}
    }
    \subfloat[]{
        \includegraphics[width=0.23\linewidth]{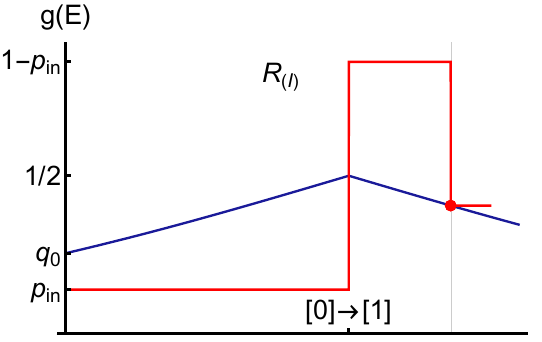}
    }
    \subfloat[]{
        \includegraphics[width=0.23\linewidth]{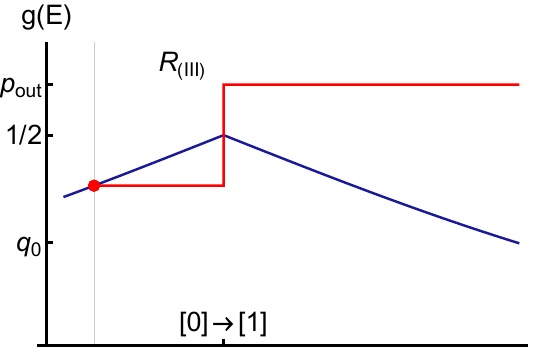}
    }
    \caption{
        Additional possibilities for the shapes of individual parts of a path \( R \) that arise due to the presence of swaps are illustrated. 
        For part~(I), we distinguish three possible cases:  
        (i) the case in which \( R_{\mathrm{(I)}} \) satisfies \( q_0 \leq p_{\mathrm{in}} \leq 1/2 \);  
        (ii) the case \( p_{\mathrm{in}} > 1/2 \);  
        (iii) the case \( p_{\mathrm{in}} < q_0 \).  
        For part~(III), there is a single relevant case:  
        (iv) \( p_{\mathrm{out}} \geq 1/2 \).
    }
	\label{fig:swap-RI-RIII}
\end{figure}


\item[\textit{(II)}]
\textit{Swap operation is performed in part (II).} 
Swaps may occur between any two thermalisations in part (II). An exemplary swap, taking place between the $i$-th and $(i+1)$-st thermalisation for some $i \in \{1,\ldots, N-2\}$, is shown in Fig.\,\ref{fig:W(II)swap}.

\item[\textit{(III)}]
\textit{Swap is performed in part (III) (i.e., after the last thermalisation).}
In this setting, there is only one case that needs to be considered, namely when 
\( p_{\mathrm{out}} > \tfrac12 \) (which means that swaps are not possible in scenario~(b), as defined in Section~\ref{sec:idea}). 
This is because any swap performed after the last thermalisation necessarily moves the path above 
\( \tfrac12 \). 
Moreover, performing the swap at this stage is the only way to reach such a final state. 
This situation is illustrated in Fig.\,\ref{fig:swap-RI-RIII}(iv).
\end{itemize}


\subsection{Work of \texorpdfstring{$R_{\mathrm{(II)}}$}{R(II)} for a path \texorpdfstring{$R$}{R} that possibly includes swaps}\label{sec:swaps(II)}

Let us recall that 
\(W_\mathrm{(II)} = \sum_{i=1}^{N-2} W_i\). 
For clarity of notation, we will use the notion of $R_i$ for $i \in \{1, \ldots, N-2\}$ to denote the single segment of $R_\mathrm{(II)}$ between the $i$-th and the $(i+1)$-st thermalization. 

After allowing for \textbf{CO5} operations, we shall remember that any path $R$ may now have both:  segments $R_i$ with swaps and those without them. More formally, we write
\begin{equation}
    \{1, \ldots, N-2\} = I_\mathrm{swaps} \cup I_\mathrm{no\text{-}swaps},
\end{equation}
where $I_\mathrm{swaps}$ denotes the set of indices $i \in \{1, \ldots, N-2\}$ corresponding to the segments $R_i$ on which a swap is performed, while $I_\mathrm{no\text{-}swaps}$ denotes the set of indices of those segments $R_i$ on which no swap is performed.

Our goal is to show that, even in the presence of swaps, the general Lemma~\ref{lem:w2} still applies. 

To this end, for any $i \in I_\mathrm{swaps}$, we introduce a random variable $W_i^{\mathrm{swap}}$ that describes the work performed when a~swap is applied between the $i$-th and the $(i+1)$-st thermalisations in part~(II) (see Fig. \ref{fig:W(II)swap}). Its distribution is given by
\begin{align}\label{def:W^swap}
    \begin{aligned}
        &P\left(W_i^{\mathrm{swap}}=-\Delta E_{i2}\right)=1-q_i\quad\text{with}\quad \Delta E_{i2}=E_{i+1}-0=E_{i+1}\\
        &P\left(W_i^{\mathrm{swap}}=-\Delta E_{i1}\right)=q_i\quad\text{with}\quad \Delta E_{i1}=0-E_{i}=-E_i.
    \end{aligned}
\end{align}
This means that whenever the ground (respectively, excited) level is occupied at the beginning of the $i$-th segment of the path, which occurs with probability $1-q$ (respectively, $q$), the swap -- that is, the reflection about the point $1/2$ -- causes the excited (respectively, ground) level to be occupied afterwards (cf. Fig. \ref{fig:W(II)swap}). Consequently, according to the definition of work, we obtain $W_i^{\mathrm{swap}} = -\Delta E_{i2}$ (respectively, $W_i^{\mathrm{swap}} = -\Delta E_{i1}$). Obviously $\Delta E_i=\Delta E_{i1}+\Delta E_{i2}=E_{i+1}-E_i$.


\begin{figure}
    \centering
    \includegraphics[width=1\linewidth]{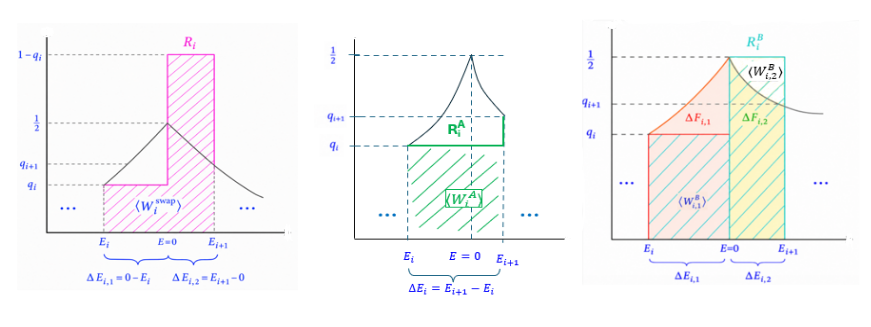}
    \caption{
    (1) An exemplary $i$th segment $R_i$ of a path $R$, including a swap (\emph{swap up}), with the corresponding average work $\langle W_i^{\mathrm{swap}} \rangle$, where $i \in \{1,\ldots,N-2\}$. 
(2) The $i$th segment of an auxiliary path corresponding to the original one depicted in~(1), in the sense that the energy levels remain the same, but the swap is removed. The corresponding average work $\langle W_i^{A} \rangle$ is indicated by the shaded green region. 
(3) The $i$th segment of an auxiliary path corresponding to the original one depicted in~(1), in the sense that the energy levels remain the same, but the swap is replaced by a standard thermalization. The corresponding average work $\langle W_i^{A} \rangle$ is indicated by the shaded pink region. 
    }
    \label{fig:W(II)swap}
\end{figure}

Now, to present our main argument, it is convenient to introduce two auxiliary random variables, $W_i^{\mathrm{A}}$ and $W_i^{\mathrm{B}}$ (see Fig. \ref{fig:W(II)swap}).  

They relate to $W_i^{\mathrm{swap}}$ in the following way:  
$W_i^{\mathrm{A}}$ describes the work performed in the $i$-th segment of the path after modifying it so that no swap is applied, whereas $W_i^{\mathrm{B}}$ describes the work performed when the swap is replaced by an ordinary thermalisation at $E=0$.  
(Of course, we may write $W_i^{\mathrm{B}} = W_{i1}^{\mathrm{B}} + W_{i2}^{\mathrm{B}}$, where $W_{i1}^{\mathrm{B}}$ and $W_{i2}^{\mathrm{B}}$ are independent random variables describing the work performed by the modified path over the segments $[E_i,0]$ and $[0,E_{i+1}]$, respectively.) 
The corresponding distributions are then given by
\begin{align}
    &\label{def:W^A}
    P\left(W_i^{\mathrm{A}}=0\right)=1-q_i,\quad\; P\left(W_i^{\mathrm{A}}=-\Delta E_i\right)=q_i\\
    &\label{def:W^B}
    P\left(W_i^{\mathrm{B}}=0\right)=\frac{1-q_i}{2},\quad\; P\left(W_i^{\mathrm{B}}=-\Delta E_{i2}\right)=\frac{1-q_i}{2},\quad\;
    P\left(W_i^{\mathrm{B}}=-\Delta E_{i1}\right)=\frac{q_i}{2},\quad\;
    P\left(W_i^{\mathrm{B}}=-\Delta E_{i1}-\Delta E_{i2}=-\Delta E_i\right)=\frac{q_i}{2}.
\end{align}

We are now ready to explain why the assertion of Lemma~\ref{lem:w2} remains valid even after extending the setting to paths that include swaps, as addressed in the forthcoming lemma. 
\begin{lemma}\label{lem:w2_swaps}
    Let $\epsilon _\mathrm{(II)} > 0$, and consider a path $R$ that may include an arbitrary number of swaps between thermalizations in part (II). With probability of at least
    \begin{equation}
        1-e^{-\beta \epsilon_\mathrm{(II)}},
    \end{equation}
    we have
    \begin{equation}
        -W_{\mathrm{(II)}} \geq \Delta F_{\mathrm{(II)}} - \epsilon_{\mathrm{(II)}}.
    \end{equation}
\end{lemma}
\begin{proof}
In order to repeat the proof of Lemma~\ref{lem:w2} for an arbitrary path that may include swaps, it suffices to establish an analogue of Eq.~\eqref{eq:jarzynski_(II)}.  
The remainder of the proof can then be rewritten directly from the proof of Lemma~\ref{lem:w2}.  
To this end, observe that
\begin{align}\label{eq:Jarzynski(II)swaps}
\begin{aligned}
    \left\langle e^{\beta W_\mathrm{(II)}} \right\rangle
        &= \left\langle 
            e^{\beta\left( \sum_{i\in I_\mathrm{swaps}} W_i^{\mathrm{swap}}
            + \sum_{i\in I_\mathrm{no\text{-}swaps}} W_i \right)}
           \right\rangle 
        = \left\langle 
            \left( \prod_{i\in I_\mathrm{swaps}} e^{\beta W_i^{\mathrm{swap}}} \right)
            \left( \prod_{i\in I_\mathrm{no\text{-}swaps}} e^{\beta W_i} \right)
           \right\rangle \\
        &= \left( \prod_{i\in I_\mathrm{swaps}} 
                \left\langle e^{\beta W_i^{\mathrm{swap}}} \right\rangle \right)
           \left( \prod_{i\in I_\mathrm{no\text{-}swaps}} 
                \left\langle e^{\beta W_i} \right\rangle \right),
\end{aligned}
\end{align}
where the last equality follows from the fact that the family   
\(
\{ W_i \}_{i\in I_\mathrm{no\text{-}swaps}}
\cup 
\{ W_i^{\mathrm{swap}} \}_{i\in I_\mathrm{swaps}}
\)
consists of mutually independent random variables.

Now, recalling the definitions \eqref{def:W^swap}--\eqref{def:W^B}, for any $i\in I_\mathrm{swaps}$ we obtain 
\begin{equation}
    \left\langle e^{\beta W_i^{\mathrm{swap}}} \right\rangle
        = e^{-\beta \Delta E_{i2}} (1 - q_i)
          + e^{-\beta \Delta E_{i1}} q_i,
\end{equation}
and likewise
\begin{align}
\begin{aligned}
    \left\langle e^{\beta W_i^{\mathrm{A}}} \right\rangle
        &= e^{0}\,(1 - q_i)
           + e^{-\beta(\Delta E_{i1} + \Delta E_{i2})} q_i,\\
    \left\langle e^{\beta W_i^{\mathrm{B}}} \right\rangle
        &= \tfrac{1}{2}\left(
                e^{0}(1 - q_i)
              + e^{-\beta \Delta E_{i2}}(1 - q_i)
              + e^{-\beta \Delta E_{i1}} q_i
              + e^{-\beta(\Delta E_{i1} + \Delta E_{i2})} q_i
           \right).
\end{aligned}
\end{align}
This immediately yields the identity
\begin{equation}
    \left\langle e^{\beta W_i^{\mathrm{swap}}} \right\rangle
        = 2\,\left\langle e^{\beta W_i^{\mathrm{B}}} \right\rangle
          - \left\langle e^{\beta W_i^{\mathrm{A}}} \right\rangle,\quad i\in I_\mathrm{swaps}.
\end{equation}
Further, note that the equality 
\(W_i^{\mathrm{B}} = W_{i1}^{\mathrm{B}} + W_{i2}^{\mathrm{B}}\),
as well as the independence of the random variables 
\(W_{i1}^{\mathrm{B}}\) and \(W_{i2}^{\mathrm{B}}\), imply the following:
\begin{equation}
    \left\langle e^{\beta W_i^{\mathrm{swap}}} \right\rangle
        = 2\left\langle 
            e^{\beta W_{i1}^{\mathrm{B}}}\,
            e^{\beta W_{i2}^{\mathrm{B}}}
          \right\rangle
          - \left\langle e^{\beta W_i^{\mathrm{A}}} \right\rangle
        = 2\left\langle e^{\beta W_{i1}^{\mathrm{B}}} \right\rangle
          \left\langle e^{\beta W_{i2}^{\mathrm{B}}} \right\rangle
          - \left\langle e^{\beta W_i^{\mathrm{A}}} \right\rangle,\quad i\in I_\mathrm{swaps}.
\end{equation}
Applying the Jarzynski identity \eqref{eq:jarzynski} (for a single step, between thermalisations, and in the absence of swaps) leads, in turn, to
\begin{equation}\label{eq:after_Jarzynski_swaps}
    \left\langle e^{\beta W_i^{\mathrm{swap}}} \right\rangle
        = 2 e^{-\beta \Delta F_{i1}} e^{-\beta \Delta F_{i2}}
          - e^{-\beta \Delta F_i}
        = e^{-\beta \Delta F_i}\quad \text{for all}\quad i\in I_\mathrm{swaps}.
\end{equation}

Obviously, we also have
\begin{equation}\label{eq:after_Jarzynski}
    \left\langle e^{\beta W_i} \right\rangle
        = e^{-\beta \Delta F_i} \qquad \text{for all}\; i\in I_\mathrm{no\text{-}swaps},
\end{equation}
which is precisely the statement of \eqref{eq:jarzynski}.

As a consequence of Eqs. \eqref{eq:Jarzynski(II)swaps}, \eqref{eq:after_Jarzynski_swaps} and \eqref{eq:after_Jarzynski}, we finally obtain
\begin{align}
\begin{aligned}
    \left\langle e^{\beta W_\mathrm{(II)}} \right\rangle
        &= \left( \prod_{i\in I_\mathrm{swaps}} 
                \left\langle e^{\beta W_i^{\mathrm{swap}}} \right\rangle \right)
           \left( \prod_{i\in I_\mathrm{no\text{-}swaps}} 
                \left\langle e^{\beta W_i} \right\rangle \right)
        =\left( \prod_{i\in I_\mathrm{swaps}} 
                e^{-\beta \Delta F_i} \right)
           \left( \prod_{i\in I_\mathrm{no\text{-}swaps}} 
                e^{-\beta \Delta F_i} \right)\\
        &=\left(e^{-\beta \sum_{i\in I\mathrm{swaps}}\Delta F_i}\right)\left(e^{-\beta \sum_{i\in I\mathrm{no\text{-}swaps}}\Delta F_i}\right)
        =e^{-\beta\sum_{i=1}^{N-2}\Delta F_i}=e^{-\beta\Delta F_\mathrm{(II)}},
\end{aligned}
\end{align}
which completes the proof.
\end{proof}


\subsection{Work of \texorpdfstring{$R_{\mathrm{(I)}}$}{R(II)} and \texorpdfstring{$R_{\mathrm{(III)}}$}{R(III)} for a path \texorpdfstring{$R$}{R} that possibly includes swaps}
\label{sec:swaps(I)(III)}

We have already discussed, at the beginning of Appendix~\ref{sec:appendix-with-swaps}, the possible scenarios involving swaps in parts~(I) and~(III) of an arbitrary path~$R$. 
We now need to understand how these new scenarios affect the original cases~(a) and~(b) (consisting of two mutually exclusive subcases, namely (b1$^{\prime}$) and (b2$^{\prime}$)), introduced in Section~\ref{sec:idea}, and used throughout Appendix~\ref{sec:appendA}. 
To this end, and as already mentioned above, we will prove analogues of Lemmas~\ref{lem:easy}--\ref{lem:w1} and Corollary \ref{corollary:w3_prime_case}, which describe the behaviour of parts~(I) and~(III) of a path, now adapted to the extended setup that allows for swaps.


To address \textbf{case~(b2$^{\prime}$)}, i.e.\ $p_\mathrm{out}<q_0$ and  
$q_1>(q_0+p_\mathrm{out})/2$
, in the final theorem, it suffices to establish the following general statement (cf.\ Lemma~\ref{lem:easy} and the subsequent analysis in Lemma~\ref{lem:(b2)} and Theorems \ref{thm:case(b)altogether}, \ref{thm:main_appendix}), which remains valid in all possible scenarios, whether involving swaps or not:
\begin{lemma}\label{lem:easy_swaps}
    Let $p_\mathrm{in},p_\mathrm{out}\in(0,1)$. Suppose that $R$ is an arbitrary path that may possibly include swaps; in particular, swaps may occur in parts~(I) and~(III). 
Then, with probability of at least $1-p_\mathrm{in}$ we have $-W_\mathrm{(I)}\ge 0$, and with probability of at least $1-p_\mathrm{out}$ we have $-W_\mathrm{(III)}\ge 0$.
\end{lemma}

\begin{proof}
    Let us discuss parts~(I) and~(III) separately.

First, note that when there is no swap before the first thermalization, Lemma~\ref{lem:easy} applies directly, and there is nothing to prove for part (I). Otherwise, independently of the type of a~swap applied (all three possibilities are depicted in Fig.~\ref{fig:swap-RI-RIII}(i)--(iii)), with probability $(1-p_\mathrm{in})$ the ground state is occupied at the beginning of part~(I) over the segment $[E_0,0]$ (up to the swap), and afterwards the excited level is occupied over the segment $[0,E_1]$.
Importantly, the Gibbs curve is decreasing along this second segment (cf. Fig. \ref{fig:swap-RI-RIII}(i)--(iii)). Consequently, with probability $(1-p_\mathrm{in})$ we have
\begin{equation}
    -W_\mathrm{(I)}
    = 0 + (E_1-0)
    = E_1 > 0,
\end{equation}
which implies
\begin{equation}
    P\!\left(-W_\mathrm{(I)} \ge 0\right) \ge 1 - p_\mathrm{in}.
\end{equation}

For part~(III), we may again invoke Lemma~\ref{lem:easy} if no swap occurs after the last thermalization (note that the assumption $p_\mathrm{out}<q_0$, which defines scenario~(b), excludes the possibility of a swap in part~(III)). Otherwise (which is only possible within scenario~(a)), with probability $(1-p_\mathrm{out})$ the ground state is occupied at the beginning of part~(III) over the segment $[E_{N-1},0]$ (up to the swap), and afterwards the excited level is occupied over the segment $[0,E_0]$ (see Fig.~\ref{fig:swap-RI-RIII}(iv)). Once more, the Gibbs curve is decreasing along this second segment.
Consequently, with probability $(1-p_\mathrm{out})$ we have
\begin{equation}
    -W_\mathrm{(III)}
    = 0 + (E_0-0)
    = E_0 > 0,
\end{equation}
which implies
\begin{equation}
    P\!\left(-W_\mathrm{(III)} \ge 0\right) \ge 1 - p_\mathrm{out}.
\end{equation}

The proof is therefore completed.

\end{proof}

Let us now proceed to \textbf{case~(b1$^{\prime}$)}, i.e.\ $p_\mathrm{out}<q_0$ and  
$q_1<(q_0+p_\mathrm{out})/2<q_0$ (see Fig.~\ref{fig:scenario(b1)swaps}). 
In order to establish the final theorem for an arbitrary path $R$, even if it involves swaps, we need to understand how the possibly appearing swaps affect the statement of Corollary~\ref{corollary:w3_prime_case}, as well as the subsequent results, namely Lemma~\ref{lem:(b1)} and Theorem \ref{thm:case(b)altogether}. We claim that
\begin{lemma}\label{corollary:w3_prime_case_swaps}
    Let $p_\mathrm{in},p_\mathrm{out}\in(0,1)$.  Suppose that $R$ is an arbitrary path that may possibly include swaps; in particular, swaps may occur in parts~(I) and~(III). If $p_{\mathrm{out}} < q_0$ and $q_1<(p_{\mathrm{out}}+q_0)/2$ (\textbf{case (b1$^{\prime}$)}), then with positive probability of at least $(1-p_{\mathrm{out}})$ we have $-W_{\mathrm{(III)}}\geq \Delta F_{\mathrm{(III)}}$, and with positive probability of at least $\min\{p_{\mathrm{in}},1-p_\mathrm{in}\}$ we have
    \begin{equation}
        -W_{\mathrm{(I)}}\geq \Delta F_{\mathrm{(I)}}+\overline{\epsilon}_{\mathrm{(I)}}\left(p_{\mathrm{out}}\right),
    \end{equation}
    where $\overline{\epsilon}_{\mathrm{(I)}}(p_{\mathrm{out}})$ is given by Eq. \eqref{def:eps(I)}, that is,  
     \begin{equation}\label{def:eps(I)_B}
        \overline{\epsilon}_{\mathrm{(I)}}\left(p_{\mathrm{out}}\right)\coloneqq
    \frac{1}{\beta} \ln \left(\frac{1+e^{\beta E\left(\tfrac{p_{\mathrm{out}}+q_0}{2}\right)}}{1+e^{\beta E_0}}\right)
    =\frac{1}{\beta}\ln\left(\frac{2q_0}{p_{\mathrm{out}}+q_0}\right),
    \end{equation}
    where the second equality follows from Eq. \eqref{def:E(p)}. 
    Moreover, $\overline{\epsilon}_{\mathrm{(I)}}(p_{\mathrm{out}}) > 0$.
\end{lemma}


\begin{figure}
    \centering
    \includegraphics[width=0.5\linewidth]{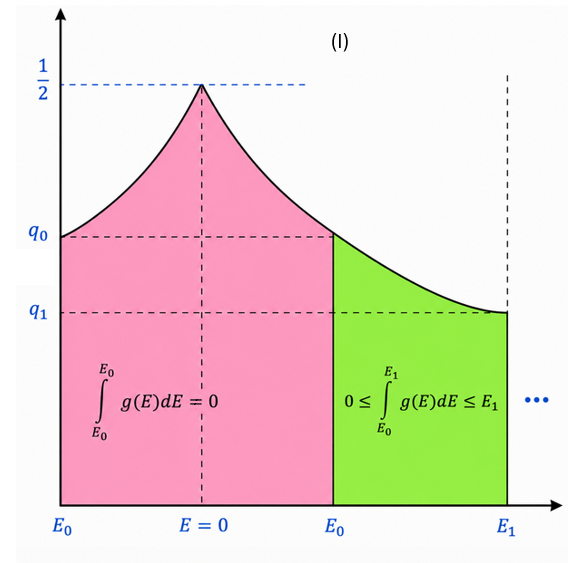}
    \caption{
    An exemplary stage~(I) of a path $R$ within scenario~(b1$^{\prime}$) (which in particular yields $E_1>E_0$) in the presence of a~swap. Geometrically, we have
\(
    0 \;\leq\; \int_{E_0}^{E_1} g(E)\, dE \;\leq\; E_1.
\)
Indeed, over stage~(I), the Gibbs curve $g(E)$ on the interval $[E_0,E_1]$ remains -- regardless of the type of swap and the particular shape of the path $R$ -- between the values $0$ and $1/2$. Hence, the area under the curve cannot exceed the area of the rectangle of height $1$ and width $E_1-0=E_1$. 
    }
    \label{fig:scenario(b1)swaps}
\end{figure}
\begin{proof}
First, note that there is nothing to prove for part~(III), since the undertaken assumption  
\(p_{\mathrm{out}} < q_0\) excludes the possibility of a swap after the last thermalization.  
Therefore, the analysis of \( -W_{\mathrm{(III)}}\) proceeds exactly as in the proofs of  
Lemma~\ref{lem:w3} and Corollary~\ref{corollary:w3_prime_case}.

In part~(I), in contrast, all three swap scenarios (i)--(iii) depicted in  
Fig.~\ref{fig:swap-RI-RIII} may occur.  
Recall that the work loss \(-W_{\mathrm{(I)}}\) is positive if and only if the Gibbs curve is decreasing along the segment \([E_0,E_1]\).  
It is therefore natural to focus on the case in which the swap maps the ground state to the excited state (\([0]\to[1]\)), which happens with probability \((1-p_{\mathrm{in}})\).  
In this situation we have
\begin{equation}
    -W_{\mathrm{(I)}} = 0 + (E_1 - 0) = E_1 > 0 .
\end{equation}

Moreover,
\begin{equation}
    \Delta F_{\mathrm{(I)}} 
        = \int_{E_0}^{E_1} g(E)\, dE
        = E_1 - E_0 - \frac{1}{\beta}\ln\!\left(\frac{1+e^{\beta E_1}}{1+e^{\beta E_0}}\right),
\end{equation}
and (keeping in mind that we have here $E_0<E_1$) this quantity -- although positive -- is strictly smaller than \(E_1\) (see Fig.~\ref{fig:scenario(b1)swaps}, where this inequality is also motivated geometrically).

Consequently, for any path \(R\) that includes a swap in part~(I), with probability \((1-p_{\mathrm{in}})\),
\begin{equation}
\begin{aligned}
    -W_{\mathrm{(I)}} &= E_1 = E_1+\Delta F_{\mathrm{(I)}}-\Delta F_{\mathrm{(I)}}\\
        &= \Delta F_{\mathrm{(I)}} 
            + E_0 
            + \frac{1}{\beta}\ln\!\left(\frac{1+e^{\beta E_1}}{1+e^{\beta E_0}}\right) 
        > \Delta F_{\mathrm{(I)}} + \epsilon_{\mathrm{(I)}}(q_1),
\end{aligned}
\end{equation}
where \(\epsilon_{\mathrm{(I)}}(q_1)\) is the quantity defined in Eq.~\eqref{def:eps(I)(q1)} of Lemma~\ref{lem:w3}.

Wishing to obtain a general estimate, that is, for a path \(R\) that may either include a swap in its first part or not, we combine the above considerations with the statement of Lemma~\ref{lem:w3} to obtain
\begin{equation}\label{eq:eq}
    P\!\left(
        -W_{\mathrm{(I)}} \ge \Delta F_{\mathrm{(I)}} + \epsilon_{\mathrm{(I)}}(q_1)
    \right)
    \ge \min\{p_{\mathrm{in}},\, 1 - p_{\mathrm{in}}\}.
\end{equation}

Thus, to complete the proof, it suffices to repeat the argument given in the proof of Corollary~\ref{corollary:w3_prime_case}, including the part establishing the positivity of $\overline{\epsilon}_{\mathrm{(I)}}(p_{\mathrm{out}})$.
\end{proof}

It now remains to deal with \textbf{case~(a)}: $q_0 < p_{\mathrm{out}}$ (see Fig.~\ref{fig:scenario(a)swaps}), which amounts to proving an analogue of Lemma~\ref{lem:w1}, valid for all types of paths, including those that involve swaps. The forthcoming statement can then be used to formulate a general version of Theorem~\ref{thm:case(a)altogether}. We have
\begin{lemma}\label{lem:w1_swaps}
    Let $p_\mathrm{in},p_\mathrm{out}\in(0,1)$. Suppose that $R$ is an arbitrary path that may possibly include swaps; in particular, swaps may occur in parts~(I) and~(III). If $q_0 < p_\mathrm{out}$, then, with positive probability of at least $\min\{p_{\mathrm{in}},1-p_{\mathrm{in}}\}$ we have $ -W_\mathrm{(I)} \geq \Delta F_\mathrm{(I)}$, and with positive probability of at least $p_\mathrm{out}$ we have 
    \begin{equation}
		  -W_\mathrm{(III)} 
		\geq \Delta F_\mathrm{(III)} +\epsilon_\mathrm{(III)}(p_{\mathrm{out}}).
    \end{equation}
    where 
${\epsilon}_\mathrm{(III)}(p_{\mathrm{out}})$ is given by Eq. \eqref{def:eps3_out}, and -- after applying Eq. \eqref{def:E(p)} -- it can be written as
\begin{equation}\label{def:eps3_out_B}
    {\epsilon}_\mathrm{(III)}(p_{\mathrm{out}})
    =\frac{1}{\beta}\ln\left(\frac{p_{\mathrm{out}}}{q_0}\right).
\end{equation}
    Moreover, ${\epsilon}_\mathrm{(III)}(p_{\mathrm{out}}) > 0$.
\end{lemma}

\begin{figure}
    \centering
    \includegraphics[width=0.5\linewidth]{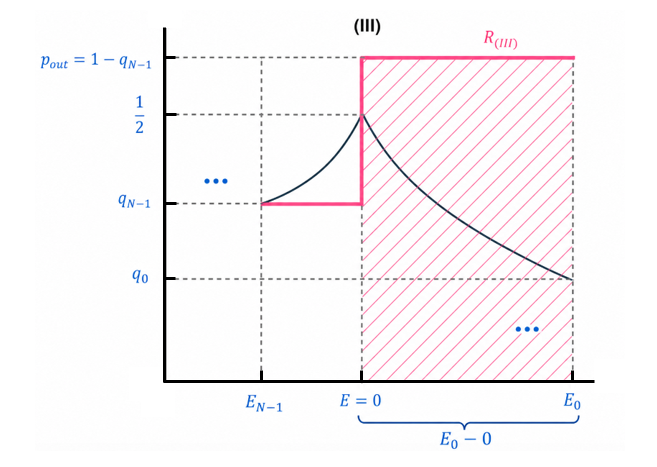}
    \caption{
    Scenario~(a) allows for a swap (necessarily a \emph{swap up}) after the last thermalization at the point $(E_{N-1}, q_{N-1})$. In this exceptional case, we have
\(
    p_\mathrm{out} = 1 - q_{N-1} > 1/2,
\)
rather than the standard relation \(p_\mathrm{out} = q_{N-1}\). The figure also illustrates that, within scenario~(a) and under the swap scenario \([0] \to [1]\) (whose probability is equal to $1-q_{N-1}=p_\mathrm{out}$), the quantity \( -W_{\mathrm{(III)}} \) is positive in part~(III). 
    }
    \label{fig:scenario(a)swaps}
\end{figure}

\begin{proof}
In part~(I), we need to consider two mutually exclusive cases:
\begin{itemize}
    \item $q_1 < q_0$, and thus $E_0 < E_1$,
    \item $q_0 < q_1$, and thus $E_1 < E_0$.
\end{itemize}
In the first case, we may simply repeat the argument from Eq.~\eqref{eq:eq}, namely,
\begin{equation}
    P\!\left(-W_{\mathrm{(I)}} \ge \Delta F_{\mathrm{(I)}}\right)
    \ge 
    P\!\left(-W_{\mathrm{(I)}} \ge \Delta F_{\mathrm{(I)}} + \epsilon_{\mathrm{(I)}}(q_1)\right)
    \ge 
    \min\left\{p_{\mathrm{in}},\, 1 - p_{\mathrm{in}}\right\},
\end{equation}
where we note that $\epsilon_{\mathrm{(I)}}(q_1)$, defined in~\eqref{def:eps(I)(q1)}, is positive whenever $q_1 < q_0$.

    In the second of the above cases, we know, on the one hand, that for a path without a swap in part~(I) we obtain (due to Lemma~\ref{lem:w1})
\begin{equation}
    P\!\left(-W_{\mathrm{(I)}} \ge \Delta F_{\mathrm{(I)}}\right)
    = \min\left\{p_{\mathrm{in}},\, 1 - p_{\mathrm{in}}\right\}.
\end{equation}
On the other hand, if a swap occurs, we may argue analogously to the proof of the general Lemma~\ref{lem:easy_swaps} to conclude that
\begin{equation}
    P\!\left(-W_{\mathrm{(I)}} \ge 0\right) \ge 1 - p_{\mathrm{in}}.
\end{equation}
Combining this with the fact that for $E_1 < E_0$ we have
\begin{equation}
    \Delta F_{\mathrm{(I)}} = \int_{E_0}^{E_1} g(E)\, dE < 0,
\end{equation}
and thus we finally arrive at
\begin{equation}
    P\!\left(-W_{\mathrm{(I)}} \ge \Delta F_{\mathrm{(I)}}\right)
    \ge \min\left\{p_{\mathrm{in}},\, 1 - p_{\mathrm{in}}\right\}.
\end{equation}

The proof for part~(I) is therefore completed.\vspace{2mm}

Let us therefore proceed to analysing part~(III), which in scenario~(a) may involve a swap (see Fig.~\ref{fig:swap-RI-RIII}). Note that in the case of a swap after the last thermalization (cf. Fig. \ref{fig:scenario(a)swaps}) we exceptionally have 
\(
    p_{\mathrm{out}} = 1 - q_{N-1}>1/2
\)
(rather than \(p_{\mathrm{out}} = q_{N-1}\), as in the standard situation). Allowing for a swap from the ground to the excited level \([0]\to[1]\), which therefore occurs with probability \((1 - q_{N-1})=p_{\mathrm{out}}\), we obtain
\begin{equation}
    -W_{\mathrm{(III)}} = 0 + (E_0 - 0) = E_0.
\end{equation}
Moreover,
\begin{equation}
    \Delta F_{\mathrm{(III)}}
    = \int_{E_{N-1}}^{E_0} g(E)\, dE
    = E_0 - E_{N-1}
      - \frac{1}{\beta}
      \ln\!\left( \frac{1 + e^{\beta E_0}}{1 + e^{\beta E_{N-1}}} \right),
\end{equation}
which gives us
\begin{equation}
    -W_\mathrm{(III)}=E_0+\Delta F_\mathrm{(III)}-\Delta F_\mathrm{(III)}
    =\Delta F_\mathrm{(III)}+E_{N-1}+\frac{1}{\beta}\ln\left(\frac{1+e^{\beta E_0}}{1+e^{\beta E_{N-1}}}\right).
\end{equation}
Whenever $q_0 < 1 - p_\mathrm{out} < 1/2 < p_\mathrm{out}$, which in particular implies 
$E_0 > E_{N-1} = E(1 - p_\mathrm{out}) > 0$ (note that $E(p)$ is given by Eq.~\eqref{def:E(p)} for any $p \in (0, 1/2]$, and hence $E(1 - p_\mathrm{out})$ is well-defined whenever $p_\mathrm{out} \in (1/2, 1)$), we can readily conclude that the quantity
\begin{equation}
    E_{N-1} + \frac{1}{\beta}\ln\left(\frac{1 + e^{\beta E_0}}{1 + e^{\beta E_{N-1}}}\right)
\end{equation}
is strictly positive. 

In the opposite case, namely when $1 - p_\mathrm{out} < q_0 < 1/2 < p_\mathrm{out}$, which implies 
$0 \leq E_0 < E_{N-1}=E(1 - p_\mathrm{out})$, we arrive at the same conclusion. Indeed, in this regime
\begin{equation}
    E_{N-1} + \frac{1}{\beta}\ln\left(\frac{1 + e^{\beta E_0}}{1 + e^{\beta E_{N-1}}}\right)
    = -W_{\mathrm{(III)}} - \Delta F_\mathrm{(III)},
\end{equation}
and the second term on the right-hand side of the above equality is -- in this case -- strictly positive ($-\Delta F_\mathrm{(III)}>0$), while the first is non-negative ($-W_\mathrm{(III)}\ge 0$).

As a consequence, in the case of a swap after the last thermalization, we obtain
\begin{equation}
    E_{N-1} + \frac{1}{\beta}\ln\left(\frac{1 + e^{\beta E_0}}{1 + e^{\beta E_{N-1}}}\right)=E\left(1-p_\mathrm{out}\right) + \frac{1}{\beta}\ln\left(\frac{1 + e^{\beta E_0}}{1 + e^{\beta E\left(1-p_\mathrm{out}\right)}}\right)>0
\end{equation}
and
\begin{equation}
P\!\left(
        -W_{\mathrm{(III)}} \ge
        \Delta F_{\mathrm{(III)}}
        + E\left(1-p_\mathrm{out}\right) + \frac{1}{\beta}\ln\left(\frac{1 + e^{\beta E_0}}{1 + e^{\beta E\left(1-p_\mathrm{out}\right)}}\right)
    \right)
    \ge p_{\mathrm{out}}.
\end{equation}
Applying Eq.~\eqref{def:E(p)}, we can simplify the above to the following form:
\begin{equation}
P\!\left(
        -W_{\mathrm{(III)}} \ge
        \Delta F_{\mathrm{(III)}}
        +{\epsilon}_{\mathrm{(III)}}(p_{\mathrm{out}})\right)
    \ge p_{\mathrm{out}}\quad\text{with}\quad {\epsilon}_{\mathrm{(III)}}(p_{\mathrm{out}}):=\frac{1}{\beta}\ln\left(\frac{p_\mathrm{out}}{q_0}\right).
\end{equation}

Taking into account the assertion of Lemma~\ref{lem:w1} (valid for those paths that do not involve swaps), and referring again to Eq. \eqref{def:E(p)}, we finally obtain the following estimate for an arbitrary path \(R\) (such that \(q_0 < p_{\mathrm{out}}\)) that may or may not involve a swap in its last part:
\begin{equation}
    P\!\left(
        -W_{\mathrm{(III)}} \ge
        \Delta F_{\mathrm{(III)}} +
        {\epsilon}_{\mathrm{(III)}}(p_{\mathrm{out}})
    \right)\geq p_{\mathrm{out}}.
\end{equation}
The positivity of ${\epsilon}_{\mathrm{(III)}}(p_{\mathrm{out}})$ in the case of $q_0<p_\mathrm{out}$ follows either from the assertion of Lemma~\ref{lem:w1} and the above analysis, or directly from the definition \eqref{def:eps3_out_B}.
The proof is now complete.

\end{proof}

\subsection{Work of the entire path \texorpdfstring{$R$}{R} that possibly includes swaps}
Having established the general building blocks, namely Lemmas~\ref{lem:w2_swaps}--\ref{lem:w1_swaps}, which hold for all types of paths (whether involving swaps or not), we can use them in place of Lemmas~\ref{lem:w2}--\ref{lem:w1} and Corollary~\ref{corollary:w3_prime_case} to obtain the general versions of Theorems~\ref{thm:case(a)altogether} and \ref{thm:case(b)altogether}, namely:

\begin{theorem}\label{thm:case(a)altogether_swaps}
For any $p\in(0,1)$ let
\begin{equation}\label{def:epsisilon(a)_swaps}
    {\epsilon}_\mathrm{(a)}\left(p\right)\coloneqq
\frac{1}{2}\epsilon_\mathrm{(III)}(p)=\frac{1}{2\beta}\ln\left(\frac{p}{q_0}\right)
\end{equation}
and 
\begin{equation}\label{def:p(a)_swaps}
    {\eta}_\mathrm{(a)}\left(p\right)\coloneqq
    \min\left\{p_\mathrm{in},1-p_\mathrm{in}\right\}p\left(1- e^{-\beta {\epsilon}_\mathrm{(a)}\left(p\right)} \right)=\min\left\{p_\mathrm{in},1-p_\mathrm{in}\right\}p\left(1-\left(\frac{q_0}{p}\right)^{1/2}\right).
\end{equation}

    Consider an arbitrary path $R$, either involving swaps or not, such that $q_0 < p_\mathrm{out}$ (\textbf{case (a)}). Then the constants ${\epsilon}_\mathrm{(a)}(p_\mathrm{out})$ and ${\eta}_\mathrm{(a)}(p_\mathrm{out})$ are both positive, and the work $W$ of $R$ satisfies
\begin{equation}
    -W \ge {\epsilon}_\mathrm{(a)}(p_\mathrm{out})
\end{equation}
with positive probability of at least ${\eta}_\mathrm{(a)}(p_\mathrm{out})$.
\end{theorem}

\begin{theorem}\label{thm:case(b)altogether_swaps}
    For any $p\in(0,1/2]$ let
    \begin{align}\label{def:epsisilon(b)_swaps}
    \begin{aligned}
        \epsilon_\mathrm{(b)}\left(p\right)&\coloneqq
    \frac{1}{2}\min\left\{\frac{1}{\beta} \ln \left(\frac{1+e^{\beta E\left(\tfrac{p+q_0}{2}\right)}}{1+e^{\beta E_0}}\right)\,,\, E\!\left(p\right) - E\left(\tfrac{p+q_0}{2}\right)
           + \frac{1}{\beta} 
             \ln\!\left(
                 \frac{1 + e^{\beta E\left(\tfrac{p+q_0}{2}\right)}}
                      {1 + e^{\beta E(p)}}
             \right)\right\}\\
             &=\frac{1}{2\beta}\min\left\{\ln\left(\frac{2q_0}{p+q_0}\right),\ln\left(\frac{2(1-p)}{2-p-q_0}\right)\right\}
    \end{aligned}
    \end{align}
    and
    \begin{align}\label{def:p(b)_swaps}
    \begin{aligned}
        \overline{\eta}_\mathrm{(b)}\left(p\right)
    &\coloneqq\min\left\{\left(1 - p\right)
        \, \min\!\left\{p_{\mathrm{in}},1-p_{\mathrm{in}}\right\}
        \left(1 - \left(\frac{1+e^{\beta E_0}}{1+e^{\beta E\left(\tfrac{p+q_0}{2}\right)}}\right)^{1/2}\right)\, ,\right.\\ 
        &\hspace{25mm}
        \left.\left(1-p_\mathrm{in}\right)\left(1-p\right)\left(1-e^{-\frac{\beta}{2}\left(E\!\left(p\right) - E\left(\tfrac{p+q_0}{2}\right)
          \right)}
          \left(\frac{1+e^{\beta E\left(p\right)}}{1+e^{\beta E\left(\tfrac{p+q_0}{2}\right)} }\right)^{1/2}
          \right) \right\}\\
          &=\min\left\{(1-p)\min\!\left\{p_{\mathrm{in}},1-p_{\mathrm{in}}\right\}\left(1-\left(\frac{p+q_0}{2q_0}\right)^{1/2}\right),\left(1-p_\mathrm{in}\right)(1-p)\left(1-\left(\frac{p\left(2-p-q_0\right)}{(1-p)\left(p+q_0\right)}\right)^{1/2}\right)\left(\frac{p+q_0}{2p}\right)^{1/2}\right\}.
    \end{aligned}
    \end{align}
    Consider an arbitrary path $R$, either involving swaps or not, such that $q_0>p_\mathrm{out}$ (\textbf{case(b)}). Then the constants $\epsilon_\mathrm{(b)}(p_\mathrm{out})$ and $\overline{\eta}_\mathrm{(b)}(p_\mathrm{out})$ are both positive and 
the work $W$ of $R$ fulfills 
\begin{equation}
    -W\geq \epsilon_\mathrm{(b)}\left(p_\mathrm{out}\right)
\end{equation}
with a positive probability of at least $\overline{\eta}_\mathrm{(b)}(p_\mathrm{out})$.
\end{theorem}

\subsection{\textbf{Combining paths into a process (allowing all three types of transformations \textbf{CO2}, \textbf{CO3}, and \textbf{CO5})}
}

Recall that the constants 
${\epsilon}_\mathrm{(a)}, \epsilon_\mathrm{(b)}, {\eta}_\mathrm{(a)}, \overline{\eta}_\mathrm{(b)}$ 
are defined by Eqs.~\eqref{def:epsisilon(a)_swaps}--\eqref{def:p(b)_swaps}.\vspace{2mm}

Summarizing the above results from both Appendices~\ref{sec:appendA} and~\ref{sec:appendix-with-swaps}, we can finally formulate the main general result 
(by which we mean that all types of operations \textbf{CO2}, \textbf{CO3}, and \textbf{CO5} can be applied while transforming the state 
$\rho \coloneqq (1-p_{\mathrm{IN}}, p_{\mathrm{IN}})$ into the state 
$\sigma \coloneqq (1-p_{\mathrm{OUT}}, p_{\mathrm{OUT}})$), 
whose proof is almost identical to that of Theorem~\ref{thm:main_appendix}, 
with the only difference that Theorems~\ref{thm:case(a)altogether_swaps} and~\ref{thm:case(b)altogether_swaps} 
are applied instead of Theorems~\ref{thm:case(a)altogether} and~\ref{thm:case(b)altogether}, respectively.

\begin{theorem}
	\label{thm:main_appendix_swaps}
Let $p_\mathrm{IN}\in(0,1)$. For arbitrary $p_\mathrm{OUT}\in(0,1)$ and $q_0\in(0,1/2]$ such that $p_\mathrm{OUT}\not=q_0$ consider one of the scenarios (i)-(iv) listed at the beginning of Section \ref{sec:idea} (which exclude those cases in which a~transition from $\rho \coloneqq (1-p_{\mathrm{IN}}, p_{\mathrm{IN}})$ to $\sigma \coloneqq (1-p_{\mathrm{OUT}}, p_{\mathrm{OUT}})$ can be achieved with no work loss). 
Then, while transforming state $\rho \coloneqq (1-p_{\mathrm{IN}}, p_{\mathrm{IN}})$ into state $\sigma \coloneqq (1-p_{\mathrm{OUT}}, p_{\mathrm{OUT}})$ by operations \textbf{CO2, CO3} and \textbf{CO5}, one has to spend at least the following amount of work
	\begin{equation}\label{def:epsilon_final_B}
{\epsilon}\coloneqq \max\left\{{\epsilon}_\mathrm{(a)}\left(\frac{p_\mathrm{OUT}+q_0}{2}\right)
\,,\,
\epsilon_\mathrm{(b)}\left(\frac{p_\mathrm{OUT}+q_0}{2}\right)\right\}  > 0
	\end{equation}
 with positive probability of at least
 \begin{equation}\label{def:p_final_B}
     \overline{\eta}\coloneqq \max\left\{\,
     {\eta}_\mathrm{(a)}\left(\frac{p_\mathrm{OUT}+q_0}{2}\right)\frac{p_\mathrm{OUT}-q_0}{2}\,,\,
     \overline{\eta}_\mathrm{(b)}\left(\frac{p_\mathrm{OUT}+q_0}{2}\right)\frac{q_0-p_\mathrm{OUT}}{2}
     \,\right\}.
 \end{equation}
\end{theorem}

\begin{remark}
Note, that if $p_\mathrm{OUT}=1$
then even by Thermal Operations, 
one cannot obtain it from any state,
so even more by $\mcoww$, which are 
a subset.  
    
\end{remark}


To finish the characterization, we show below what states can be transformed by $\mcoww$.
\begin{fact}
\label{fact:trivial}
    It is possible via $\mcoww$ to transform state $\rho \coloneqq (1-p_{\mathrm{IN}}, p_{\mathrm{IN}})$ into state $\sigma \coloneqq (1-p_{\mathrm{OUT}}, p_{\mathrm{OUT}})$ if $p_{\mathrm{IN}} \geq p_{\mathrm{OUT}} \geq q_0$ or if $p_{\mathrm{IN}} \leq p_{\mathrm{OUT}} \leq q_0$.
\end{fact}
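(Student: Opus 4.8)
The plan is to realize the transition with a single partial thermalization (\textbf{CO2}) performed at the fixed Hamiltonian $H(E_0)$, so that no energy level is ever moved and consequently no work is spent at all. Recall that a partial thermalization with mixing parameter $\lambda \in [0,1]$ acts as $\rho \mapsto (1-\lambda)\rho + \lambda \tau_{E_0}$ and keeps the Hamiltonian of the system intact, where $\tau_{E_0} = (1-p^\beta, p^\beta)$ with $p^\beta = g(E_0)$. Reading off the excited-level population, the output state has excited population
\begin{equation}
(1-\lambda) p^{\mathrm{in}} + \lambda p^\beta,
\end{equation}
so the entire problem reduces to finding $\lambda \in [0,1]$ with $(1-\lambda)p^{\mathrm{in}} + \lambda p^\beta = p^{\mathrm{out}}$.

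Next I would observe that such a $\lambda$ exists precisely when $p^{\mathrm{out}}$ lies in the closed interval with endpoints $p^{\mathrm{in}}$ and $p^\beta$, which is exactly what the two hypotheses $p^{\mathrm{in}} \ge p^{\mathrm{out}} \ge p^\beta$ and $p^{\mathrm{in}} \le p^{\mathrm{out}} \le p^\beta$ encode. Explicitly, when $p^{\mathrm{in}} \neq p^\beta$ one sets
\begin{equation}
\lambda = \frac{p^{\mathrm{out}} - p^{\mathrm{in}}}{p^\beta - p^{\mathrm{in}}},
\end{equation}
and a short sign check in each of the two cases suffices: in both cases the numerator and denominator share the same sign (so $\lambda \ge 0$), and the inequality $|p^{\mathrm{out}} - p^{\mathrm{in}}| \le |p^\beta - p^{\mathrm{in}}|$ holds (so $\lambda \le 1$). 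In the degenerate case $p^{\mathrm{in}} = p^\beta$ both hypotheses force $p^{\mathrm{out}} = p^\beta = p^{\mathrm{in}}$, so $\rho = \sigma = \tau_{E_0}$ and there is nothing to do; this is the only point where care is needed, merely to avoid dividing by zero.

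Finally I would verify the two defining requirements of $\mcoww$. The construction uses only \textbf{CO2}, so no level transformation (\textbf{CO3}) is ever applied; since the work random variable is a sum over level-transformation steps only, the empty sum gives $W \equiv 0$ deterministically, whence $Pr(W > \eta) = 0 \le \epsilon$ for every $\epsilon > 0$ and $\eta \ge 0$ (indeed strictly no work is spent). The final Hamiltonian coincides with the initial $H(E_0)$ by construction. There is essentially no genuine obstacle here: this Fact is simply the converse direction to Theorem \ref{thm:main}, exhibiting that every admixing target of the form \eqref{eq:sigma-thm} lying on the correct side of the Gibbs point is indeed reachable, and the entire argument is the one-line observation that $p^{\mathrm{out}}$ is a convex combination of $p^{\mathrm{in}}$ and $p^\beta$.
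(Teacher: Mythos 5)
Your proof is correct and takes essentially the same approach as the paper: the paper realizes $\sigma$ as a mixture of the identity operation and a single full thermalization at the very beginning (at the fixed Hamiltonian $H(E_0)$), which is precisely the one partial thermalization \textbf{CO2} with parameter $\lambda$ that you construct, and no work is ever spent since no level transformation occurs. Your write-up merely makes explicit the formula $\lambda = (p^{\mathrm{out}}-p^{\mathrm{in}})/(p^{\beta}-p^{\mathrm{in}})$ and the check that $\lambda\in[0,1]$ under the stated hypotheses, details the paper leaves implicit.
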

This is possible because such a state can be obtained as a mixture of the identity operation (with no thermalization and no swap) and an operation containing a single thermalization at the very beginning of the process (again with no swaps). 
To understand why this does not contradict the previous theorems, note that in this case part~(III) of the path may be trivial. Consequently, there is no negative work contribution arising from this part.

\begin{fact}
\label{fact:pure-exc}
    It is possible   via $\mcoww$ to transform state $\rho \coloneqq (0, 1)$ into any state.
\end{fact}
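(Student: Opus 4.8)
The plan is to treat the target $\sigma=(1-p^{\mathrm{out}},p^{\mathrm{out}})$ in two regimes, according to whether it is more or less excited than the initial Gibbs state $\tau_{E_0}=(1-p^\beta,p^\beta)$, and to exhibit an explicit protocol in each regime while checking that no work is spent. This Fact is the constructive counterpart of the no-go Theorems \ref{thm:mainappendix}, \ref{thm:mainrev} and \ref{thm:main3}: those exclude generic initial states from crossing the Gibbs point, and here the point of starting from $(0,1)$ is precisely that the crossing can be done ``for free''.

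First, if $p^{\mathrm{out}}\in[p^\beta,1]$, I would apply a single partial thermalization (\textbf{CO2}) directly to $\rho=(0,1)$ at the fixed Hamiltonian $H(E_0)$. Admixing $\tau_{E_0}$ with weight $\lambda$ sends the excited population to $1-\lambda(1-p^\beta)$, which decreases monotonically from $1$ to $p^\beta$ as $\lambda$ runs over $[0,1]$, so the appropriate $\lambda$ reaches the desired $p^{\mathrm{out}}$ exactly. This uses no level transformation, so $W\equiv 0$ and the transition is trivially in $\mcoww$.

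Second, if $p^{\mathrm{out}}\in[0,p^\beta)$, I would first manufacture the pure ground state $(1,0)$ and then admix. The sequence is: lower the upper level from $E_0$ to $0$ by \textbf{CO3} (the state stays $(0,1)$); apply the bit flip \textbf{CO5} at the degenerate point $E=0$, taking $(0,1)\mapsto(1,0)$; raise the level back from $0$ to $E_0$ by \textbf{CO3} (the state stays $(1,0)$). From $(1,0)$ a single partial thermalization of weight $\lambda$ gives excited population $\lambda p^\beta$, sweeping $[0,p^\beta]$ as $\lambda$ runs over $[0,1]$, which hits $p^{\mathrm{out}}$ exactly. The final Hamiltonian is again $H(E_0)$, as $\mcoww$ demands.

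The only point requiring care is the work bookkeeping of the lower--swap--raise step, and it turns out to be favourable rather than obstructive. When the occupied level is lowered, $\Delta E=-E_0$ and the excited level is occupied with probability one, so deterministically $W=-\Delta E=E_0\ge 0$; the swap at $E=0$ costs nothing; and when the level is raised back the state is $(1,0)$, so the excited level is empty and deterministically $W=0$. Thus in every branch the total work is a deterministic non-negative number, whence no work is ever spent and the work condition defining $\mcoww$ holds with certainty. Taking the union of the two regimes covers all $p^{\mathrm{out}}\in[0,1]$, so $(0,1)$ can indeed be transformed into an arbitrary diagonal qubit state.
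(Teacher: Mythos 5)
Your proposal is correct and follows essentially the same route as the paper: partial thermalization alone reaches any target at or above the Gibbs population, and the lower--swap--raise sequence produces the ground state at no work cost (in fact with a deterministic gain of $E_0$), after which partial thermalization covers the remaining targets. Your explicit computation of the admixing weights $\lambda$ and the step-by-step work bookkeeping merely spell out what the paper's proof of Fact \ref{fact:pure-exc} states in words.
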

Indeed, as in the previous remark, any state more excited than the Gibbs state can be obtained simply by partial thermalization. Moreover, the ground state can be reached for free (i.e., without any work expenditure) by lowering the energy level to zero, performing a swap, and then raising the empty level back to \(E_0\). Starting from the ground state, any state less excited than the Gibbs state can subsequently be obtained via partial thermalization.

Note that this does not contradict our previous no-go results, since in the present situation part~(III) of the path is absent. Indeed, part~(III) necessarily appears only when thermalization is performed after changing the Hamiltonian, and in the case \(p_{\mathrm{IN}} = 1\) this step can be avoided.


\blk



\section{Lemmas from probability theory}

In the final appendix, we present simple consequences of the Markov and Chebyshev inequalities. 
\begin{lemma}\label{lem:reverseMarkov}
	Let $Y$ be a random variable taking values in $[0,1]$. 
    Then, for any $\alpha>0$, 
	\begin{equation}
	\label{eq:Markov1}
		P(Y<\alpha) \geq 1 - \frac{\E Y}{\alpha}
	\end{equation}
	and
	\begin{equation}
	\label{eq:Markov2}
		P(Y>\alpha) \geq 1 - \frac{1 - \E Y}{1-\alpha}.
	\end{equation}
\end{lemma}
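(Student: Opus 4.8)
The plan is to obtain both bounds as immediate consequences of the ordinary Markov inequality, applied first to $Y$ itself and then to the complementary variable $1-Y$. Recall that for any nonnegative random variable $X$ and any threshold $t>0$ one has $P(X\geq t)\leq \mathbf{E}X/t$; since $Y$ takes values in $[0,1]$, both $Y$ and $1-Y$ are nonnegative, so Markov applies to each.

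For \eqref{eq:Markov1}, I would note that $Y\geq 0$, so for $a>0$ Markov gives $P(Y\geq a)\leq \mathbf{E}Y/a$. Passing to the complementary event yields $P(Y<a)=1-P(Y\geq a)\geq 1-\mathbf{E}Y/a$, which is exactly \eqref{eq:Markov1}.

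For \eqref{eq:Markov2}, I would set $Z\coloneqq 1-Y$. Because $Y\in[0,1]$, the variable $Z$ again lies in $[0,1]$ and is nonnegative, with $\mathbf{E}Z=1-\mathbf{E}Y$. Applying Markov to $Z$ at threshold $1-a$ (valid for $a<1$, so that $1-a>0$) gives $P(Z\geq 1-a)\leq \mathbf{E}Z/(1-a)=(1-\mathbf{E}Y)/(1-a)$. The event $\{Z\geq 1-a\}$ is precisely $\{Y\leq a\}$, so $P(Y\leq a)\leq (1-\mathbf{E}Y)/(1-a)$, and complementation gives $P(Y>a)=1-P(Y\leq a)\geq 1-(1-\mathbf{E}Y)/(1-a)$, i.e.\ \eqref{eq:Markov2}.

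There is essentially no real obstacle here: the statement is elementary, and the only points requiring a moment's care are the complementary substitution $Z=1-Y$ and the mild nondegeneracy requirements $a>0$ in the first bound and $a<1$ in the second. Both are comfortably satisfied wherever the lemma is invoked, since there $a=(p^{\mathrm{out}}+p^\beta)/2$, which is strictly positive and, because $p^{\mathrm{out}},p^\beta\leq 1/2$, strictly below $1$.
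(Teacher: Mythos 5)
Your proposal is correct and follows essentially the same route as the paper: Markov's inequality applied directly to $Y$ for the first bound, and to the complementary variable $1-Y$ (the paper's $X$, your $Z$) with the threshold substitution $b=1-a$ for the second. The only cosmetic difference is that you apply Markov at threshold $1-a$ directly rather than introducing $b$ and substituting afterwards.
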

\begin{proof}
    The first inequality is a direct application of Markov's inequality, that is, 
    $P(Y \geq \alpha) \leq \E Y / \alpha$. 
	For the second one, let us define the random variable $X = 1 - Y$. 
    It is easy to see that $X$ also takes values in $[0,1]$ and that $\E X = 1 - \E Y$. 
    Applying Markov's inequality to  $X$, we obtain
	\begin{align}
        P(Y>\alpha)=P(X<1-\alpha)
        =1-P(X\geq 1-\alpha)
        \geq 1 - \frac{\E X}{1-\alpha}
		=1 - \frac{1 - \E Y}{1-\alpha}.
	\end{align}
	The proof is therefore completed.
\end{proof}

\end{document}